%% file: main.tex
\documentclass[11pt,english]{article}
\usepackage{lmodern}

\usepackage[T1]{fontenc}
\usepackage[utf8]{inputenc}
\usepackage{float}
\usepackage{units}
\usepackage{dsfont}
\usepackage{amsmath}
\usepackage{amsthm}
\usepackage{amssymb}
\usepackage{geometry}
\makeatletter
\numberwithin{figure}{section}
\numberwithin{equation}{section}

\@ifundefined{date}{}{\date{}}
\usepackage{float,psfrag,epsfig,color,url,hyperref}
\usepackage{algorithm,algorithmic}
\usepackage{graphicx,relsize}
\usepackage{amssymb,amsfonts,amsmath,amsthm,amscd,dsfont,mathrsfs,mathtools,microtype,nicefrac,pifont}
\usepackage{upgreek}
\usepackage[dvipsnames]{xcolor}
\usepackage{epstopdf,bbm,enumitem}
\usepackage{dsfont,tikz}
\usepackage[mathscr]{euscript}
\usepackage[toc,page]{appendix}
\hypersetup{
  colorlinks,
  linkcolor={red!50!black},
  citecolor={blue!50!black},
  urlcolor={blue!80!black}
}
\usepackage{etoolbox}
\usepackage{mdframed}
\surroundwithmdframed[
  linewidth=0.5pt, innerleftmargin=8pt, innerrightmargin=8pt,
  innertopmargin=5pt, innerbottommargin=5pt,
  skipabove=\baselineskip, skipbelow=\baselineskip, nobreak=true
]{lyxalgorithm}
\patchcmd{\thebibliography}
  {\settowidth}
  {\setlength{\itemsep}{0pt plus 0.1pt}\settowidth}
  {}{}
\apptocmd{\thebibliography}
  {\small}
  {}{}
\allowdisplaybreaks
\usepackage{custom}
\makeatletter
\renewcommand{\paragraph}{%
  \@startsection{paragraph}{4}%
  {\z@}{1.25ex \@plus 1ex \@minus .2ex}{-1em}%
  {\normalfont\normalsize\bfseries}%
}
\makeatother
\makeatother

\theoremstyle{plain}
\newtheorem{thm}{\protect\theoremname}[section]
\newtheorem{lem}[thm]{\protect\lemmaname}
\theoremstyle{remark}
\newtheorem{rem}[thm]{\protect\remarkname}
\theoremstyle{plain}
\newtheorem{cor}[thm]{\protect\corollaryname}
\newtheorem{prop}[thm]{\protect\propositionname}
\newtheorem{lyxalgorithm}[thm]{\protect\algorithmname}
\usepackage{babel}
\providecommand{\algorithmname}{Algorithm}
\providecommand{\corollaryname}{Corollary}
\providecommand{\lemmaname}{Lemma}
\providecommand{\propositionname}{Proposition}
\providecommand{\remarkname}{Remark}
\providecommand{\theoremname}{Theorem}

\begin{document}
\input{math-macros.tex}

\title{Thin-Shell Stability of Gaussian Cooling:\\
Logconcave Sampling with Sesteric Complexity from a Cold Start \author{Yunbum Kook\\ University of Michigan\\  \texttt{ybkook@umich.edu} \and Santosh S. Vempala\\ Georgia Tech\\ \texttt{vempala@gatech.edu}}}
\maketitle
\begin{abstract}
We show that logconcave probability measures along the Gaussian cooling
path have thin-shell stability, generalizing the thin-shell theorem.
This result leads to improved complexity for the fundamental problem
of sampling an arbitrary logconcave distribution from a cold start.
For (near-)isotropic logconcave distributions, the complexity is nearly
$n^{2.5}$, improving the previous bound of $n^{2.75}$, and matching
the complexity of the abstract Speedy walk. 
\end{abstract}

\section{Introduction}

The study of the complexity of logconcave sampling in high dimension
has led to powerful algorithmic techniques and beautiful geometric
phenomena. It is known that an arbitrary logconcave distribution in
$\Rn$ with density proportional to $e^{-V(x)}$ for some convex function
$V$ can be sampled from a \emph{warm }start\footnote{Roughly speaking, a warm start means that the initial distribution
is already close to the target distribution.} using $\Otilde(n^{2}\cpi\polylog\frac{1}{\veps})$ evaluations of
$V$ \cite{KV26unified}. This is a tight bound for known methods
and conjectured to be the best possible for any algorithm with only
oracle query access to the target density. Here, the Poincar\'e constant
$\cpi$ is conjectured to be a universal constant for isotropic logconcave
distributions (by the celebrated Kannan--Lov\'asz--Simonovits (KLS)
conjecture \cite{KLS95isop}) and known to be bounded by $O(\log n)$
\cite{Klartag23log}.

In this paper, we focus on the problem of logconcave sampling from
a \emph{cold }start (i.e., no warm start), given access to an evaluation
oracle for $V$. For the methods that achieve the best known bounds
from a warm start, such as the $\bw$ \cite{KLS97random} and $\ino$
\cite{KVZ26INO,KV25sampling}, initiating them from an arbitrary point
could lead to very high complexity. This is due to the fact that for
points near the boundary, these methods could make a large number
of attempts (queries) before making a single proper step; moreover,
this is unavoidable. $\har$ has also been shown to have similar complexity
from a warm start \cite{LV06hit,KV26HAR}, and while it has the advantage
of polynomial complexity from an arbitrary interior starting point,
the known analysis incurs an additional factor of dimension, i.e.,
with complexity scaling as $n^{3}$.

Sampling is a key ingredient in algorithms for volume computation
\cite{DFK91random,LS93random,KLS97random,LV06simulated,CV18Gaussian},
integration, and rounding \cite{LV06fast,JLLV26reducing,KV25sampling}.
To give complete guarantees for the latter problems, researchers have
addressed the challenge of generating a warm start efficiently, with
Kannan, Lov\'asz, and Simonovits \cite{KLS97random} giving an algorithm
for convex bodies with complexity roughly $n^{5}$. The main idea
is annealing, i.e., a sequence of distributions, each one providing
a warm start for the next. The starting distribution is easy to sample,
and the final distribution is the target. Lov\'asz and Vempala improved
this complexity to $n^{4}$ \cite{LV06simulated} and extended the
approach to logconcave distributions \cite{LV06fast}. In fact, a
key idea was to use a sequence of logconcave distributions in the
annealing schedule, which provably achieves an improvement over using
a sequence of uniform distributions even for the case of convex bodies
\cite{DFK91random,KLS97random}. 

A decade later, Cousins and Vempala \cite{CV18Gaussian} introduced
$\gc$, an accelerated adaptation of the previous annealing schemes,
and combined it with the $\bw:\msf R_{\infty}\to\tv$\footnote{Recall the $q$-R\'enyi divergence: $\msf R_{q}(\mu\mmid\nu):=\frac{q}{q-1}\log\,\norm{\nicefrac{\D\mu}{\D\nu}}_{L^{q}(\nu)}$
for $q>1$. It is monotonic in $q$, and satisfies $2\,\norm{\cdot}^{2}_{\tv}\leq\KL=\lim_{q\downarrow1}\msf R_{q}\leq\msf R_{2}=\log(1+\chi^{2})\leq\msf R_{\infty}=\log\esssup_{\nu}\nicefrac{\D\mu}{\D\nu}$.
The notation $\msf R_{\infty}\to\tv$ means that the sampler requires
an initial distribution with $\msf R_{\infty}$-warmness and its output
distribution is close to the target in total variation (TV) distance.} to obtain $n^{3}$ complexity for well-rounded distributions\footnote{A distribution is \emph{well-rounded} if a level set of constant measure
contains a ball of constant radius, and the trace of its covariance
is $\Otilde(n)$. Any near-isotropic logconcave distribution, with
covariance close to identity, is well-rounded.}. In particular, this implies that for an arbitrary near-isotropic
distribution, sampling from a cold start has cubic complexity. The
main idea in $\gc$ was an adaptive schedule, which efficiently accelerated
the rate of progress towards the target, and this cubic complexity
remained the state-of-the-art till 2025. We will discuss this in more
detail presently. 

The work of Lee and Vempala \cite{LV18logSoblev,LV24eldan} suggests
that further improvement might be possible. They showed that the\emph{
$\sw$}, an abstract process introduced by \cite{KLS97random} and
used only for analysis, has iteration complexity $n^{2.5}$ (not query
complexity) and that this bound is tight. The $\sw$ counts only proper
steps of the $\bw$ and not wasted steps, which are needed by any
known implementation; whether this $n^{2.5}$ iteration complexity
can be matched by an actual algorithm in oracle-query complexity has
remained an elusive question.

For uniform sampling, Kook, Vempala, and Zhang \cite{KVZ24INO,KVZ26INO}
developed the $\ino$ algorithm (also known as the proximal sampler
\cite{LST21structured}), using algorithmic diffusion and obtaining
a $\msf R_{\infty}\to\msf R_{q}$ guarantee. Beyond its stronger output
guarantee, the algorithmic connection to heat flow makes tools from
Markov semigroup theory, functional inequalities, and convex geometry
available for its analysis. Subsequently, Kook and Zhang \cite{KZ25Renyi}
refined the analysis of $\gc$ to obtain $\msf R_{\infty}$-warmness
with the same cubic complexity, but warm-start generation was still
a bottleneck toward subcubic complexity from a cold start. $\ino$
and the annealing algorithm were extended to general logconcave distributions
through the exponential lifting technique \cite{KV25sampling}, matching
the complexity of uniform sampling. 

These ideas were seminal for the first improvement in cold start sampling
since \cite{CV18Gaussian}: using a more careful annealing schedule,
refining the bound on the log-Sobolev constant of logconcave probability
measures with compact support, and relying on the stronger and faster
end-to-end sampling guarantee in $q$-R\'enyi divergence, \cite{KV25faster,KV26zeroLC}
improved the complexity of cold start sampling to $n^{2.75}$, the
first subcubic bound. They also gave an example showing a lower bound
of $n^{8/3}$ for their general method. The motivation for the present
paper is to understand the best possible complexity of sampling from
a cold start. 

Our main algorithmic result here is a refinement of $\gc$, which,
together with $\ino$, achieves the complexity of nearly $n^{2.5}$
for arbitrary near-isotropic logconcave distributions. In other words,
it matches the known iteration complexity of the abstract $\sw$ without
any overhead. The core idea of the proof is a new stability phenomenon:
while the Poincar\'e constant of logconcave distributions along the
$\gc$ path can vary substantially (and thus using it, as in previous
work, leads to higher complexity), the \emph{thin-shell constant }is
stable along this path, and this suffices! We will discuss this in
more detail shortly, after presenting the main results precisely. 

\subsection{Results\label{sec:setup}}

\paragraph{Setup.}

Let $V:\R^{n}\to\R\cup\{\infty\}$ be a convex function such that
$\D\pi(x)\propto e^{-V(x)}\,\D x$ is a full-dimensional probability
measure in $\Rn$. Following the convention in \cite{KV25sampling},
let us denote its ground set by $\msf L_{g}=\{x\in\R^{n}:V(x)-\inf V\le10n\}$,
and standardize the setup by assuming $B(0,1)\subset\msf L_{g}$ (via
translation and scaling). For the uniform distribution on a convex
body $\K\subset\Rn$, this corresponds to $B(0,1)\subseteq\K$. We
assume access to its zeroth-order oracle (i.e., evaluation oracle
for $V$ telling us $V(x)$ for a queried point $x$).

We denote the barycenter by $m:=\E_{\pi}X$, the covariance matrix
by $\Sigma:=\cov_{\pi}X=\cov\pi$, and its operator norm by $\Lambda:=\norm{\cov\pi}_{\op}$.
Since the barycenter $m$ is not necessarily at the origin, we also
define $L:=\norm{\Sigma+mm^{\T}}_{\op}$ and $R^{2}=\E_{\pi}[\norm{\cdot}^{2}]$.
Note that $\Lambda\leq L,\tr\Sigma\leq R^{2}$ in general. For $t>0$,
we define a Gaussian tilt of $\pi$ as $\pi_{t}(\D x)\propto\exp(-\frac{\norm x^{2}}{2t})\,\pi(\D x)$,
which can be thought of as the continuous version of $\gc$. The stability
proofs themselves use only logconcavity and these parameters, whereas
the ground-set normalization is needed only for the sampling results.

Before we proceed, we define the \emph{anisotropic thin-shell constant}
$\cQ_{n}$ of logconcave measures as the smallest number such that
every isotropic logconcave distribution $\nu$ in $\Rn$ and every
symmetric matrix $M$ satisfy
\begin{equation}
\var_{\nu}(Z^{\T}MZ)\le\cQ_{n}\tr(M^{2})\,.\label{eq:QF}
\end{equation}
The ordinary thin-shell conjecture \cite{ABP03central,BK03central},
which was proven by Klartag and Lehec \cite{KL25thin} in 2025, posits
$\msf Q_{n}=O(1)$ when $M=I_{n}$. Recently, Chen and Klartag showed
that $\msf Q_{n}=8$ is sharp when $M=I_{n}$ \cite{CK26thinshell}.
For symmetric $M$, it holds that $\msf Q_{n}\lesssim\log n$ due
to Klartag's KLS bound~\cite{Klartag23log}. 

\paragraph{Result 1: Thin-shell stability along Gaussian tilts.}

Consider the family of Gaussian tilts $\{\pi_{t}\}_{t\geq0}$ that
interpolates between $\pi_{0}=\delta_{0}$ and $\pi_{\infty}=\pi$.
The first result shows that the variance of $\norm X^{2}$ can be
bounded in terms of the barycenter and covariance matrix of the base
measure $\pi$.
\begin{thm}
[Stability of thin-shell estimates]\label{thm:main-stability} Let
$\pi$ be any logconcave probability measure in $\Rn$ with $m=\E_{\pi}X$
and $\Sigma=\cov\pi$, and $\pi_{t}$ be its Gaussian tilt for $t>0$.
Then,
\[
\sup_{t>0}\var_{\pi_{t}}(\norm X^{2})\lesssim(1+\cQ_{n})^{2}\,\bbrack{\tr(\Sigma^{2})+\norm m^{2}\,\bpar{\tr(\Sigma^{2})+m^{\T}\Sigma m}^{1/2}}=\Otilde(R^{2}L\wedge R^{3}\Lambda^{1/2})\,.
\]
\end{thm}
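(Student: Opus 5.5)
The plan is to reduce the statement to moment bounds for each tilt $\pi_t$ via the anisotropic thin-shell constant, and then to control those moments uniformly in $t$ by an ODE/bootstrap argument along the Gaussian cooling path.

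\emph{Step 1 (thin-shell at fixed $t$).} Fix $t>0$. The tilt $\pi_t$ is logconcave. Write $m_t:=\E_{\pi_t}X$ and $\Sigma_t:=\cov\pi_t$, and represent $X=m_t+\Sigma_t^{1/2}Z$ with $Z$ isotropic logconcave. Then
\[
\norm X^{2}=Z^{\T}\Sigma_tZ+2\,m_t^{\T}\Sigma_t^{1/2}Z+\norm{m_t}^{2}.
\]
Applying \eqref{eq:QF} with $M=\Sigma_t$ to the quadratic part, and using the isotropy of $Z$ for the linear part, gives
\[
\var_{\pi_t}(\norm X^2)\lesssim \cQ_n\tr(\Sigma_t^2)+m_t^{\T}\Sigma_tm_t .
\]
It therefore suffices to bound $A_t:=\tr(\Sigma_t^2)$ and $B_t:=m_t^{\T}\Sigma_tm_t$ uniformly in $t$, in terms of $\Sigma$ and $m$.

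\emph{Step 2 (evolution along the path).} Since $\pi_t\propto e^{-\norm x^2/2t}\pi$, every test function $g$ satisfies
\[
\partial_t\E_{\pi_t}g=\frac{1}{2t^2}\cov_{\pi_t}\bpar{g,\norm X^2}.
\]
For $g=\norm x^2$ the right-hand side is nonnegative, so $\E_{\pi_t}\norm X^2\le R^2$ for all $t$. For $A_t$ and $B_t$ I will integrate from $t$ up to $\infty$, where $\pi_\infty=\pi$ gives $A_\infty=\tr(\Sigma^2)$ and $B_\infty=m^{\T}\Sigma m$. The derivatives of $m_t$, $\Sigma_t$ and of $(\theta^{\T}X)^2$ are covariances with $\norm X^2$. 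By Cauchy--Schwarz these are at most $\sqrt{\var_{\pi_s}(\text{moment})}\,\sqrt{\var_{\pi_s}(\norm X^2)}$. I will bound the first factor using logconcavity of $\pi_s$: reverse H\"older gives $\var_{\pi_s}((\theta^{\T}X)^2)\lesssim(\theta^{\T}\Sigma_s\theta)^2+(\theta^{\T}\Sigma_s\theta)(\theta^{\T}m_s)^2$. I will bound the second factor by $\Psi:=\sup_s\var_{\pi_s}(\norm X^2)$.

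\emph{Step 3 (bootstrap — main obstacle).} Combining Steps 1--2 gives a closed system of inequalities of the form
\[
\Psi\lesssim\cQ_n\sup_t A_t+\sup_tB_t,\qquad
\sup_tA_t\le\tr(\Sigma^2)+\int_t^\infty\frac{\sqrt{\Psi}\,(\cdots)}{s^2}\,\D s .
\]
The difficulty is that $\int_t^\infty s^{-2}\,\D s=1/t$ blows up as $t\to0$. For small $t$ I will instead use Brascamp--Lieb: $\pi_t$ is $1/t$-strongly logconcave, so $\Sigma_t\preceq tI$ and $\tr(\Sigma_t)\le nt\wedge R^2$. This makes the integrand's moment factors scale like $s$, so the $1/t$ is compensated. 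I would split at the scale $t\asymp\Lambda$ (or $\asymp R^2/n$). Below it, I use Brascamp--Lieb directly; above it, I integrate the ODE from $\infty$. I then solve the resulting quadratic inequality of the form
\[
\Psi\lesssim(1+\cQ_n)\Bigl[\tr(\Sigma^2)+\norm m^2\,(\cdots)^{1/2}\Bigr]+(1+\cQ_n)\sqrt{\Psi}\,(\cdots),
\]
which produces the squared factor $(1+\cQ_n)^2$ and the mixed term $\norm m^2\bpar{\tr(\Sigma^2)+m^{\T}\Sigma m}^{1/2}$. I expect the cross terms involving $m_t$ to be the delicate part. To handle them I would first track $\norm{m_t}$, showing $\norm{m_t}\lesssim\norm m+\sqrt{\Lambda}$-type bounds, before closing the bootstrap; if the two-regime split at $t\asymp\Lambda$ does not close, I would try splitting at $t\asymp R^2/n$ instead.

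\emph{Step 4 (simplification).} The final $\Otilde$ form follows from $\tr(\Sigma^2)\le\Lambda\tr\Sigma\le R^2L$, $m^{\T}\Sigma m\le\Lambda\norm m^2$, and $\norm m^2\le R^2$. For the $R^2L$ form, use $\norm m^2\le L$ and $(\tr(\Sigma^2)+m^{\T}\Sigma m)^{1/2}\le\sqrt{2}R^2$. For the $R^3\Lambda^{1/2}$ form, use $(\tr(\Sigma^2)+m^{\T}\Sigma m)^{1/2}\le\sqrt{2}R\Lambda^{1/2}$. Finally, $\cQ_n\lesssim\log n$.
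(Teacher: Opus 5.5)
Your Steps 1, 2 and 4 are correct. Step 1 is exactly the paper's decomposition: after $Z=\Sigma^{-1/2}(X-m)$, your $\Sigma_t,m_t$ are the paper's $M_s,y_s$ with $s=1/(2t)$.

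\textbf{The gap is Step 3, and it is not a matter of choosing the split point.} Bounding each covariance by $\sqrt{\var(\cdot)}\,\sqrt{\Psi}$, with the \emph{supremum} $\Psi$, gives only per-direction control, $|\partial_t\log(\theta^{\T}\Sigma_t\theta)|\lesssim\sqrt{\Psi}/t^2$. Integrating down from $t=\infty$ therefore keeps the spectrum stable only for $t\gtrsim\sqrt{\Psi}\approx\sqrt n$ when $\pi$ is isotropic. It does not reach $t\asymp\Lambda$ or $t\asymp R^2/n$, which are both $1$ in that case. In the window $1\lesssim t\lesssim\sqrt{n}$, eigenvalue-by-eigenvalue control is impossible. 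Brascamp--Lieb covers $t\lesssim1$ and your integration covers $t\gtrsim\sqrt{\Psi}$, so the failure of operator-norm stability (Bizeul's example, cited in the paper) can only occur in this window.

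Brascamp--Lieb alone is too weak there: it gives only $\tr(\Sigma_t^2)\le t\tr\Sigma_t\le tn$, which is $n^{3/2}$ at the top of the window. Differentiating $\tr(\Sigma_t^2)$ directly does not help either. You get $|\partial_t\tr(\Sigma_t^2)|\le t^{-2}\sqrt{\cQ_n\tr(\Sigma_t^4)\,\Psi}$. With $\norm{\Sigma_t}_{\op}\le t$, the increment of $\tr(\Sigma_t^2)^{1/2}$ across the window is only bounded by $\sqrt{\cQ_n\Psi}\log n$. This leads to $\sqrt{\Psi}\lesssim\sqrt{\cQ_n n}+\cQ_n\log n\,\sqrt{\Psi}$, which yields nothing, so the quadratic inequality you anticipate never appears.

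Your auxiliary claim $\norm{m_t}\lesssim\norm m+\sqrt{\Lambda}$ is also false. Let $\pi$ be the law of $(E_1-1,\dots,E_n-1)$ with i.i.d.\ $E_i\sim\mathrm{Exp}(1)$, which is isotropic with $m=0$. Each coordinate of $\pi_1$ is $\msf N(-1,1)$ conditioned on $[-1,\infty)$, with mean $\sqrt{2/\pi}-1$, so $\norm{m_1}^2\asymp n$. The correct scale in the centered case is $\norm{m_t}^2\lesssim\cQ_n\tr(\Sigma^2)/t$. The moving-mean term $m_t^{\T}\Sigma_tm_t$ is $O(\cQ_n\tr(\Sigma^2))$ only after this is paired with $\Sigma_t\preceq tI$.

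\textbf{What is missing is the paper's collective spectral argument.} It works in $s=1/(2t)$, with $q=\norm{\Sigma^{1/2}z+m}^2$ and $\lambda_i(s)$ the eigenvalues of $M_s$. It has four ingredients.
\begin{enumerate}
\item The exact energy identity $\Delta_s:=\int_0^s\var_{\nu_r}q\,\D r=\E_0q-\E_sq=N_s-P_s+\norm m^2-\norm{y_s}^2$, where $N_s$ and $P_s$ are the total decrease and increase of the eigenvalues. It is the \emph{time-integrated} variance, not its supremum, that gets controlled.
\item An $\ell_2$ bound on the whole log-spectrum, $\sum_i(\partial_s\log\lambda_i)^2\le\cQ_n\var_{\nu_s}q$. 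This follows from $\partial_s\log\lambda_i=-\cov_s(Z_i^2,q)$ by applying the definition of $\cQ_n$ to $\sum_i\alpha_iZ_i^2$ with $\norm{\alpha}=1$. It replaces your per-direction ($\ell_\infty$) bound and avoids its factor-$\sqrt n$ loss.
\item Cauchy--Schwarz in time, $\sum_i\log^2(\lambda_i(s)/\lambda_i(0))\le\cQ_n s\Delta_s$, combined with $N_s\le\tr(\Sigma^2)^{1/2}\bigl(\sum_i\log^2(\lambda_i(s)/\lambda_i(0))\bigr)^{1/2}$. In the centered case this closes as $P_s+\norm{y_s}^2\le N_s\le s\cQ_n\tr(\Sigma^2)$.
\item Brascamp--Lieb applied only to the eigenvalues that increased, $\tr(M_s^2)\le\tr(\Sigma^2)+P_s/s$, together with $y_s^{\T}M_sy_s\le\norm{y_s}^2/(2s)$.
\end{enumerate}

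With a shift, this leaves a singular term of order $(1+\cQ_n)\norm m^2/s$. The paper's counterpart of your integrate-from-$\infty$ step is used only to handle that term. It propagates the endpoint bound on $v_s=\var_{\nu_s}q$ via Carbery--Wright, $|v_s'|\lesssim v_s^{3/2}$, for $s\lesssim\bar v^{-1/2}$, where $\bar v=(\cQ_n+4)(\tr(\Sigma^2)+m^{\T}\Sigma m)$. In your variable this is $t\gtrsim\sqrt{\bar v}$, not $t\gtrsim\Lambda$. Matching the two regimes at that point produces the term $\norm m^2(\tr(\Sigma^2)+m^{\T}\Sigma m)^{1/2}$.
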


In particular, when $\pi$ is isotropic, this result implies that
the thin-shell constant remains bounded: $\frac{1}{n}\var_{\pi_{t}}(\norm X^{2})\lesssim(1+\msf Q_{n})^{2}$.
Note that this extends the thin-shell bound \cite{KL25thin,CK26thinshell}
to \emph{all} $t$ (up to $\log$ factors). Its stronger version would
be stability of the covariance operator norm, but this is false in
general \cite{Bizeul26logsobolev,KV26zeroLC}; see \S\ref{subsec:Thin-shell-stability-under}
for more details.

\paragraph{Result 2: Faster warm-start generation for logconcave distributions.}

We start with warm-start generation for the uniform distribution on
a convex body. Using the thin-shell stability result, our algorithm
follows the $\gc$ scheme of \cite{KV26zeroLC} with more aggressive
updates to $\sigma^{2}$. Roughly speaking, it updates $\sigma^{2}$
by a factor of $(1+\sigma^{2}/\msf V^{1/2})$ in each phase, where
$\msf V$ is the thin-shell bound on $\sup_{t>0}\var_{\pi_{t}}(\norm X^{2})$.
Using $\ino$ for truncated Gaussians, whose query complexity is $n^{2}\sigma^{2}$
from an $O(1)$-warm start in $\msf R_{q}$, doubling a given $\sigma^{2}$
requires $n^{2}\sigma^{2}\times\msf V^{1/2}/\sigma^{2}=n^{2}\msf V^{1/2}$
queries. Since there are logarithmically many doublings, the total
query complexity for warm-start generation is $\Otilde(n^{2}\msf V^{1/2})$.
This is simply $\Otilde(n^{2.5})$ for isotropic uniform distributions.
Hence, we ``algorithmically'' achieve the iteration complexity of
the $\sw$ in \cite{LV24eldan}.
\begin{thm}
[Faster warm start for uniform sampling]\label{thm:warm-uniform}
Let $\K\subset\R^{n}$ be a convex body with $B(0,1)\subset\K$ and
$\pi$ uniform on $\K$. Algorithm~\ref{alg:uniform-annealing} outputs
a law $\nu$ satisfying $\msf R_{q}(\nu\mmid\pi)\le1$ for $q\geq1$,
using $\Otilde(q^{1/2}n^{2}\msf V^{1/2})=\Otilde\bpar{q^{1/2}n^{2}\min\{RL^{1/2},R^{3/2}\Lambda^{1/4}\}}$
expected membership queries. When $\pi$ is nearly isotropic (i.e.,
$\cov\pi\approx I_{n}$), the total complexity is $\Otilde(q^{1/2}n^{2.5})$.
\end{thm}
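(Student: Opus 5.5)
We follow the Gaussian cooling template of \cite{KV26zeroLC}, replacing its Poincar\'e-based step size with one based on Theorem~\ref{thm:main-stability}. Set $\msf V:=\sup_{t>0}\var_{\pi_{t}}(\norm X^{2})$, which Theorem~\ref{thm:main-stability} bounds by $\Otilde(R^{2}L\wedge R^{3}\Lambda^{1/2})$; for uniform $\pi$ on $\K$, $\pi_{\sigma^{2}}$ is a Gaussian truncated to $\K$. The algorithm has three stages. First, start at $\sigma_{0}^{2}\asymp1/n$, where $\pi_{\sigma_{0}^{2}}$ is close in $\msf R_{\infty}$ to $\mc N(0,\sigma_{0}^{2}I_{n})$ because $B(0,1)\subset\K$, so it is trivially samplable. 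Second, run phases $\sigma_{i+1}^{2}=\sigma_{i}^{2}(1+\sigma_{i}^{2}/\msf V^{1/2})$ until $\sigma^{2}\gtrsim R^{2}$. Third, switch to $\pi$ itself, since there $\msf R_{q}(\pi_{\sigma^{2}}\mmid\pi)=O(1)$. In each phase, $\ino$ for the truncated Gaussian is run from the previous output, which must be an $O(1)$-warm start in $\msf R_{q}$ (or $\msf R_{2q}$), and it returns an $\msf R_{q}$-close sample. The cost of each phase is $\Otilde(q^{1/2}n^{2}\sigma^{2})$ queries, using the known $\ino$ bound for truncated Gaussians, which is \cite{KV26zeroLC}'s result and the source of the $q^{1/2}$.

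\textbf{Step 1: warmness between consecutive phases.} Let $\mu=\pi_{s}$ and $\nu=\pi_{t}$ with $1/s-1/t=:\beta$. Then
\[
\frac{\D\mu}{\D\nu}\propto e^{-\beta\norm x^{2}/2},
\]
so
\[
\msf R_{q}(\mu\mmid\nu)=\frac{q}{q-1}\log\frac{\E_{\nu}e^{-q\beta\norm X^{2}/2}\,^{1/q}}{\E_{\nu}e^{-\beta\norm X^{2}/2}}\,.
\]
We need this to be $O(1)$, and it reduces to controlling the log-Laplace transform $\Lambda(\lambda)=\log\E_{\nu}e^{-\lambda\norm X^{2}/2}$ over $\lambda\in[0,q\beta]$. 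Its second derivative is $\tfrac14\var_{\pi_{\tau}}(\norm X^{2})$ along the intermediate tilts $\pi_{\tau}$. Those tilts are again Gaussian tilts of $\pi$, so the stability theorem gives $\Lambda''\le\msf V/4$ uniformly. This is exactly where stability along the whole path is needed, not only at the endpoints. A second-order Taylor bound then gives $\msf R_{q}\lesssim q\beta^{2}\msf V$.

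With $\beta\approx(\sigma^{2}/\msf V^{1/2})/\sigma^{2}\cdot\frac{1}{\sigma^{2}}$, i.e.\ $\beta\approx\msf V^{-1/2}$, we get $\msf R_{q}=O(q)$. Rescaling the step by $q^{-1/2}$ makes it $O(1)$, and this is what produces the $q^{1/2}$ factor. A more careful version chains the $\msf R_{q}$ warmness through $\ino$'s $\msf R_{q}\to\msf R_{q}$ contraction, using the weak triangle inequality for R\'enyi divergences with doubled orders. That loses only constants over $O(\log)$ depth, provided the orders are managed per phase as in \cite{KV26zeroLC}.

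\textbf{Step 2: counting.} At level $\sigma^{2}$ each phase costs $n^{2}\sigma^{2}$. Doubling $\sigma^{2}$ takes about $q^{1/2}\msf V^{1/2}/\sigma^{2}$ phases, so each doubling costs $\Otilde(q^{1/2}n^{2}\msf V^{1/2})$. When $\sigma^{2}<\msf V^{1/2}$ the multiplicative step is small and this count applies. When $\sigma^{2}\ge\msf V^{1/2}$ the step is constant-factor and the phases are fewer still. There are $O(\log(nR))$ doublings from $1/n$ to $R^{2}$, so the total is $\Otilde(q^{1/2}n^{2}\msf V^{1/2})$.

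Substituting Theorem~\ref{thm:main-stability} gives the stated $\min\{RL^{1/2},R^{3/2}\Lambda^{1/4}\}$. In the near-isotropic case, with $m$ near $0$ after recentering, we have $R^{2}\asymp n$ and $L\asymp1$, which gives $n^{2.5}$. The final switch to $\pi$ uses the same Laplace argument with $\beta=1/\sigma^{2}\lesssim\msf V^{-1/2}$.

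\textbf{Main obstacle.} The delicate step is the failure-probability and expected-query accounting. $\ino$'s per-step rejection cost depends on the warmness in $\msf R_{q}$ of the input and on the truncated Gaussian being well-conditioned. We also need the $\ino$ complexity for $\pi_{\sigma^{2}}$ to be $n^{2}\sigma^{2}$ independent of the Poincar\'e constant, which may blow up along the path. We would take this from the prior truncated-Gaussian analysis, where the bound is via $\sigma^{2}$ rather than $\cpi$, and check that only thin-shell enters through Step~1.
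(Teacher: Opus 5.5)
Your Step 1 is essentially the paper's Lemma~\ref{lem:renyi-exact} and Corollary~\ref{cor:renyi-t}: a Taylor expansion of the log-partition function, whose second derivative is the thin-shell variance along the whole tilt path. The step $\alpha=\sigma^{2}/\sqrt{q\cV}$ is also the right one.

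\textbf{The gap is where you stop.} You run the schedule until $\sigma^{2}\gtrsim R^{2}$ and claim each of the $O(\log(nR))$ doublings costs $\Otilde(q^{1/2}n^{2}\cV^{1/2})$. That holds only while $\sigma^{2}\lesssim\sqrt{q\cV}$. Above that threshold a phase still costs $\Otilde(n^{2}\sigma^{2})$ queries (Proposition~\ref{prop:truncated-gaussian-relay}), so ``fewer phases'' does not mean cheaper.

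\begin{itemize}
\item If you cap the step at a constant factor, the doublings from $\sqrt{q\cV}$ up to $R^{2}$ cost about $\sum n^{2}\sigma^{2}\approx n^{2}R^{2}$.
\item With the uncapped update $\sigma^{2}\mapsto\sigma^{2}(1+\sigma^{2}/\sqrt{q\cV})$, the last phase can overshoot to $\sigma^{2}\approx R^{4}/\sqrt{q\cV}$, which is worse.
\end{itemize}

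For near-isotropic $\pi$ you have $R^{2}\asymp n$ but $\sqrt{\cV}=\Otilde(\sqrt{n})$. So your schedule costs at least $n^{3}$, no better than classical Gaussian cooling.

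The fix is already in your own Laplace argument. Tilting from $\pi$ itself (i.e., $s'=0$) gives $\msf R_{q}(\mu_{\sigma^{2}}\mmid\pi)\le q\cV/(8\sigma^{4})$. Hence you should stop at $\sigma^{2}_{\textup{last}}\asymp\sqrt{q\cV}$; you note $\beta=1/\sigma^{2}\lesssim\cV^{-1/2}$ suffices, but neither your algorithm nor your count uses it. Then the first index $J$ with $\sigma_{J}^{2}\ge\sqrt{q\cV}$ satisfies $\sigma_{J}^{2}<2\sqrt{q\cV}$, and every doubling really costs $\Otilde(q^{1/2}n^{2}\cV^{1/2})$. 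Removing the $n^{2}R^{2}$ endgame is exactly where $n^{2.5}$ comes from.

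\textbf{Two smaller issues.}

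First, you count $q^{1/2}$ twice. The truncated-Gaussian sampler costs only $\Otilde(n^{2}\sigma^{2}\log q)$ per phase. The $q^{1/2}$ comes solely from the $O(\sqrt{q\cV}/\sigma^{2})$ phases per doubling. A per-phase cost of $q^{1/2}n^{2}\sigma^{2}$ combined with your step size would give $q\,n^{2}\cV^{1/2}$.

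Second, warmness cannot be propagated by an $\msf R_{q}\to\msf R_{q}$ contraction plus the weak triangle inequality. That would force the R\'enyi order to double at every one of the $J$ phases, not over $O(\log)$ depth. The paper instead uses the hypercontractive $\msf R_{q}\to\msf R_{2q}$ guarantee of the proximal sampler. This requires $q\ge1+\log\tau$, which is arranged by running at $q_{\min}=2\vee\Theta(\log(en^{3}\cV))$ and invoking monotonicity. Then $\msf R_{2q}(\bar{\nu}_{i}\mmid\mu_{t_{i}})\le1/20$ and $\msf R_{2q}(\mu_{t_{i}}\mmid\mu_{t_{i+1}})\le1/4$ give a fresh bound $\msf R_{q}(\bar{\nu}_{i}\mmid\mu_{t_{i+1}})\le13/40$ at each phase, with no accumulation.

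The cap and failure-probability issue you flag is handled by three steps: coupling with the uncapped chain, a union bound over $\eta_{i}=1/(100J)$, and restarting on failure.
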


This improves the previously best complexity \cite{KV25faster,KV26zeroLC}.
It is conceivable that $n^{2.5}$ is the best possible complexity
for cold-start sampling in light of the $n^{2.5}$ lower bound for
the iteration complexity of the $\sw$ \cite{LV24eldan}. 

For general logconcave distributions, we use the exponential lift
\cite{KV25sampling}, streamline the tilted Gaussian annealing in
\cite{KV26zeroLC}, and accelerate it using thin-shell stability.
\begin{thm}
[Faster warm start for logconcave sampling]\label{thm:warm-main}
Let $V:\Rn\to\R\cup\{\infty\}$ be a convex function, and suppose
that $\pi\propto e^{-V}$ satisfies $B(0,1)\subseteq\msf L_{g}$.
Given access to an evaluation oracle for $V$, Algorithm~\ref{alg:lifted-annealing}
outputs a law $\nu$ satisfying $\msf R_{q}(\nu\mmid\pi)\le2$ for
$q\geq1$, using $\Otilde(n^{2.5}+q^{1/2}n^{2}\msf V^{1/2})$ expected
evaluation queries, which can be bounded as
\[
\Otilde\bpar{n^{2.5}+q^{1/2}n^{2}\min\{RL^{1/2},R^{3/2}\Lambda^{1/4}\}}\,.
\]
When $\pi$ is nearly isotropic (i.e., $\cov\pi\approx I_{n}$), the
total complexity is $\Otilde(q^{1/2}n^{2.5})$.
\end{thm}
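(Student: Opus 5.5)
The plan is to reduce Theorem~\ref{thm:warm-main} to the uniform case, Theorem~\ref{thm:warm-uniform}, through the exponential lift of \cite{KV25sampling}. Consider the lifted convex body
\[
\K':=\{(x,s)\in\R^{n}\times\R:\ V(x)-\inf V\le s\le 10n\}
\]
with the measure $\bar\pi(\D x,\D s)\propto e^{-s}\,\mathbf 1_{\K'}(x,s)\,\D x\,\D s$. Its $x$-marginal is $\pi$ restricted to $\msf L_{g}$, which is $e^{-\Omega(n)}$-close to $\pi$ in $\msf R_{q}$ for the relevant $q$. A membership query to $\K'$ costs one evaluation of $V$. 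Since $B(0,1)\subseteq\msf L_{g}$, a suitable translate and rescaling of $\K'$ in the $s$-coordinate contains a ball of constant radius. This is the normalization needed to start annealing.

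Algorithm~\ref{alg:lifted-annealing} then proceeds in two stages.

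\textbf{Stage 1} handles the $s$-direction. Following the streamlined tilted Gaussian annealing of \cite{KV26zeroLC}, I would anneal the linear tilt $e^{-\beta s}$ from $\beta\approx 0$ up to $\beta=1$, keeping an isotropic Gaussian factor in $x$ of variance $\sigma^{2}\lesssim 1$. Along this part the intermediate measures are lifts of well-rounded pieces with variance $O(1)$ in $x$ and $O(n)$ in $s$. Each phase multiplies $\beta$ by $1+1/\sqrt n$. Consecutive measures are then $O(1)$-warm in $\msf R_{q}$, by the standard logconcavity bound on the log-partition function along $\beta$. Each phase runs $\ino$ at cost $\Otilde(n^{2})$, and there are $\Otilde(\sqrt n)$ phases, for a total of $\Otilde(n^{2.5})$.

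\textbf{Stage 2} handles the $x$-direction. With $\beta=1$ fixed, I would run the accelerated Gaussian cooling in $x$ from Theorem~\ref{thm:warm-uniform}. The intermediate targets are $\bar\pi_{\sigma^{2}}\propto e^{-s-\norm x^{2}/(2\sigma^{2})}\mathbf 1_{\K'}$. Their $x$-marginals are exactly the Gaussian tilts $\pi_{\sigma^{2}}$ of $\pi|_{\msf L_g}$.

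The warmness step in Stage 2 works as follows. The $\msf R_{q}$ divergence between consecutive tilts is controlled by the variance of $\norm X^{2}$ under $\pi_{t}$. Indeed,
\[
\log Z(t)=\log\int e^{-\norm x^{2}/(2t)}\,\D\pi ,
\]
and in the variable $\tau=1/(2t)$ its second derivative equals $\var_{\pi_{t}}(\norm X^{2})$. Theorem~\ref{thm:main-stability}, applied to $\pi|_{\msf L_g}$, bounds this variance by $\msf V$ uniformly in $t$. Here $m$, $\Sigma$, $R$, $L$, $\Lambda$ change only by constants under the truncation, because $\pi(\msf L_{g}^{c})=e^{-\Omega(n)}$. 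A second-order expansion then gives $\msf R_{q}\lesssim q\,\delta\tau^{2}\,\msf V$. So steps with $\delta\tau\approx(q\msf V)^{-1/2}$, i.e.\ $\sigma^{2}\mapsto\sigma^{2}(1+\sigma^{2}/(q\msf V)^{1/2})$, keep warmness $O(1)$.

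The $s$-coordinate does not spoil this. Conditioned on $x$, the law of $s$ is fixed, so the divergence between consecutive lifted measures equals the divergence between their $x$-marginals. $\ino$ on the lifted truncated Gaussian costs $\Otilde(n^{2}\max(\sigma^{2},1))$ queries from an $O(1)$-warm start. Summing over doublings of $\sigma^{2}$ gives $\Otilde(q^{1/2}n^{2}\msf V^{1/2})$. We stop once $\sigma^{2}\gtrsim R^{2}$ (times logs); at that point the final tilt is $O(1)$-close to $\pi$ in $\msf R_{q}$. The last $\ino$ run accounts for the "$\le2$" rather than "$\le1$", via the weak triangle inequality for R\'enyi divergence.

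The main obstacle is the handoff between the two stages and the uniformity of Theorem~\ref{thm:main-stability} under the lift. The tilt in Stage 2 acts only on $x$, so the relevant variance is that of the $x$-marginal, to which Theorem~\ref{thm:main-stability} applies directly. However, the $\ino$ cost depends on the isotropy of the lifted body, whose $s$-variance is $\Theta(n)$. I would handle this by rescaling $s$ by $1/\sqrt n$, making the lift near-isotropic in the $s$-direction, and checking that the $\ino$ bound for truncated Gaussians still holds with this anisotropic scaling. Finally, the isotropic corollary follows because $\msf V=O(n\polylog n)$ when $\cov\pi\approx I_{n}$ and $m\approx0$.
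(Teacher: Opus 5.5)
Your architecture matches the paper's: the exponential lift, annealing the linear tilt in the lifted coordinate, then thin-shell Gaussian cooling in $x$, using that the conditional law of the lifted coordinate given $x$ does not depend on $\sigma^2$. However, two quantitative steps fail as written.

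The first is the stopping rule. You stop once $\sigma^2\gtrsim R^2$, but a phase targeting $\sigma^2\ge1$ costs $\Otilde(n^2\sigma^2)$ queries. So the final phase alone costs $\Otilde(n^2R^2)$, i.e.\ $n^3$ for near-isotropic $\pi$, which is exactly the classical Gaussian-cooling bottleneck the theorem is meant to beat. Your claim that summing over doublings gives $\Otilde(q^{1/2}n^2\msf V^{1/2})$ holds only while $\sigma^2\lesssim\sqrt{q\msf V}$. Beyond that point, each update $\sigma^2\mapsto\sigma^2(1+\sigma^2/\sqrt{q\msf V})$ more than doubles $\sigma^2$, and each phase costs $n^2\sigma^2\gg n^2\sqrt{q\msf V}$. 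The missing observation is that the same curvature bound also gives terminal warmness. Taking $s'=0$ in Lemma~\ref{lem:renyi-exact} gives $\msf R_q(\bar\pi_{\sigma^2}\mmid\bar\pi)\le q\msf V/(8\sigma^4)$ (Corollary~\ref{cor:renyi-t}), so you can stop at $\sigma^2_{\mathrm{last}}=\sqrt{q\msf V}$. In the isotropic case this is about $\sqrt{qn}$ up to logarithms, far below $R^2\approx n$.

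The second gap is the regime $\sigma^2<1$. An exact initial sample is available only at $\sigma^2=1/n$, by rejection from $\msf N(0,n^{-1}I_n)$ using $B(0,1)\subseteq\msf L_g$. At $\sigma^2\asymp1$ the acceptance probability can be exponentially small. So $\sigma^2$ must travel from $1/n$ to $1$ somewhere, and your plan does not say how. It cannot reuse the schedule of Theorem~\ref{thm:warm-uniform}: in the lift the per-phase cost is $\Otilde(n^2(\sigma^2\vee1))$, not $\Otilde(n^2\sigma^2)$. The thin-shell doublings from $1/n$ to $1$ therefore take about $n\sqrt{q\msf V}$ phases of $n^2$ queries each, roughly $n^3\sqrt{q\msf V}$ in total. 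The paper inserts a separate loop $\sigma^2\gets(1+c/\sqrt{q_0n})\,\sigma^2$ at $\rho=n$, justified by Lemma~\ref{lem:sigma-closeness-phase1}. There, the mode bound plus $\sigma^{-2}$-strong logconcavity give $\var\norm X^2\lesssim n\sigma^4$, and the loop costs $\Otilde(n^{2.5})$.

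Relatedly, your Stage~1 step $\beta\gets(1+1/\sqrt n)\,\beta$ gives $O(1)$-closeness only in $\msf R_q$ with $q=O(1)$, since the bound of Lemma~\ref{lem:t-closeness} is $\lesssim qn\alpha^2$. Running the early stages at order $q$ would cost $q^{1/2}n^{2.5}$, which exceeds the claimed bound when $q\gg1$ and $\msf V\ll n$. The paper runs everything below $\sigma^2=1$ at a polylogarithmic base order $q_0$, which the sampler needs anyway ($q_0\ge1+\log\tau$). It then boosts the order with a single $\psann:\msf R_{q_0}\to\msf R_{2q}$ call.

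Two smaller points:
\begin{itemize}
\item Your ``standard'' curvature bound in $\beta$ must peel off the Gaussian part. Varentropy controls $\var(Q+\beta s)$, but you still need Poincar\'e for the strongly logconcave $x$-marginal, plus a mode estimate, to get $\var Q\lesssim n$.
\item $\K'$ as defined uses $\inf V$, which is not available; anchor at $V(0)$ as the paper does.
\end{itemize}
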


\subsection{Technical overview}

Our proofs are organized around a direct connection between $\gc$
and thin-shell stability. We first recall the intuition behind $\gc$
\cite{CV15bypass,CV18Gaussian}, focusing on how thin-shell stability
determines its annealing rate. We then outline the proof of thin-shell
stability and explain how it leads to a faster annealing schedule
for logconcave distributions.

\paragraph{Gaussian cooling and variance of thin shells.}

Let $\pi$ be a logconcave distribution on $\Rn$, and consider the
Gaussian-tilted family $\pi_{\sigma^{2}}(\D x)\propto\exp(-\frac{\norm x^{2}}{2\sigma^{2}})\,\pi(\D x)$.
$\gc$ gradually increases $\sigma^{2}$, thereby weakening the effect
of Gaussian tilts and interpolating toward the target $\pi$. To understand
the annealing rate, we take a geometric view of $\gc$. The fluctuation
scale of $\norm X^{2}$ is $\var_{\pi_{\sigma^{2}}}(\norm X^{2})^{1/2}$.
Geometrically, when we increase $\sigma^{2}$, the next annealing
distribution should overlap substantially with the thin shell of the
previous annealing distribution $\pi_{\sigma^{2}}$ so that consecutive
distributions are ``close'' enough for a sampler to mix rapidly toward
the next distribution. Hence, heuristically, $\E_{\pi_{\sigma^{2}_{\mathrm{new}}}}[\norm X^{2}]$
should lie in $\E_{\pi_{\sigma^{2}}}[\norm X^{2}]\pm O(\var_{\pi_{\sigma^{2}}}(\norm X^{2})^{1/2})$.

\begin{figure}[H]
\centering
\begin{centering}
\begin{tikzpicture}[x=1cm,y=1cm,>=stealth,
    line cap=round,line join=round,font=\small,
    guide/.style={line width=0.45pt,dash pattern=on 3pt off 3pt},
    profile/.style={line width=1.15pt},
    measure/.style={<->,line width=0.85pt}]
\definecolor{coolingblue}{RGB}{30,105,220}
% Squared-radius profiles: schematic, as in the original sketch.
\draw[profile,black]
    (0.45,0.18)
    .. controls (1.90,0.33) and (2.60,1.06) .. (3.55,1.58)
    .. controls (4.60,2.19) and (5.45,2.55) .. (6.50,2.55)
    .. controls (7.40,2.55) and (8.20,2.50) .. (9.15,2.13)
    .. controls (10.00,1.80) and (10.65,1.08) .. (11.25,0.67)
    .. controls (12.00,0.23) and (12.55,0.16) .. (13.00,0.12);
\draw[profile,coolingblue]
    (2.70,0.07)
    .. controls (4.20,0.04) and (5.00,0.43) .. (6.25,1.22)
    .. controls (7.45,1.97) and (8.55,2.82) .. (9.65,2.82)
    .. controls (10.60,2.82) and (11.20,2.16) .. (12.00,1.31)
    .. controls (12.90,0.39) and (13.50,0.13) .. (14.15,0.07);
% Mean and fluctuation-scale guides.
\draw[guide,black!65] (2.50,0) -- (2.50,0.91);
\draw[guide,black!65] (6.50,-1.34) -- (6.50,2.55);
\draw[guide,black!65] (10.70,-1.34) -- (10.70,1.13);
\draw[guide,coolingblue!75] (9.65,0) -- (9.65,2.82);
\draw[->,line width=0.8pt] (0.15,0) -- (14.55,0);
\node[anchor=north east] at (14.55,-0.12) {$\norm X^2$};
\node[anchor=south] at (3.05,2.76)
    {$\operatorname{Law}_{\pi_{\sigma^2}}(\norm X^2)$};
\node[anchor=south,text=coolingblue] at (12.05,2.91)
    {$\operatorname{Law}_{\pi_{(1+\alpha)\sigma^2}}(\norm X^2)$};
\node[anchor=north,fill=white,inner sep=3pt] at (6.50,-0.13)
    {$\E_{\pi_{\sigma^2}}[\norm X^2]$};
\node[anchor=north,text=coolingblue,fill=white,inner sep=3pt]
    at (9.65,-0.13) {$\E_{\pi_{(1+\alpha)\sigma^2}}[\norm X^2]$};
% Mean displacement and one-sided fluctuation scale.
\draw[measure,coolingblue] (6.56,0.53) -- (9.59,0.53);
\node[anchor=south,text=coolingblue,inner sep=2pt] at (8.075,0.67)
    {$\displaystyle \frac{\alpha}{2\sigma^2}\,\var_{\pi_{\sigma^2}}(\norm X^2)$};
\draw[measure,black] (6.56,-1.10) -- (10.64,-1.10);
\node[anchor=north,inner sep=3pt] at (8.60,-1.25)
    {$\Theta\bpar{\sqrt{\var_{\pi_{\sigma^2}}(\norm X^2)}}$};
\end{tikzpicture}
\par\end{centering}
\caption{$\protect\gc$ viewed through the distribution of the squared radius.
The schematic profiles show the laws of $\protect\norm X^{2}$ under
consecutive Gaussian tilts. Increasing $\sigma^{2}$ to $(1+\alpha)\,\sigma^{2}$
shifts the mean roughly by the amount indicated by the blue arrow.
Comparing this displacement with the current fluctuation scale (black
arrow) motivates the annealing step size.\label{fig:gaussian-cooling-shells}}
\end{figure}

To make this intuition concrete, let $\sigma^{2}_{\mathrm{new}}=(1+\alpha)\,\sigma^{2}$
for a small increment $\alpha$. Direct computation gives $\de_{\sigma^{2}}\E_{\pi_{\sigma^{2}}}[\norm X^{2}]=\var_{\pi_{\sigma^{2}}}(\norm X^{2})/(2\sigma^{4})$.
Thus, the center of the thin shell moves roughly by $\frac{\alpha}{2\sigma^{2}}\var_{\pi_{\sigma^{2}}}(\norm X^{2})$.
Since we want this change to be smaller than the annulus width $O(\var_{\pi_{\sigma^{2}}}(\norm X^{2})^{1/2})$,
this suggests the multiplicative step $\alpha\lesssim\sigma^{2}/\var_{\pi_{\sigma^{2}}}(\norm X^{2})^{1/2}$.
Note that a larger variance has two competing effects: it gives a
larger annulus, but it moves its center even faster, so $\sigma^{2}/\var_{\pi_{\sigma^{2}}}(\norm X^{2})^{1/2}$
is the ideal balance between these two effects.

This identifies $\var_{\pi_{\sigma^{2}}}(\norm X^{2})$ as a fundamental
quantity underlying $\gc$ and motivates bounding it uniformly. Previous
work bounds it by $\var_{\pi_{\sigma^{2}}}(\norm X^{2})\leq4\sigma^{2}\,\E_{\pi_{\sigma^{2}}}[\norm X^{2}]$
using the strong logconcavity of $\pi_{\sigma^{2}}$ and the Poincar\'e
inequality (or Brascamp--Lieb). Since $\E_{\pi_{\sigma^{2}}}[\norm X^{2}]$
can be made $O(R^{2})$ by a suitable truncation, the previous bound
yields $\alpha\asymp\sigma/R$. On the other hand, a uniform bound
$\sup_{\sigma^{2}>0}\var_{\pi_{\sigma^{2}}}(\norm X^{2})\leq\msf V$
suggests $\alpha\asymp\sigma^{2}/\msf V^{1/2}$. Once $\sigma^{2}\geq1$,
this suggests a faster annealing schedule than those in previous work.
Theorem~\ref{thm:main-stability} establishes such a $\sigma^{2}$-independent
uniform bound along $\gc$. 

\subsubsection{Thin-shell stability of Gaussian tilts\label{subsec:Thin-shell-stability-under}}

To build intuition, consider an isotropic logconcave probability measure
$\nu$ on $\Rn$, and let $\D\nu_{s}\propto e^{-s\,\norm x^{2}}\,\D\nu$
denote its Gaussian tilt for $s\in[0,\infty]$, where we use the inverse
parameter $s=1/(2t)$ for the sake of exposition. Prior work \cite{KV25faster}
proposed a stronger version of the thin-shell stability conjecture
as part of an attempt to design a faster warm-start algorithm, based
on geometric intuition for stability of the largest eigenvalue of
$\cov\nu_{s}$. Indeed, by \eqref{eq:pi},
\[
\var_{\nu_{s}}(\norm X^{2})\leq4\cpi(\nu_{s})\,\E_{\nu_{s}}[\norm X^{2}]\lesssim_{\log n}\norm{\cov\nu_{s}}_{\op}\,\E_{\nu}[\norm X^{2}]=n\,\norm{\cov\nu_{s}}_{\op}\,,
\]
where the second inequality follows from $\cpi(\nu_{s})\lesssim\norm{\cov\nu_{s}}_{\op}\log n$
(i.e., the logarithmic KLS bound) and $\E_{\nu_{s}}[\norm X^{2}]\le\E_{\nu}[\norm X^{2}]$
(which follows by differentiating $\E_{\nu_{s}}[\norm X^{2}]$ with
respect to $s$). Following the same reasoning, one might conjecture
that $\lda_{s}:=\norm{\cov\nu_{s}}_{\op}$ should be $O(\norm{\cov\nu}_{\op})=O(1)$.
However, it was later disproved by \cite{KV26zeroLC} using Bizeul's
example \cite{Bizeul26logsobolev}. 

A natural alternative is to directly control $\msf V_{s}:=\var_{\nu_{s}}(\norm X^{2})$
without going through \eqref{eq:pi}. Hence, we study the thin-shell
variance $\msf V_{s}$ as $s$ decreases from $\infty$ to $0$. Two
competing factors determine $\msf V_{s}$: (i) the base measure $\nu$
and (ii) the variance $\frac{1}{2s}$ of the Gaussian factor. When
$s$ is large (so the Gaussian variance is small), the Gaussian factor
pulls the $\nu$-mass toward the origin. In this regime, the effect
of the base measure $\nu$ should be limited. Hence, $\nu_{s}$ is
concentrated near the origin, and intuitively, $\msf V_{s}$ should
be small. As $s$ decreases (so the Gaussian variance is larger),
we expect the $\nu_{s}$-measure to become more dispersed, which should
in turn increase $\msf V_{s}$. However, when $s$ is sufficiently
small, the Gaussian weight is $\Theta(1)$ throughout the effective
support of $\nu$, which contains most of the $\nu$-measure (e.g.,
$\norm x=O(n^{1/2})$). In this regime, the Gaussian tilt has little
effect. Hence, $\nu_{s}$ should be distributed similar to the isotropic
measure $\nu$, and we expect $\msf V_{s}\approx\var_{\nu}(\norm X^{2})$.
By the thin-shell theorem, we expect $\msf V_{s}=O(n)$. In summary,
as $s$ decreases from $\infty$ to $0$, the thin-shell variance
should initially increase as the Gaussian factor weakens. Once the
effect of the original isotropic measure becomes dominant, however,
$\msf V_{s}$ should remain $O(n)$. Thus, a plausible conjecture
is that $\msf V_{s}=O(n)$ for \textbf{all} $s\in[0,\infty]$.

\paragraph{Decomposition of variance.}

For a general logconcave measure $\pi$ with $m=\E X$ and $\Sigma=\cov\pi$,
let $Z=\Sigma^{-1/2}(X-m)$ for $X\sim\pi$, and let $\nu$ be the
law of $Z$, which is isotropic and logconcave. Under this change
of variables, the tilted law $\pi_{t}$ becomes
\[
\nu_{s}(\D z)\propto\exp(-s\,\norm{\Sigma^{1/2}z+m}^{2})\,\nu(\D z)\qquad\text{for }s=(2t)^{-1}\,,
\]
and $\norm X^{2}=\norm{\Sigma^{1/2}Z+m}^{2}$. Hence, to bound $\var_{\pi_{t}}(\norm X^{2})$,
we consider the following problem: for $M\succ0$ and $u\in\Rn$,
define $q_{u}(x)=\norm{M^{1/2}x+u}^{2}$ under the tilt $\nu_{s}(\D x)\propto e^{-sq_{u}(x)}\,\nu(\D x)$
for $s\in[0,\infty]$ (later we can simply take $M=\Sigma$ and $u=m$).
Write $b_{s}$ and $B_{s}$ for the mean and covariance of $\nu_{s}$,
and set $y_{s}=M^{1/2}b_{s}+u$ and $M_{s}=M^{1/2}B_{s}M^{1/2}$.
Isotropizing $\nu_{s}$ and applying the definition of $\cQ_{n}$
gives
\[
\var_{\nu_{s}}q_{u}\lesssim\cQ_{n}\tr(M^{2}_{s})+y^{\T}_{s}M_{s}y_{s}\,.
\]
Hence, it suffices to control $\tr(M^{2}_{s})$ and the contribution
of the moving mean $y_{s}$. The Brascamp--Lieb inequality yields
$M_{s}\preceq(2s)^{-1}\,\Id$, but applying this bound independently
in every direction loses a factor of $n$ and becomes vacuous as $s\downarrow0$.
As mentioned earlier, the plausible alternative of proving $\norm{B_{s}}_{\op}\lesssim\norm{B_{0}}_{\op}$
is also false: covariance operator-norm stability under Gaussian tilting
is false in general. This suggests controlling the eigenvalues \emph{collectively},
rather than the largest eigenvalue.

\paragraph{Collective spectral control of $M_{s}$.}

Consider first the centered case $u=0$. Let $\lambda_{i}(s)$ be
the eigenvalues of $M_{s}$, and let $P_{s}$ and $N_{s}$ denote
the total increase and decrease of these eigenvalues relative to $M_{0}=M$:
\[
N_{s}:=\sum_{\lambda_{i}(s)<\lambda_{i}(0)}\bpar{\lambda_{i}(0)-\lambda_{i}(s)}\qquad\text{and}\qquad P_{s}:=\sum_{\lambda_{i}(s)>\lambda_{i}(0)}\bpar{\lambda_{i}(s)-\lambda_{i}(0)}\,.
\]
Note that $\tr M_{s}-\tr M_{0}=P_{s}-N_{s}$. The score identity (Lemma~\ref{lem:score-identity})
gives $\de_{s}\E_{s}[X^{\T}MX]=-\var_{\nu_{s}}q_{u}$, which relates
$N_{s}$, $P_{s}$, and $\norm{y_{s}}^{2}$ as follows: $\Delta_{s}:=\int^{s}_{0}\var_{\nu_{r}}q_{u}\,\D r=N_{s}-P_{s}-\norm{y_{s}}^{2}$,
so $P_{s}+\norm{y_{s}}^{2}\leq N_{s}$. Also, differentiating $\lda_{i}(s)$
along the same path gives $\partial_{s}\log\lambda_{i}(s)=-\cov_{s}(Z^{2}_{s,i},q_{u})$,
where $Z_{s}$ is the random variable for isotropization of $\nu_{s}$.
The log-spectrum estimate yields $N^{2}_{s}\leq s\,\cQ_{n}\tr(M^{2})\,\Delta_{s}$.
Since $\Delta_{s}\leq N_{s}$, both spectral gain and squared mean
displacement are $O(s)$. Combining these relations yields $\tr(M^{2}_{s})\leq\tr(M^{2})\,(1+\msf Q_{n})$
and $y^{\T}_{s}M_{s}y_{s}\leq\half\,\tr(M^{2})\,\msf Q_{n}$ with
$\msf Q_{n}\lesssim\log n$ (Lemma~\ref{lem:property-ms-ys}). Note
that we avoid the na\"ive factor-$n$ loss: instead of using the
same worst-case bound on every eigenvalue, we control the $\ell_{2}$-norm
of the entire log-spectrum collectively. The detailed proof appears
in \S\ref{sec:thin-shell-unshifted}. 

\paragraph{Extension to the shifted case.}

When $u\neq0$, the same argument starts from $y_{0}=u$, and the
identity becomes $\Delta_{s}=N_{s}-P_{s}+\norm u^{2}-\norm{y_{s}}^{2}$.
This change adds $O(\norm u^{2}/s)$ to $\var_{\nu_{s}}q_{u}$, which
is useful but vacuous near the untilted endpoint (i.e., $s=0$). At
that endpoint, the same decomposition and the quadratic-form estimate
directly bound $\var_{\nu}q_{u}$ in terms of $G:=\tr(M^{2})+u^{\T}Mu$.
Moreover, the score identity and reverse H\"older inequality for
quadratic polynomials give $\abs{\partial_{s}\var_{\nu_{s}}q_{u}}\lesssim(\var_{\nu_{s}}q_{u})^{3/2}$.
Solving this Gr\"onwall-type inequality, we use the resulting bound
up to $s_{0}\asymp[(\cQ_{n}+4)G]^{-1/2}$ and the spectral estimate
thereafter. Concatenating the two regimes yields the additional term
$\norm u^{2}\sqrt{G}$ in Theorem~\ref{thm:anisotropic}. Putting
these together, we obtain the bound $\msf V=\Otilde(R^{2}L\wedge R^{3}\Lambda^{1/2})$.

\subsubsection{From variance stability to faster Gaussian cooling}

\paragraph{R\'enyi-divergence between annealing distributions.}

The heuristic calculation above suggests the natural scale of $\alpha\asymp\sigma^{2}/\msf{\sqrt{V}}$.
To make this quantitative, let $F(s)$ be the logarithm of the normalization
constant of $\nu_{s}(\D x)\propto e^{-s\,\norm x^{2}}\,\pi(\D x)$.
Differentiation gives $F''(s)=\var_{\nu_{s}}(\norm X^{2})$. By the
definition of $q$-R\'enyi divergence, one can check that $\msf R_{q}(\nu_{s}\mmid\nu_{s'})=[F(s+(q-1)\,\Delta)-qF(s)+(q-1)\,F(s-\Delta)]/(q-1)$
for $\Delta:=s-s'$. Using Taylor expansion, for $I=[s',s'+q\Delta]$,
we have 
\[
F(s+(q-1)\,\Delta)\le F(s)+(q-1)\,\Delta F'(s)+\frac{V_{I}}{2}\,(q-1)^{2}\Delta^{2}\ \mbox{ and }\ F(s-\Delta)\le F(s)-\Delta F'(s)+\frac{V_{I}}{2}\,\Delta^{2},
\]
so 
\[
\msf R_{q}(\nu_{s}\mmid\nu_{s'})\leq\frac{q}{2}\,(s-s')^{2}\sup_{t\in I}\var_{\nu_{t}}(\norm X^{2})\,.
\]
If $s=(2\sigma^{2})^{-1}$, consecutive targets remain $O(1)$-close
in $\msf R_{q}$ under the update $\sigma^{2}\gets\sigma^{2}(1+\Theta(\sigma^{2}/\sqrt{q\msf V}))$.
The same inequality shows that $\sigma^{2}\gtrsim\sqrt{q\msf V}$
is sufficient for warmness with respect to the target $\pi$, so we
set $\sigma^{2}_{\text{last}}\asymp\sqrt{q\msf V}$. As discussed
in Remark~\ref{rem:comparison-closeness}, the resulting bound recovers
prior closeness bounds in \cite{CV18Gaussian,KV25faster,KV26zeroLC}.
Prior works use the following bound: 
\[
\var_{\nu_{s}}(\norm X^{2})\leq4\sigma^{2}\E_{\nu_{s}}[\norm X^{2}]\lesssim\sigma^{2}R^{2}
\]
by \eqref{eq:pi} (or Brascamp--Lieb); this $\sigma^{2}$-dependent
bound becomes loose as $\sigma^{2}$ increases.

\paragraph{Special case: uniform sampling.}

For a convex body containing the unit ball, we can sample the initial
target at $\sigma^{2}_{0}=1/n$ exactly by Gaussian rejection sampling.
We then follow this schedule using $\ino$ (the proximal sampler)
for truncated Gaussians \cite{KV26zeroLC}. At scale $\sigma^{2}$,
each phase (i.e., sampling initialized at an $O(1)$-warm start in
$\msf R_{q}$) requires $\Otilde(n^{2}\sigma^{2})$ queries, and doubling
the initial $\sigma^{2}$ requires $O(\sqrt{q\msf V}/\sigma^{2})$
phases. Thus, each doubling costs $\Otilde(n^{2}\sqrt{q\msf V})$
queries. Since there are $O(\log(qn\msf V))$ doublings, by the stability
estimate, the total complexity of Gaussian annealing is $\Otilde(q^{1/2}n^{2}\min\{R\sqrt{L},R^{3/2}\Lambda^{1/4}\})$
queries; for a nearly isotropic target, this is $\Otilde(q^{1/2}n^{5/2})$.
Using hypercontractivity of the proximal sampler~\cite{KV26zeroLC},
this procedure propagates $\msf R_{q}$-warmness from one target to
the next. The details are given in \S\ref{sec:uniform}.

\paragraph{General case: logconcave sampling.}

Given a convex potential $V$ and an evaluation oracle for it, we
use the lifting technique \cite{KV25sampling}; instead of sampling
directly from $\pi\propto e^{-V}$, we sample from a density proportional
to $e^{-nt}\,\ind_{\K}(x,t)$, where $\K=\{(x,t):V(x)\leq nt\}$ is
the epigraph of $V/n$. Its $X$-marginal is exactly proportional
to $e^{-V(x)}$. Suitable truncation $\bar{\K}$ of $\K$ contains
at least half of the lifted measure and ensures a finite log-Sobolev
constant. Since the truncated $X$-marginal has density at most twice
that of $\pi$, its parameters $R,L,\Lambda$ are bounded by universal
multiples of those of $\pi$. Hence, the required thin-shell upper
bound increases by at most a universal factor.

On the truncated lift, we use the two-parameter annealing distribution
from \cite{KV25sampling}:
\[
\mu_{\sigma^{2},\rho}(\D x,\D t)\propto\exp\bpar{-\frac{\norm x^{2}}{2\sigma^{2}}-\rho t}\,\ind_{\bar{\K}}(x,t)\,\D x\D t\,.
\]
Using the proximal sampler for this distribution \cite{KV26zeroLC},
we follow the path $(n^{-1},1)\to(n^{-1},n)\to(\sigma^{2}_{\text{last}},n)$
in the $(\sigma^{2},\rho)$-plane, streamlining the annealing schedules
proposed in \cite{KV25faster,KV26zeroLC}. The first segment changes
only $\rho$ while keeping the strongly logconcave Gaussian component
fixed. Once $\rho=n$, the $X$-marginal is exactly the Gaussian tilt
of the truncated target, so the second segment can reuse the schedule
and theory developed for uniform sampling.

For the first phase $(n^{-1},1)\to(n^{-1},n)$, we establish a new
closeness bound. As in the analysis of the $\msf R_{q}$ bound between
consecutive annealing distributions, we have to bound $\var_{(\sigma^{2},\rho)}T:=\var_{\mu_{\sigma^{2},\rho}}T$.
Varentropy of the joint logconcave density \cite{FMW16varentropy},
together with the Poincar\'e bound for its strongly logconcave $X$-marginal,
gives $\var_{(n^{-1},\rho)}T\lesssim n/\rho^{2}$ for $1\leq\rho\leq n$.
Hence, for $q_{0}=\widetilde{\Theta}(1)$, we multiply $\rho$ by
$1+\Theta(1/\sqrt{q_{0}\,(q_{0}+n)})$ at each phase. There are $\Otilde(\sqrt{n})$
such phases, and each costs $\Otilde(n^{2})$ queries (since the proximal
sampler costs $\Otilde(n^{2}(\sigma^{2}\vee1))$ from an $O(1)$-warm
start), using $\Otilde(n^{5/2})$ queries. We maintain the same base
order $q_{0}$ while increasing $\sigma^{2}$ to $1$. Thereafter,
using the hypercontractivity of the proximal sampler again (to boost
the R\'enyi order from $q_{0}$ to $q$), we use the thin-shell update
up to $\sigma^{2}_{\text{last}}$. These Gaussian-annealing phases
cost $\Otilde(n^{2}\sqrt{q\msf V})$ queries as in the uniform case.
Finally, data processing under $(x,t)\mapsto x$ and the constant-mass
truncation transfer the lifted warm start to the original target.

\subsection{Preliminaries\label{subsec:Preliminaries}}

We use the same symbol for a distribution and its density. Both $a\lesssim b$
and $a=\O(b)$ mean $a\le cb$ for a universal constant $c>0$. $a=\Omega(b)$
means $a\gtrsim b$, and $a\asymp b$ means $a=\O(b)$ and $a=\Omega(b)$.
Lastly, $a=\Otilde(b)$ means $a=O(b\polylog b)$. For a positive
semidefinite matrix $\Sigma$, $\norm{\Sigma}_{\op}$ denotes the
operator norm of $\Sigma$. We use $a\vee b$ and $a\wedge b$ to
denote their maximum and minimum, respectively. $B(x,r)$ denotes
an $\ell_{2}$-ball of radius $r$ centered at $x$.

\paragraph{Logconcavity.}

We call a function $f:\Rn\to[0,\infty)$ \emph{logconcave} if $-\log f$
is convex in $\Rn$, and call a probability measure $\pi$ (or distribution)
logconcave if it has a logconcave density with respect to Lebesgue
measure. We assume that all logconcave distributions considered in
this work are full-dimensional\footnote{Otherwise, we may work on the affine subspace supporting the distribution.}.
For $t\geq0$, $\pi$ is called \emph{$t$-strongly logconcave} if
$-\log\pi$ is $t$-strongly convex (i.e., $-\log\pi-\frac{t}{2}\,\abs{\cdot}^{2}$
is convex). Logconcavity is preserved under multiplication, and a
classical result by Pr\'ekopa and Leindler ensures that convolution
also preserves logconcavity. A distribution is called \emph{isotropic}
if its barycenter is at the origin and its covariance matrix is the
identity. Logconcave distributions have exponentially decaying tails
and hence finite moments of all orders. 

\paragraph{Probability.}

Let $\mu$ and $\nu$ denote probability measures on $\Rn$ with $\mu\ll\nu$.
For $q>1$, we define the warmness convention $M_{q}=\norm{\D\mu/\D\nu}_{L^{q}(\nu)}$.
The \emph{$q$-R\'enyi divergence} is defined as $\msf R_{q}(\mu\mmid\nu):=\frac{1}{q-1}\,\log\norm{\frac{\D\mu}{\D\nu}}^{q}_{L^{q}(\nu)}$,
and the \emph{R\'enyi-infinity divergence} is $\msf R_{\infty}(\mu\mmid\nu):=\log\esssup_{\nu}\frac{\D\mu}{\D\nu}$.
Recall monotonicity in the order: $\msf R_{q}\leq\msf R_{q'}$ for
$q\leq q'$. A weak triangle inequality holds for R\'enyi divergence:
for any $q>1$ and probability measures $\mu,\nu,\pi$, it holds that
\[
\msf R_{q}(\mu\mmid\pi)\leq\frac{2q}{2q-1}\,\msf R_{2q-1}(\mu\mmid\nu)+\msf R_{2q}(\nu\mmid\pi)\,.
\]
We refer to \cite{vH14renyi,Mironov17renyi} for more properties of
R\'enyi divergence.

\paragraph{Functional inequalities.}

A probability measure $\pi$ on $\Rn$ is said to satisfy a \emph{Poincar\'e
inequality }with constant $C$ if for any locally Lipschitz function
$f\in L^{2}(\pi)$,
\begin{equation}
\var_{\pi}f:=\int\Bpar{f-\int f\,\D\pi}^{2}\,\D\pi\leq C\int\abs{\nabla f}^{2}\,\D\pi\,,\tag{\ensuremath{\msf{PI}}}\label{eq:pi}
\end{equation}
and the smallest such $C$ is called the Poincar\'e constant $\cpi(\pi)$.
A probability measure $\pi$ on $\Rn$ is said to satisfy a \emph{logarithmic
Sobolev inequality} with constant $C$ if for any locally Lipschitz
function $f\in L^{2}(\pi)$,
\begin{equation}
\ent_{\pi}(f^{2}):=\int f^{2}\log f^{2}\,\D\pi-\int f^{2}\,\D\pi\cdot\log\int f^{2}\,\D\pi\leq2C\int\abs{\nabla f}^{2}\,\D\pi\,,\tag{\ensuremath{\msf{LSI}}}\label{eq:lsi}
\end{equation}
and the smallest such $C$ is referred to as the log-Sobolev constant
$\clsi(\pi)$. In general, \eqref{eq:lsi} is strictly stronger than
\eqref{eq:pi}. By the Bakry--\'Emery criterion, every $t$-strongly
logconcave probability measure $\pi$ satisfies $\clsi(\pi)\leq1/t$
\cite{BGL14analysis}.

\paragraph{Historical note and AI use.}

In early June, during the conference ``Scalable MCMC Sampling'' held
at the FIM--Institute for Mathematical Research at ETH Z\"urich,
the authors realized how to handle the $n^{8/3}$ lower bound example
from \cite{KV25faster} in $n^{2.5}$ oracle complexity. They formulated
and discussed the thin-shell stability conjecture and promising approaches
to proving it. The first author then used GPT 5.5 and 5.6 Pro to obtain
a preliminary version of the proof of Theorem 1.1 through guided prompting.
The authors verified, streamlined, and wrote the full proof and its
consequences.

\section{Thin-shell stability of Gaussian tilts\label{sec:stability}}

In this section, we show that the thin-shell variance remains bounded
for Gaussian tilts; motivated by \eqref{eq:QF}, a natural anisotropic
version of the conjecture would be $\var_{\nu_{s}}(X^{\T}MX)\lesssim_{\log n}\tr(M^{2})$
for every $s\geq0$. In \S\ref{sec:thin-shell-unshifted}, we provide
the proof of this conjecture, and in \S\ref{sec:shifted-tilts},
we provide the following generalization to shifted quadratic functions:
\begin{thm}
[Shifted anisotropic quadratic tilt]\label{thm:anisotropic} Let
$\nu$ be an isotropic logconcave probability measure on $\R^{n}$,
$M\succeq0$, and $u\in\R^{n}$. Set $q_{u}(x)=\norm{M^{1/2}x+u}^{2}$
and define the probability measure $\nu_{s}(\D x)\propto e^{-sq_{u}(x)}\,\nu(\D x)$
for $s\geq0$. Then, for every $s\geq0$,
\[
\var_{\nu_{s}}q_{u}\lesssim(1+\msf Q_{n})^{2}\,\bpar{\tr(M^{2})+\norm u^{2}\sqrt{\tr(M^{2})+u^{\T}Mu}}\,.
\]
\end{thm}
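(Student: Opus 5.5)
We follow the route laid out in the technical overview. Write $b_s,B_s$ for the mean and covariance of $\nu_s$, and set $y_s=M^{1/2}b_s+u$, $M_s=M^{1/2}B_sM^{1/2}$, and $G:=\tr(M^{2})+u^{\T}Mu$.

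\textbf{Step 1: variance decomposition.} Write $X=b_s+B_s^{1/2}Z$ with $Z$ isotropic logconcave; since $\nu_s$ is a logconcave tilt of $\nu$, the law of $Z$ is again isotropic logconcave. Then
\[
q_u(X)=Z^{\T}M'Z+2y_s^{\T}M^{1/2}B_s^{1/2}Z+\mathrm{const},\qquad M'=B_s^{1/2}MB_s^{1/2}.
\]
Note that $\tr(M'^{2})=\tr(M_s^{2})$. Applying \eqref{eq:QF} to the quadratic part and isotropy to the linear part gives
\[
\var_{\nu_s}q_u\lesssim \msf Q_n\tr(M_s^2)+y_s^{\T}M_sy_s.
\]
At $s=0$ this already yields $\var_\nu q_u\lesssim(1+\msf Q_n)G$.

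\textbf{Step 2: spectral control for $s>0$.} Start from the score identity (Lemma~\ref{lem:score-identity}), $\partial_s\E_{\nu_s}q_u=-\var_{\nu_s}q_u$. Integrating from $0$ to $s$ and writing $\E q_u=\tr M_s+\norm{y_s}^2$ gives
\[
\Delta_s:=\int_0^s\var_{\nu_r}q_u\,\D r=N_s-P_s+\norm u^2-\norm{y_s}^2,
\]
where $N_s,P_s$ are the total eigenvalue decrease and increase of $M_s$ relative to $M$. Next, differentiate the log-eigenvalues: $\partial_s\log\lambda_i=-\cov(Z_i^2,q_u)$. By Cauchy--Schwarz together with \eqref{eq:QF} applied to the diagonal quadratic forms $\sum_i c_iZ_i^2$, the $\ell_2$-norm of the log-spectrum drift is bounded by $(\msf Q_n\var q_u)^{1/2}$. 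Integrating this and using Cauchy--Schwarz in time, we expect
\[
N_s^2\lesssim s\,\msf Q_n\tr(M^2)\,\Delta_s .
\]
Here we use that $N_s$ is bounded via $\sum(\lambda_i(0)-\lambda_i(s))_+\le(\sum\lambda_i(0)^2)^{1/2}\|\log\text{-change}\|_2$ for decreases.

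Combining the two relations: $\Delta_s\le N_s+\norm u^2$, hence $N_s\lesssim s\msf Q_n\tr(M^2)+\sqrt{s\msf Q_n\tr(M^2)}\norm u$. This controls $P_s$, $\norm{y_s}^2\le N_s+\norm u^2$, and the time-averaged variance $\Delta_s/s$. To pass to pointwise bounds on $\tr(M_s^2)$ and $y_s^{\T}M_sy_s$, use Brascamp--Lieb in the form $M_s\preceq(2s)^{-1}I$. This gives $y_s^{\T}M_sy_s\le\norm{y_s}^2/(2s)$, and bounds $\tr(M_s^2)$ by $\tr(M^2)$ plus the gains $P_s\cdot\max\lambda$. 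In the unshifted case this reproduces Lemma~\ref{lem:property-ms-ys}, $\tr(M_s^2)\lesssim(1+\msf Q_n)\tr(M^2)$. With the shift it adds an extra term of order $\norm u^2/s$ plus lower-order terms. Plugging into Step 1 gives
\[
\var_{\nu_s}q_u\lesssim(1+\msf Q_n)^2\tr(M^2)+\norm u^2/s\qquad\text{for all }s>0.
\]

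\textbf{Step 3: small $s$ via Grönwall.} Let $v(s)=\var_{\nu_s}q_u$. Differentiating gives $v'(s)=-\E_s[(q_u-\E_s q_u)^3]$. The reverse Hölder inequality for degree-2 polynomials under logconcave measures then yields $|v'|\lesssim v^{3/2}$. Hence $v^{-1/2}$ changes at rate $O(1)$, so $v(s)\le 4v(0)$ for $s\le s_0\asymp v(0)^{-1/2}\asymp[(\msf Q_n+4)G]^{-1/2}$. For $s\ge s_0$, Step 2 gives $\norm u^2/s\le\norm u^2/s_0\lesssim\norm u^2\sqrt{(1+\msf Q_n)G}$. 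Taking the maximum of the two regimes gives the claimed bound.

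\textbf{Main obstacle.} The hardest step is the log-spectrum estimate in Step 2, namely bounding the $\ell_2$-norm of $(\cov(Z_i^2,q_u))_i$ by $(\msf Q_n v)^{1/2}$ and correctly tracking the eigenvalue ordering. I would handle it by testing against the weighted quadratic form $\sum_i c_iZ_i^2$ with $\norm c_2=1$, which is legitimate because $Z$ is isotropic in the eigenbasis of $M_s$. A further bookkeeping risk is the conversion of the averaged bound $\Delta_s$ into pointwise control of $\tr(M_s^2)$.
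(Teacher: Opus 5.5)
Your proposal is correct and follows the paper's proof essentially step for step. It uses the same isotropized variance decomposition, the identity $\Delta_s=N_s-P_s+\norm{u}^{2}-\norm{y_s}^{2}$ combined with the log-spectrum estimate $N_s^{2}\le s\,\cQ_n\tr(M^{2})\,\Delta_s$ and Brascamp--Lieb for the large-$s$ regime, and the Carbery--Wright Gr\"onwall argument $\abs{v_s'}\lesssim v_s^{3/2}$ up to $s_0\asymp[(\cQ_n+4)G]^{-1/2}$.

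There are a few minor bookkeeping points, none of which affects the conclusion:
\begin{itemize}
\item The large-$s$ bound should read $(\cQ_n+\cQ_n^{2})\tr(M^{2})+(1+\cQ_n)\norm{u}^{2}/s$, because the $\norm{u}^{2}/s$ inside $\tr(M_s^{2})$ gets multiplied by $\cQ_n$. This is still absorbed, since $(1+\cQ_n)^{3/2}\le(1+\cQ_n)^{2}$.
\item You only know $v(0)\le(\cQ_n+4)G$, not $v(0)\asymp(\cQ_n+4)G$. Either define $s_0:=c\,[(\cQ_n+4)G]^{-1/2}$ directly, as the paper does, or note that $s_0=c\,v(0)^{-1/2}$ is at least as large, which only helps.
\item Closing the small-$s$ regime needs $u^{\T}Mu\le\norm{u}^{2}\norm{M}_{\op}\le\norm{u}^{2}\sqrt{G}$ to convert $(\cQ_n+4)G$ into the stated form.
\item The $\log\lambda_i$ computation and the form $M_s\preceq(2s)^{-1}I_n$ of Brascamp--Lieb require $M\succ0$. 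Singular $M\succeq0$ should be handled by applying the argument to $M+\varepsilon I_n$ and letting $\varepsilon\downarrow0$.
\end{itemize}
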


Using this result, we readily obtain the thin-shell stability result
(Theorem~\ref{thm:main-stability}) needed for faster $\gc$.

\subsection{Proof idea: the centered case\label{sec:thin-shell-unshifted}}

We first present the proof in the centered case of Theorem~\ref{thm:anisotropic}
(i.e., $u=0$).
\begin{thm}
[Centered anisotropic quadratic tilt]\label{thm:centered-anisotropic}
Let $\nu$ be an isotropic logconcave probability measure on $\R^{n}$,
and $\nu_{s}(\D x)\propto e^{-s\,\inner{x,Mx}}\,\nu(\D x)$ for $s\geq0$
and $M\succeq0$. Then,
\begin{equation}
\var_{\nu_{s}}(X^{\T}MX)\leq(2\cQ^{2}_{n}+6\cQ_{n})\tr(M^{2})\qquad\text{for every }s\geq0\,.\label{eq:centered-anisotropic}
\end{equation}
\end{thm}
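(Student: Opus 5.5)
The proof follows the outline in the technical overview, with $u=0$ so that $y_s=M^{1/2}b_s$.

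\textbf{Step 1: decompose the variance.} Write $q(x)=x^{\T}Mx$, and let $b_s$ and $B_s$ be the mean and covariance of $\nu_s$, with $M_s=M^{1/2}B_sM^{1/2}$. Isotropize $\nu_s$ by $X=b_s+B_s^{1/2}Z_s$. Then
\[
q(X)=Z_s^{\T}(B_s^{1/2}MB_s^{1/2})Z_s+2\,b_s^{\T}MB_s^{1/2}Z_s+\text{const}\,.
\]
The matrix $B_s^{1/2}MB_s^{1/2}$ has the same nonzero spectrum as $M_s$. Applying \eqref{eq:QF} to the quadratic part and isotropy to the linear part, and using $(a+b)^2\le 2a^2+2b^2$, gives
\[
\var_{\nu_s}q\le 2\cQ_n\tr(M_s^2)+8\,y_s^{\T}M_sy_s\,.
\]
The target constant $2\cQ_n^2+6\cQ_n$ then suggests proving $\tr(M_s^2)\le(1+\cQ_n)\tr(M^2)$ and $y_s^{\T}M_sy_s\le\tfrac12\cQ_n\tr(M^2)$. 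Possibly a sharper cross-term bookkeeping is needed to hit the exact constants.

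\textbf{Step 2: the score identity.} Differentiating under the tilt gives $\partial_s\E_{\nu_s}q=-\var_{\nu_s}q$. Since $\E_{\nu_s}q=\tr M_s+\norm{y_s}^2$, integrating from $0$ (where $\tr M_0=\tr M$ and $y_0=0$) yields
\[
\Delta_s:=\int_0^s\var_{\nu_r}q\,\D r=\tr M-\tr M_s-\norm{y_s}^2=N_s-P_s-\norm{y_s}^2\,.
\]
In particular $P_s+\norm{y_s}^2\le N_s$ and $\Delta_s\le N_s$.

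\textbf{Step 3: the log-spectrum estimate (the main obstacle).} The goal is $N_s^2\le s\,\cQ_n\tr(M^2)\,\Delta_s$. Diagonalize $M_s$ with a smoothly chosen orthonormal eigenbasis; eigenvalue-crossing issues can be handled by working with spectral functions. For each eigenpair, $\partial_s\log\lambda_i(s)=-\cov_{\nu_s}(Z_{s,i}^2,q)$, where $Z_{s,i}$ is the $i$-th coordinate of the isotropized variable in that basis. The mean-motion terms cancel in the log-derivative because of the normalization.

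For any weight vector $w$, Cauchy--Schwarz gives
\[
\textstyle\sum_i w_i\,\partial_s\log\lambda_i\le\var(q)^{1/2}\,\var\bpar{\sum_iw_iZ_{s,i}^2}^{1/2}\le\var(q)^{1/2}\,\bpar{\cQ_n\norm w^2}^{1/2},
\]
again using \eqref{eq:QF}. Integrating in $s$ and applying Cauchy--Schwarz in time then bounds the $\ell_2$ norm of the log-spectrum displacement:
\[
\textstyle\bpar{\sum_i\log^2(\lambda_i(s)/\lambda_i(0))}^{1/2}\le\sqrt{\cQ_n\,s\,\Delta_s}\,.
\]
To pass to $N_s$, use $\lambda_i(0)-\lambda_i(s)\le\lambda_i(0)\,\log(\lambda_i(0)/\lambda_i(s))$ on decreasing indices, followed by Cauchy--Schwarz, which gives
\[
N_s\le\tr(M^2)^{1/2}\sqrt{\cQ_n\,s\,\Delta_s}\,.
\]
The delicate point is matching eigenvalues at times $0$ and $s$ (ordered eigenvalues), which requires an ordered-spectrum version of the derivative bound; a majorization/Lidskii-type argument should supply it.

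\textbf{Step 4: close the bounds.} Combining Steps 2 and 3, $N_s^2\le s\cQ_n\tr(M^2)N_s$, so $N_s\le s\,\cQ_n\tr(M^2)$, and therefore $P_s,\ \norm{y_s}^2\le s\,\cQ_n\tr(M^2)$. By Brascamp--Lieb, $M_s\preceq(2s)^{-1}I$. Hence
\[
y_s^{\T}M_sy_s\le\norm{y_s}^2/(2s)\le\tfrac12\cQ_n\tr(M^2)\,.
\]
For the trace term, split $\tr(M_s^2)=\sum_i\lambda_i(s)^2$ over the two index sets. On decreasing indices, $\lambda_i(s)^2\le\lambda_i(0)^2$. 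On increasing indices,
\[
\lambda_i(s)^2\le\lambda_i(0)^2+\bpar{\lambda_i(s)-\lambda_i(0)}\bpar{\lambda_i(s)+\lambda_i(0)}\,.
\]
Bounding $\lambda_i(s)+\lambda_i(0)$ suitably (for instance by $2\lambda_i(s)\le 1/s$) and using $P_s\le s\,\cQ_n\tr(M^2)$, this second piece is $O(\cQ_n\tr(M^2))$. Together with the decreasing indices this gives $\tr(M_s^2)\le(1+\cQ_n)\tr(M^2)$ up to the constant. Inserting both bounds into Step 1 yields \eqref{eq:centered-anisotropic}.
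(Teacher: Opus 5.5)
Your proposal is correct and follows the paper's proof essentially step for step: the same $L^2$ decomposition, the score identity $\Delta_s=N_s-P_s-\norm{y_s}^2$, the log-spectrum Cauchy--Schwarz via \eqref{eq:QF}, and the Brascamp--Lieb closing step. Your constants already give exactly $2\cQ_n(1+\cQ_n)+4\cQ_n=2\cQ_n^2+6\cQ_n$, and the ordered-eigenvalue/Lidskii worry is unnecessary. The paper uses a differentiable labeling of the eigenvalues of $M_s$ (Kato) and defines $N_s,P_s$ with that same labeling at both endpoints; every step used ($\tr M_s-\tr M_0=P_s-N_s$, the bound on $\tr(M_s^2)-\tr(M_0^2)$, and the Cauchy--Schwarz bound for $N_s$) holds for any fixed bijection of eigenvalues, and the singular case $M\succeq0$ is handled by applying the result to $M+\veps I_n$.
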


We first prove the theorem for $M\succ0$. Denote the barycenter and
covariance of $\nu_{s}$ by $b_{s}=\E_{s}X$ and $B_{s}=\cov_{s}X$.
Under $x\mapsto y=T(x):=M^{1/2}x$, the pushforward $Y_{s}\sim T_{\#}\nu_{s}$
has barycenter $y_{s}=M^{1/2}b_{s}$ and covariance $M_{s}=M^{1/2}B_{s}M^{1/2}$.
Isotropy of $\nu$ gives $b_{0}=0$ and $B_{0}=\Id$, so $y_{0}=0$
and $M_{0}=M$. Although $\nu$ is centered, the quadratic tilt $\nu_{s}$
need not preserve its mean.

\paragraph{Controlling the covariance and the moving mean.}

Isotropize $\nu_{s}$ by $Z:=M^{-1/2}_{s}M^{1/2}(X-b_{s})$ for $X\sim\nu_{s}$.
We now bound the variance of the target quadratic form $q(x):=x^{\T}Mx$.
We have the decomposition
\[
q(x)=x^{\T}Mx=(x-b_{s})^{\T}M(x-b_{s})+2b^{\T}_{s}Mx-b^{\T}_{s}Mb_{s}=z^{\T}M_{s}z+2y^{\T}_{s}M^{1/2}_{s}z+b^{\T}_{s}Mb_{s}\,.
\]
Hence, $\E_{s}q=\tr M_{s}+\norm{y_{s}}^{2}$ and $q-\E_{s}q=(Z^{\T}M_{s}Z-\tr M_{s})+2y^{\T}_{s}M^{1/2}_{s}Z.$
Since $\var_{s}q=\norm{q-\E_{s}q}^{2}_{L^{2}(\nu_{s})}$, the triangle
inequality in $L^{2}$ gives
\begin{equation}
\sqrt{\var_{s}q}\leq\norm{z^{\T}M_{s}z-\tr M_{s}}_{L^{2}(\nu_{s})}+2\,\norm{y^{\T}_{s}M^{1/2}_{s}z}_{L^{2}(\nu_{s})}\leq\sqrt{\cQ_{n}\tr(M^{2}_{s})}+2\sqrt{y^{\T}_{s}M_{s}y_{s}}\,.\label{eq:unshifted-target}
\end{equation}
Here, \eqref{eq:QF} bounds the first term by $[\cQ_{n}\tr(M^{2}_{s})]^{1/2}$,
while $\norm{y^{\T}_{s}M^{1/2}_{s}z}^{2}_{L^{2}(\nu_{s})}=\E_{s}[(y^{\T}_{s}M^{1/2}_{s}Z)^{2}]=y^{\T}_{s}M_{s}y_{s}$.
Therefore, the proof reduces to controlling (i) the \emph{covariance
contribution} $\tr(M^{2}_{s})$ and (ii) the \emph{moving-mean contribution}
$y^{\T}_{s}M_{s}y_{s}$. 

We now bound these two terms. By Brascamp--Lieb, $M_{s}\preceq(2s)^{-1}\Id$,
so each eigenvalue $(\lda_{i}(s))_{i\in[n]}$ of $M_{s}$ is bounded
by $(2s)^{-1}$. However, applying this bound directly to $\tr(M^{2}_{s})$
would lose a factor of the dimension $n$ and would blow up as $s\downarrow0$.
To refine the analysis, we track changes in the eigenvalues \emph{collectively}
relative to $M_{0}=M$ and relate their total increase to their total
decrease, rather than applying the same worst-case bound to all $n$
eigenvalues. To make this precise, let us define the total spectral
``loss'' and ``gain'' by
\[
N_{s}:=\sum_{\lambda_{i}(s)<\lambda_{i}(0)}\bpar{\lambda_{i}(0)-\lambda_{i}(s)}\qquad\text{and}\qquad P_{s}:=\sum_{\lambda_{i}(s)>\lambda_{i}(0)}\bpar{\lambda_{i}(s)-\lambda_{i}(0)}\,.
\]
Note that $\tr M_{s}-\tr M_{0}=P_{s}-N_{s}$.

We first track how much $\tr(M^{2}_{s})$ has changed from $\tr(M^{2}_{0})$:
since $\tr(M^{2}_{s})-\tr(M^{2}_{0})=\sum_{i}(\lambda_{i}(s)^{2}-\lambda_{i}(0)^{2})$,
\[
\tr(M^{2}_{s})-\tr(M^{2}_{0})\leq\sum_{\lda_{i}(s)>\lda_{i}(0)}\bpar{\lambda_{i}(s)^{2}-\lambda_{i}(0)^{2}}\underset{(i)}{\leq}\frac{1}{s}\sum_{\lambda_{i}(s)>\lambda_{i}(0)}\bpar{\lambda_{i}(s)-\lambda_{i}(0)}=\frac{P_{s}}{s}\,,
\]
where $(i)$ follows from $\lambda_{i}(s)+\lambda_{i}(0)\leq2\lambda_{i}(s)\leq s^{-1}$
for an eigenvalue with $\lambda_{i}(s)>\lambda_{i}(0)$. Thus, the
two target terms are bounded as 
\begin{equation}
\tr(M^{2}_{s})\leq\tr(M^{2}_{0})+\frac{P_{s}}{s}\qquad\text{and}\qquad y^{\T}_{s}M_{s}y_{s}\leq\frac{\norm{y_{s}}^{2}}{2s}\,.\label{eq:final-bound}
\end{equation}
Hence, it suffices to bound $P_{s}$ and $\norm{y_{s}}^{2}$ in terms
of $s$. An elementary identity couples $P_{s}$, $N_{s}$, and $\norm{y_{s}}^{2}$.

\paragraph{Coupling $P_{s}$, $N_{s}$, and $\protect\norm{y_{s}}^{2}$.}

We first recall the following identity.
\begin{lem}
\label{lem:score-identity} Let $\nu$ be a probability measure, $q:\Rn\to\R$
measurable, and define an exponential tilt $\nu_{s}(\D x)\propto e^{-sq(x)}\,\nu(\D x)$
on an interval where the tilt is well-defined. If $f_{s}$ is differentiable
and differentiation can be passed under the defining integrals, then
$\de_{s}\E_{s}f_{s}=\E_{s}[\partial_{s}f_{s}]-\cov_{s}(f_{s},q)$,
where the identity is interpreted entrywise when $f_{s}$ is vector-
or matrix-valued.
\end{lem}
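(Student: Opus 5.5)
This is the standard differentiation-under-the-integral computation for an exponential family. Write the normalizing constant as
\[
Z(s)=\int e^{-sq}\,\D\nu,
\]
so that
\[
\E_{s}f_{s}=Z(s)^{-1}\int f_{s}\,e^{-sq}\,\D\nu.
\]
I will differentiate this quotient in $s$. The hypotheses let us pass $\partial_{s}$ inside both integrals, so the computation is routine.

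First, the derivative of the normalizer is
\[
\partial_{s}Z(s)=-\int q\,e^{-sq}\,\D\nu=-Z(s)\,\E_{s}q .
\]
Consequently,
\[
\partial_{s}\log Z(s)=-\E_{s}q .
\]

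Second, the product rule applied to the numerator gives
\[
\partial_{s}\int f_{s}e^{-sq}\,\D\nu=\int(\partial_{s}f_{s})\,e^{-sq}\,\D\nu-\int f_{s}\,q\,e^{-sq}\,\D\nu .
\]
Dividing by $Z(s)$, and using $\partial_{s}(1/Z)=-Z'/Z^{2}$ for the other factor, we obtain
\[
\partial_{s}\E_{s}f_{s}=\E_{s}[\partial_{s}f_{s}]-\E_{s}[f_{s}q]+\E_{s}f_{s}\,\E_{s}q .
\]
The last two terms together equal $-\cov_{s}(f_{s},q)$, which is the claimed identity.

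For vector- or matrix-valued $f_{s}$, the same argument applies to each coordinate separately, which gives the entrywise statement.

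The only point needing care is justifying the interchange of derivative and integral. The statement takes this as an assumption, so nothing more is required here. In later applications, $f_{s}$ will be a polynomial, $q$ a nonnegative quadratic, $s\ge0$, and $\nu$ logconcave, hence with exponential tails. In that setting the interchange follows from dominated convergence on compact $s$-intervals: for $s$ in such an interval, $|f_{s}|\,q\,e^{-sq}$ and $|\partial_{s}f_{s}|\,e^{-sq}$ are bounded by polynomials, which are integrable against $\nu$.
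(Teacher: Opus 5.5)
Your proof is correct and is essentially the paper's argument. The paper organizes the same differentiation around the score $\partial_{s}\log\nu_{s}=-(q-\E_{s}q)$, whose $\nu_{s}$-mean is zero, while you expand the quotient $Z(s)^{-1}\int f_{s}e^{-sq}\,\D\nu$ directly; the two computations coincide, and your dominated-convergence remark is a harmless extra, since the lemma assumes the interchange.
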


\begin{proof}
Recall an elementary property of the score function: $\E_{s}[\partial_{s}\log\nu_{s}]=\int\partial_{s}\nu_{s}=0$
due to $\partial_{s}\nu_{s}=\nu_{s}\,\partial_{s}\log\nu_{s}$. Differentiating
the expectation under the integral gives
\[
\de_{s}\E_{s}f_{s}=\int(\partial_{s}f_{s})\,\nu_{s}+\int f_{s}\,\partial_{s}\nu_{s}=\E_{s}[\partial_{s}f_{s}]+\E_{s}[f_{s}\,\partial_{s}\log\nu_{s}]=\E_{s}[\partial_{s}f_{s}]+\cov_{s}(f_{s},\de_{s}\log\nu_{s})\,,
\]
where the last equality follows from $\E_{s}[\partial_{s}\log\nu_{s}]=0$.
The claim follows from $\de_{s}\log\nu_{s}=-(q-\E_{s}q)$.
\end{proof}

Using this, $\de_{s}\E_{s}q=-\var_{\nu_{s}}q$. Since $\E_{s}q=\tr M_{s}+\norm{y_{s}}^{2}$
and $\E_{0}q=\tr M_{0}$,
\begin{equation}
\Delta_{s}:=\int^{s}_{0}\var_{\nu_{r}}q\,\D r=\E_{0}q-\E_{s}q=N_{s}-P_{s}-\norm{y_{s}}^{2}\,.\label{eq:unshifted-energy}
\end{equation}
Since $\Delta_{s}\geq0$ (as $\var_{\nu_{r}}q\ge0$), we have $P_{s}+\norm{y_{s}}^{2}\leq N_{s}$.
It remains to bound $N_{s}$ in terms of $s$.

For an index with $\lambda_{i}(s)<\lambda_{i}(0)$, the elementary
inequality $1-r\leq-\log r$ for $0<r\leq1$ gives $\lambda_{i}(0)-\lambda_{i}(s)\leq\lambda_{i}(0)\,\abs{\log\frac{\lambda_{i}(s)}{\lambda_{i}(0)}}$.
By Cauchy--Schwarz over the eigenvalues, 
\begin{equation}
N_{s}\leq\Bpar{\sum_{i}\lambda_{i}(0)^{2}}^{1/2}\Bpar{\sum_{i}\log^{2}\frac{\lambda_{i}(s)}{\lambda_{i}(0)}}^{1/2}\leq\bpar{\tr(M^{2}_{0})}^{1/2}\Bpar{\sum_{i}\log^{2}\frac{\lambda_{i}(s)}{\lambda_{i}(0)}}^{1/2}\,.\label{eq:unshifted-deficit}
\end{equation}
To obtain an explicit factor of $s$ on the right-hand side, we apply
Cauchy--Schwarz in time:
\begin{equation}
\sum_{i}\log^{2}\frac{\lambda_{i}(s)}{\lambda_{i}(0)}=\sum_{i}\Bpar{\int^{s}_{0}\partial_{r}\log\frac{\lambda_{i}(r)}{\lambda_{i}(0)}\,\D r}^{2}\leq s\int^{s}_{0}\sum_{i}\bpar{\partial_{r}\log\lambda_{i}(r)}^{2}\,\D r\,.\label{eq:sum-log2}
\end{equation}

\paragraph{Spectral analysis of $M_{s}$.}

We now analyze how fast each $\lda_{i}$ changes:
\begin{lem}
Let $u_{i}=u_{i}(s)$ be a suitable unit eigenvector of $M_{s}$ such
that $M_{s}u_{i}=\lda_{i}(s)\,u_{i}$, and let $Z_{i}:=u^{\T}_{i}Z$.
Then,
\begin{equation}
\partial_{s}\log\lambda_{i}(s)=-\cov_{s}(Z^{2}_{i},q)\,.\label{eq:unshifted-eigenvalue-derivative}
\end{equation}
\end{lem}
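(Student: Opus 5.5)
We will differentiate the eigenvalue equation $\lambda_i(s)=u_i^{\T}M_su_i$ using first-order (Hellmann--Feynman) perturbation theory. The covariance derivative itself comes from Lemma~\ref{lem:score-identity}. The natural ``suitable'' choice is a family of unit eigenvectors chosen analytically in $s$ (Kato--Rellich). Since $s\mapsto M_s$ is real-analytic on $(0,\infty)$, this gives differentiable eigenvalue branches; at isolated crossings we may relabel, and the identity holds for a.e.\ $s$, which is all that \eqref{eq:sum-log2} needs. Because $u_i$ is a unit vector, $\partial_s(u_i^{\T}u_i)=0$. Differentiating $u_i^{\T}M_su_i$ and using $M_su_i=\lambda_iu_i$ kills the terms involving $\partial_su_i$, leaving
\[
\partial_s\lambda_i(s)=u_i^{\T}(\partial_sM_s)u_i\,.
\]

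Next we compute $\partial_sM_s=M^{1/2}(\partial_sB_s)M^{1/2}$. We write $B_s=\E_s[XX^{\T}]-b_sb_s^{\T}$ and apply Lemma~\ref{lem:score-identity} to $f=XX^{\T}$ and $f=X$, neither of which depends on $s$. This gives
\[
\partial_s\E_s[XX^{\T}]=-\cov_s(XX^{\T},q),\qquad \partial_sb_s=-\cov_s(X,q)\,.
\]
Hence
\[
\partial_sB_s=-\cov_s(XX^{\T},q)+\cov_s(X,q)\,b_s^{\T}+b_s\,\cov_s(X,q)^{\T}=-\cov_s\bigl((X-b_s)(X-b_s)^{\T},q\bigr),
\]
where the last equality uses that the constant $b_sb_s^{\T}$ has zero covariance with $q$. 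It follows that
\[
\partial_s\lambda_i=-\cov_s\bigl((u_i^{\T}M^{1/2}(X-b_s))^2,q\bigr)\,.
\]

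Finally we express this in the isotropic coordinates $Z=M_s^{-1/2}M^{1/2}(X-b_s)$. Since $M_s^{1/2}u_i=\lambda_i^{1/2}u_i$, we have
\[
u_i^{\T}M^{1/2}(X-b_s)=u_i^{\T}M_s^{1/2}Z=\lambda_i^{1/2}Z_i\,.
\]
Therefore $\partial_s\lambda_i=-\lambda_i\cov_s(Z_i^2,q)$, and dividing by $\lambda_i>0$ gives \eqref{eq:unshifted-eigenvalue-derivative}. Positivity of $\lambda_i$ holds because $M\succ0$ and $\nu_s$ is full-dimensional, so $B_s\succ0$.

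Passing differentiation under the integral is justified by the exponential tails of logconcave measures together with $e^{-sq}\le1$. The only delicate step is the smoothness and labeling of the eigen-branches at eigenvalue crossings. We handle it with analytic perturbation theory for the analytic symmetric family $M_s$. Alternatively, one can note that $\sum_i(\partial_s\log\lambda_i)^2$ is basis-invariant within eigenspaces, provided $u_i$ is chosen to diagonalize the compressed derivative on each eigenspace.
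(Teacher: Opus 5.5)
Your proposal is correct and follows essentially the same route as the paper: Hellmann--Feynman via a Kato-type differentiable eigenbasis, the covariance derivative $\partial_sB_s=-\cov_s((X-b_s)(X-b_s)^{\T},q)$ from Lemma~\ref{lem:score-identity}, and the substitution $u_i^{\T}M^{1/2}(X-b_s)=\sqrt{\lambda_i}\,Z_i$. The only cosmetic difference is how you get $\partial_sB_s$: you differentiate $\E_s[XX^{\T}]$ and $b_s$ separately, whereas the paper applies the score identity directly to the $s$-dependent centered outer product and notes that the $\partial_sb_s$ terms vanish.
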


\begin{proof}
By Lemma~\ref{lem:score-identity} for the centered covariance,
\begin{equation}
B_{s}'=\E_{s}\bbrack{\de_{s}\bpar{(X-b_{s})(X-b_{s})^{\T}}}-\cov_{s}\bpar{(X-b_{s})(X-b_{s})^{\T},q}=-\cov_{s}\bpar{(X-b_{s})(X-b_{s})^{\T},q}\,,\label{eq:unshifted-covariance-derivative}
\end{equation}
where the terms containing $\partial_{s}b_{s}$ vanish as $\E_{s}(X-b_{s})=0$.
Since $M_{s}=M^{1/2}B_{s}M^{1/2}$, \eqref{eq:unshifted-covariance-derivative}
implies
\begin{equation}
\partial_{s}M_{s}=-M^{1/2}\cov_{s}\bpar{(X-b_{s})(X-b_{s})^{\T},q}\,M^{1/2}\,.\label{eq:Ms-derivative}
\end{equation}

By perturbation theory for symmetric matrix paths \cite[Theorem 5.4 and Theorem 6.8]{Kato95perturbation},
the eigenvalues $\lambda_{i}(s)$ of $M_{s}$ may be labeled differentiably,
and at each $s$, one may choose an orthonormal eigenbasis $(u_{i})^{n}_{i=1}$
of $M_{s}$ such that $\partial_{s}\lambda_{i}(s)=u^{\T}_{i}(\partial_{s}M_{s})u_{i}$.
Hence, by \eqref{eq:Ms-derivative},
\begin{equation}
\partial_{s}\lambda_{i}(s)=-\cov_{s}\bbrace{\bpar{u^{\T}_{i}M^{1/2}(X-b_{s})}^{2},q}\,.\label{eq:deriv-lda}
\end{equation}
Using $M^{1/2}(X-b_{s})=M^{1/2}_{s}Z$ and $M^{1/2}_{s}u_{i}=\sqrt{\lambda_{i}(s)}\,u_{i}$,
\[
u^{\T}_{i}M^{1/2}(X-b_{s})=u^{\T}_{i}M^{1/2}_{s}Z=\sqrt{\lambda_{i}(s)}\,Z_{i}\,.
\]
Substituting this into \eqref{eq:deriv-lda} yields $\partial_{s}\lambda_{i}(s)=-\lambda_{i}(s)\cov_{s}(Z^{2}_{i},q)$,
which completes the proof.
\end{proof}

We use this identity to collect bounds on the key quantities (such
as $M_{s}$ and $y_{s}$) along the path indexed by $s$, from which
the theorem will follow. Recall that $\nu_{s}\propto\exp(-sx^{\T}Mx)\,\nu$
is a Gaussian tilt of an isotropic logconcave probability measure
$\nu$ in $\Rn$ for $s\geq0$, and that $b_{s}=\E_{s}X$ and $B_{s}=\cov_{s}X$
are its barycenter and covariance, respectively.
\begin{lem}
[Properties of $M_{s}$ and $y_{s}$]\label{lem:property-ms-ys}
The barycenter $y_{s}=M^{1/2}b_{s}$ and covariance $M_{s}=M^{1/2}B_{s}M^{1/2}$
satisfy the following bounds for $s\geq0$:
\begin{itemize}
\item $P_{s}+\norm{y_{s}}^{2}\le N_{s}\leq s\cdot\tr(M^{2}_{0})\,\msf Q_{n}$
\item $\tr(M^{2}_{s})\leq\tr(M^{2}_{0})\,(1+\msf Q_{n})$
\item $y^{\T}_{s}M_{s}y_{s}\leq\half\,\tr(M^{2}_{0})\,\msf Q_{n}$
\end{itemize}
\end{lem}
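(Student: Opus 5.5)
The plan is to bound the log-spectrum integral in \eqref{eq:sum-log2} by $\Delta_s$, obtain $N_s\lesssim s$, and then read off the three bullets from \eqref{eq:unshifted-energy} and \eqref{eq:final-bound}.

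\textbf{Step 1: bound the eigenvalue derivatives.} By \eqref{eq:unshifted-eigenvalue-derivative}, $\partial_r\log\lambda_i(r)=-\cov_r(Z_i^2,q)$, where $Z$ is isotropic under $\nu_r$ and $(u_i)$ is an orthonormal basis. Introduce the duality variable
\[
D=\sum_i a_i\bigl(Z_i^2-1\bigr),\qquad a_i:=-\cov_r(Z_i^2,q).
\]
Then $\sum_i a_i^2=-\cov_r(D,q)\le\sqrt{\var_r D}\,\sqrt{\var_r q}$. Now $D=Z^{\T}AZ-\tr A$ with $A=\sum_i a_iu_iu_i^{\T}$, and $\nu_r$ is logconcave (a quadratic tilt of a logconcave measure), so \eqref{eq:QF} applied to the isotropic law of $Z$ gives $\var_r D\le\cQ_n\tr(A^2)=\cQ_n\sum_i a_i^2$. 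Combining the two estimates,
\[
\sum_i\bigl(\partial_r\log\lambda_i(r)\bigr)^2\le\cQ_n\var_{\nu_r}q .
\]
This duality step is the key point, and the only place \eqref{eq:QF} is used on the tilted measures.

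\textbf{Step 2: close the bound on $N_s$.} Inserting Step 1 into \eqref{eq:sum-log2} gives $\sum_i\log^2(\lambda_i(s)/\lambda_i(0))\le s\,\cQ_n\Delta_s$. Then \eqref{eq:unshifted-deficit} yields
\[
N_s^2\le s\,\cQ_n\tr(M_0^2)\,\Delta_s .
\]
From \eqref{eq:unshifted-energy}, $\Delta_s=N_s-P_s-\norm{y_s}^2\le N_s$. Hence $N_s\le s\,\cQ_n\tr(M_0^2)$; the case $N_s=0$ is trivial. Together with $P_s+\norm{y_s}^2\le N_s$ from \eqref{eq:unshifted-energy}, this proves the first bullet.

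\textbf{Step 3: the remaining bullets.} Both follow from \eqref{eq:final-bound}:
\[
\tr(M_s^2)\le\tr(M_0^2)+\frac{P_s}{s}\le\tr(M_0^2)(1+\cQ_n),\qquad y_s^{\T}M_sy_s\le\frac{\norm{y_s}^2}{2s}\le\tfrac12\,\cQ_n\tr(M_0^2).
\]
At $s=0$ the bounds hold directly, since $M_0=M$ and $y_0=0$.

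\textbf{Main obstacle: technical justification.} Three points need care.
\begin{itemize}
\item The eigenvalue labeling must be differentiable (Kato), and $\log\lambda_i(r)$ must be absolutely continuous on $[0,s]$, so that the fundamental theorem of calculus in \eqref{eq:sum-log2} applies. This requires $\lambda_i>0$, which holds for $M\succ0$.
\item Differentiation under the integral must be justified; logconcave tails make all the needed moments finite.
\item The Step 1 inequality requires that $\cov_r(D,q)$ is an honest covariance, i.e.\ $D$ is a fixed function for fixed $r$. This holds because $A$ is frozen at time $r$.
\end{itemize}
The case $M\succeq0$ singular is handled afterwards by approximation with $M+\epsilon I$.
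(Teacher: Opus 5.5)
Your proposal is correct and follows the paper's proof essentially step for step. The paper bounds $\bigl|\sum_i\alpha_i\,\partial_s\log\lambda_i(s)\bigr|=\bigl|\cov_s(\sum_i\alpha_iZ_i^2,q)\bigr|\le\sqrt{\cQ_n}\,\norm{\alpha}\sqrt{\var_{\nu_s}q}$ for every unit $\alpha$, which is your duality step with $\alpha\propto a$; it then closes $N_s^2\le s\,\cQ_n\tr(M_0^2)\,\Delta_s\le s\,\cQ_n\tr(M_0^2)\,N_s$ and reads off the other two bullets from \eqref{eq:final-bound}, exactly as you do.
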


\begin{proof}
For every unit vector $\alpha\in\R^{n}$, by Cauchy--Schwarz followed
by the definition of $\cQ_{n}$ for the quadratic form $\sum_{i}\alpha_{i}Z^{2}_{i}$,
\[
\Babs{\sum_{i}\alpha_{i}\,\partial_{s}\log\lambda_{i}(s)}=\Babs{\cov_{s}\Bpar{\sum_{i}\alpha_{i}Z^{2}_{i},q}}\leq\sqrt{\cQ_{n}}\,\norm{\alpha}\sqrt{\var_{\nu_{s}}q}\,.
\]
Hence, $\sum_{i}[\partial_{s}\log\lambda_{i}(s)]^{2}\leq\cQ_{n}\var_{\nu_{s}}q$.
Substituting this into \eqref{eq:sum-log2},
\begin{equation}
\sum_{i}\log^{2}\frac{\lambda_{i}(s)}{\lambda_{i}(0)}\le s\int^{s}_{0}\sum_{i}\bpar{\partial_{r}\log\lambda_{i}(r)}^{2}\,\D r\leq s\cQ_{n}\int^{s}_{0}\var_{\nu_{r}}q\,\D r=s\cQ_{n}\Delta_{s}\,.\label{eq:unshifted-action}
\end{equation}
Using $\Delta_{s}\leq N_{s}$ and substituting this into \eqref{eq:unshifted-deficit},
$N^{2}_{s}\leq\tr(M^{2}_{0})\sum_{i}\log^{2}\frac{\lambda_{i}(s)}{\lambda_{i}(0)}\leq s\tr(M^{2}_{0})\,\cQ_{n}N_{s}$.
Hence,
\[
(P_{s}+\norm{y_{s}}^{2}\leq)N_{s}\leq s\cdot\tr(M^{2}_{0})\,\cQ_{n}\,.
\]
Thus, $P_{s},N_{s},\norm{y_{s}}^{2}=O(s)$ as desired. Substituting
this into \eqref{eq:final-bound}, we obtain $\tr(M^{2}_{s})\leq\tr(M^{2}_{0})\,(1+\cQ_{n})$
and $y^{\T}_{s}M_{s}y_{s}\leq\frac{1}{2}\,\tr(M^{2}_{0})\,\cQ_{n}$.
\end{proof}

Substituting the second and third bounds into \eqref{eq:unshifted-target},
we obtain, for every $s>0$,
\[
\var_{s}q\le(2\cQ^{2}_{n}+6\cQ_{n})\,\tr(M^{2})\,,
\]
which completes the proof of Theorem~\ref{thm:centered-anisotropic}.
When $M\succeq0$, we apply the result to $M+\veps I_{n}$ and let
$\veps\downarrow0$.
\begin{rem}
[Collective control of eigenvalues] The proof controls the eigenvalues
of $M_{s}$ \emph{collectively}, rather than individually. For example,
considering the total continuous change in the trace (specifically,
$N_{s}$ and $P_{s}$ terms), we obtain a useful bound. If we attempt
to do this for each eigenvalue, then we may obtain some bound on each,
but summing them up in a na\"ive way would lead to a factor of $n$.
A further consequence is the second item of Lemma~\ref{lem:property-ms-ys}
(i.e., the stability of the Schatten $2$-norm along the Gaussian
tilt):
\[
\tr(M^{2}_{s})\leq\tr(M^{2}_{0})\,(1+\msf Q_{n})\,.
\]
As noted earlier, this type of bound does not hold for individual
eigenvalues such as the largest eigenvalue (i.e., the operator norm).
\end{rem}

\subsection{Stability under shifted Gaussian tilts\label{sec:shifted-tilts}}

We now consider the shifted quadratic: $q_{u}(x)=\norm{M^{1/2}x+u}^{2}$.
In this case, the initial transformed mean is $y_{0}=u$, and we retain
the same variance decomposition:
\begin{equation}
\sqrt{\var_{s}q}\leq\sqrt{\cQ_{n}\tr(M^{2}_{s})}+2\sqrt{y^{\T}_{s}M_{s}y_{s}}\,.\label{eq:var-decomp}
\end{equation}
The identity becomes $\Delta_{s}=N_{s}-P_{s}+\norm u^{2}-\norm{y_{s}}^{2}$,
so $P_{s}+\norm{y_{s}}^{2}\leq N_{s}$ is replaced by the weaker bound
$P_{s},\norm{y_{s}}^{2}\leq\norm u^{2}+N_{s}$. Applying Brascamp--Lieb
again, this extra $\norm u^{2}$ term yields $\norm u^{2}/s$. The
resulting bound is useful for moderate and large $s$ but vacuous
at the untilted endpoint (i.e., $s=0$).

The proof follows the same steps and separately treats the regime
near $s=0$, solving the Gr\"onwall-type inequality $\abs{v_{s}'}\lesssim v^{3/2}_{s}$.
\begin{proof}
[Proof of Theorem~\ref{thm:anisotropic}] Assume $M\succ0$ and
abbreviate $q=q_{u}$. Define $v_{s}:=\var_{s}q$,  $G:=\tr(M^{2})+u^{\T}Mu$,
and
\[
b_{s}:=\E_{s}X\,,\qquad B_{s}:=\cov_{s}X\,,\qquad M_{s}:=M^{1/2}B_{s}M^{1/2}\,,\qquad y_{s}:=M^{1/2}b_{s}+u\,.
\]
By isotropy, we have $B_{0}=\Id$, $M_{0}=M$, and $y_{0}=u$.

By Lemma~\ref{lem:score-identity}, we similarly have
\[
(\E_{s}q)'=-v_{s}\,,\qquad\Delta_{s}:=\E_{0}q-\E_{s}q=\int^{s}_{0}v_{r}\,\D r\,,\qquad B_{s}'=-\cov_{s}\bpar{(X-b_{s})(X-b_{s})^{\T},q}\,.
\]
Let $\lambda_{i}(s)$ be an absolutely continuous labeling of the
eigenvalues of $M_{s}$. The calculation in the preceding subsection
gives the same identity: $\de_{s}\log\lda_{i}(s)=-\cov_{s}(Z^{2}_{i},q)$.
Hence,
\[
\sum_{i}\bpar{\de_{s}\log\lda_{i}(s)}^{2}\leq\cQ_{n}v_{s}\qquad\text{and}\qquad\sum_{i}\log^{2}\frac{\lambda_{i}(s)}{\lambda_{i}(0)}\leq\cQ_{n}s\Delta_{s}\,.
\]

Define $N_{s}$ and $P_{s}$ as before. Since $\E_{s}q=\tr M_{s}+\norm{y_{s}}^{2}$,
\[
\Delta_{s}=N_{s}-P_{s}+\norm u^{2}-\norm{y_{s}}^{2}\qquad\text{and}\qquad\Delta_{s},P_{s},\norm{y_{s}}^{2}\leq\norm u^{2}+N_{s}\,.
\]
Then, $N_{s}\leq\sqrt{\cQ_{n}\tr(M^{2})\,s\Delta_{s}}\leq\sqrt{\cQ_{n}\tr(M^{2})\,s\,(\norm u^{2}+N_{s})}$.
Solving this,
\[
N_{s}\le\cQ_{n}\tr(M^{2})\,s+\norm u\sqrt{\cQ_{n}\tr(M^{2})\,s}\,.
\]
Combining this with the bounds on $P_{s}$ and $y_{s}$ gives
\[
\tr(M^{2}_{s})\lesssim(1+\cQ_{n})\tr(M^{2})+\frac{\norm u^{2}}{s}\qquad\text{and}\qquad y^{\T}_{s}M_{s}y_{s}\lesssim\cQ_{n}\tr(M^{2})+\frac{\norm u^{2}}{s}\,.
\]
Substituting these bounds into \eqref{eq:var-decomp},
\begin{equation}
v_{s}\lesssim(\cQ_{n}+\cQ^{2}_{n})\tr(M^{2})+(1+\cQ_{n})\,\frac{\norm u^{2}}{s}\,.\label{eq:general-bound}
\end{equation}
The last term is the only singular term introduced by the shift, and
it becomes vacuous as $s$ goes to $0$. However, at $s=0$, the same
decomposition gives $v_{0}\leq(\cQ_{n}+4)\,G$. To combine the two
regimes, we propagate the endpoint bound to small $s$.

\paragraph{Handling small $s$.}

By Lemma~\ref{lem:score-identity}, $v_{s}'=-\E_{s}[(q-\E_{s}q)^{3}]$.
By the polynomial reverse H\"older inequality \cite[Theorem 7]{CW01distributional},
$\abs{v_{s}'}\lesssim v^{3/2}_{s}$. Let $\bar{v}:=(\cQ_{n}+4)\,G$
and $s_{0}=c\bar{v}^{-1/2}$ for sufficiently small $c>0$. Since
$M\succ0$ and $\nu_{s}$ is full-dimensional, $v_{s}>0$. Integrating
the differential inequality, for $s\in[0,s_{0}]$, we have $v_{s}\leq4\bar{v}$.
Since $u^{\T}Mu\leq\norm u^{2}\norm M_{\op}\leq\norm u^{2}\sqrt{G}$,
we have $G=\tr(M^{2})+u^{\T}Mu\leq\tr(M^{2})+\norm u^{2}\sqrt{G}$.
Hence, $v_{s}\lesssim(1+\msf Q_{n})\,(\tr(M^{2})+\norm u^{2}\sqrt{G})$. 

For $s\geq s_{0}$, the pointwise bound \eqref{eq:general-bound}
yields
\[
v_{s}\lesssim(\cQ_{n}+\cQ^{2}_{n})\tr(M^{2})+(1+\cQ_{n})\,\norm u^{2}\sqrt{(\cQ_{n}+4)\,G}\,.
\]
Therefore,
\[
v_{s}\lesssim(1+\msf Q_{n})^{2}\,\bpar{\tr(M^{2})+\norm u^{2}\sqrt{G}}\,.
\]
This proves the claim when $M\succ0$. For $M\succeq0$, apply the
result to $M+\varepsilon I_{n}$ and let $\varepsilon\downarrow0$;
dominated convergence follows from the finite fourth moment of $\nu$.
\end{proof}

We now transfer the shifted stability bound to the original coordinates.
\begin{proof}
[Proof of Theorem~\ref{thm:main-stability}] Since $\pi$ is full-dimensional,
$\Sigma\succ0$. Let $Z=\Sigma^{-1/2}(X-m)$ for $X\sim\pi$, and
let $\nu$ be the law of $Z$, which is isotropic and logconcave.
Under the same change of variables, the tilted law $\pi_{t}$ becomes
\[
\nu_{s}(\D z)\propto\exp(-s\,\norm{\Sigma^{1/2}z+m}^{2})\,\nu(\D z)\qquad\text{for }s=(2t)^{-1}\,,
\]
and $\norm X^{2}=\norm{\Sigma^{1/2}Z+m}^{2}$. Applying Theorem~\ref{thm:anisotropic}
with $M=\Sigma$ and $u=m$ proves that for $t>0$,
\[
\var_{\pi_{t}}(\norm X^{2})\lesssim(1+\msf Q_{n})^{2}\,\bpar{\tr(\Sigma^{2})+\norm m^{2}\sqrt{\tr(\Sigma^{2})+m^{\T}\Sigma m}}\,.
\]
To obtain the simpler scales, note that 
\[
\tr(\Sigma^{2})\leq\Lambda\tr\Sigma\leq R^{2}\Lambda\,,\qquad\tr(\Sigma^{2})+m^{\T}\Sigma m\leq\Lambda\,(\tr\Sigma+\norm m^{2})=R^{2}\Lambda\,,\qquad\norm m^{2}\leq L\,.
\]
Consequently,
\[
\tr(\Sigma^{2})+\norm m^{2}\sqrt{\tr(\Sigma^{2})+m^{\T}\Sigma m}\leq R^{2}\Lambda+LR\sqrt{\Lambda}\leq2\min\{R^{2}L,R^{3}\sqrt{\Lambda}\}\,.
\]
Here, for the $R^{2}L$ bound, the last step uses $\Lambda\leq L$
and $\sqrt{\Lambda}\leq R$; for the $R^{3}\sqrt{\Lambda}$ bound,
it uses $\sqrt{\Lambda}\leq R$ and $L\leq R^{2}$.
\end{proof}

\section{Faster annealing for uniform sampling\label{sec:uniform}}

In \S\ref{subsec:renyi-close-unif}, we relate variance stability
to R\'enyi divergence between consecutive Gaussian annealing distributions.
This determines the step size in the Gaussian variance $\sigma^{2}$
in our annealing scheme. Then, in \S\ref{subsec:unif-gc}, we give
a faster version of Gaussian cooling, proving Theorem~\ref{thm:warm-uniform}.
To go through the main calculations, the reader might finding it convenient
to think of $q$ as a fixed constant, say $2$. 

\subsection{R\'enyi closeness of annealing distributions\label{subsec:renyi-close-unif}}

For a probability measure $\nu$, let $\nu_{s}(\D x)\propto e^{-s\,\norm x^{2}}\,\nu(\D x)$
for $s\ge0$, and $F(s):=\log\int e^{-s\,\norm x^{2}}\,\nu(\D x)$.
Differentiating gives $F'(s)=-\E_{\nu_{s}}[\norm X^{2}]$ and $F''(s)=\var_{\nu_{s}}(\norm X^{2})$.
Recall $q$-R\'enyi divergence: $\msf R_{q}(\mu\mmid\nu):=\frac{1}{q-1}\log\int(\frac{\D\mu}{\D\nu})^{q}\,\D\nu$
for $q>1$.
\begin{lem}
\label{lem:renyi-exact} Let $q>1$ and $s\ge s'\ge0$. Then, 
\begin{equation}
\msf R_{q}(\nu_{s}\mmid\nu_{s'})\le\frac{q}{2}\,(s-s')^{2}\sup_{t\in[s',qs-(q-1)\,s']}\var_{\nu_{t}}(\norm X^{2})\,.\label{eq:renyi-curvature}
\end{equation}
\end{lem}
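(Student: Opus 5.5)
The plan is to write the Rényi divergence exactly in terms of the log-partition function $F$ and then apply a second-order Taylor bound. Set $\Delta:=s-s'\ge0$. Both measures are tilts of $\nu$, so the density ratio is explicit:
\[
\frac{\D\nu_{s}}{\D\nu_{s'}}(x)=\exp\bpar{-\Delta\,\norm x^{2}+F(s')-F(s)}\,.
\]
Hence
\[
\int\Bpar{\frac{\D\nu_{s}}{\D\nu_{s'}}}^{q}\D\nu_{s'}=e^{q(F(s')-F(s))}\int e^{-q\Delta\norm x^{2}}\,\nu_{s'}(\D x)=\exp\bpar{qF(s')-qF(s)+F(s'+q\Delta)-F(s')}\,.
\]
The last step uses $\int e^{-r\norm x^{2}}\,\D\nu_{s'}=e^{F(s'+r)-F(s')}$. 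Writing $s'=s-\Delta$ and $s'+q\Delta=s+(q-1)\Delta$, we obtain the exact identity
\[
(q-1)\,\msf R_{q}(\nu_{s}\mmid\nu_{s'})=F(s+(q-1)\Delta)-qF(s)+(q-1)\,F(s-\Delta)\,.
\]
This combination is a weighted second difference: the weights $1,-q,q-1$ sum to zero, and the first-moment weights satisfy $(q-1)\Delta\cdot1+(-\Delta)(q-1)=0$. So linear terms will cancel.

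Next, expand $F$ around $s$ with the integral (or Lagrange) form of Taylor's theorem. Set $V_{I}:=\sup_{t\in I}F''(t)=\sup_{t\in I}\var_{\nu_{t}}(\norm X^{2})$ with $I=[s',\,s+(q-1)\Delta]=[s',\,qs-(q-1)s']$. Then
\[
F(s+(q-1)\Delta)\le F(s)+(q-1)\Delta F'(s)+\tfrac{V_{I}}{2}(q-1)^{2}\Delta^{2}\,,\qquad F(s-\Delta)\le F(s)-\Delta F'(s)+\tfrac{V_{I}}{2}\Delta^{2}\,.
\]
Multiply the second inequality by $q-1>0$ and add it to the first. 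The $F(s)$ terms give $(1+(q-1)-q)F(s)=0$, and the $F'(s)$ terms cancel. What remains is
\[
\tfrac{V_{I}}{2}\,\Delta^{2}\bpar{(q-1)^{2}+(q-1)}=\tfrac{V_{I}}{2}\,q(q-1)\,\Delta^{2}\,.
\]
Dividing by $q-1$ gives exactly \eqref{eq:renyi-curvature}.

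The remaining work is justifying the calculus: that $F$ is finite and $C^{2}$ on the relevant interval, with $F'=-\E_{\nu_{t}}\norm X^{2}$ and $F''=\var_{\nu_{t}}(\norm X^{2})$. Finiteness is immediate for $t\ge0$, since $e^{-t\norm x^{2}}\le1$. Note also that $s-\Delta=s'\ge0$, so every point of $I$ is nonnegative. Differentiation under the integral on $[0,\infty)$ is justified by dominated convergence, given finite moments of $\norm X^{2}$ under $\nu$; this holds in the logconcave setting, and more generally whenever the statement's right-hand side is finite. At $t=0$ we use one-sided derivatives. The derivative formulas are the case $f=1$, $f=\norm x^{2}$ of Lemma~\ref{lem:score-identity}.

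I expect the main obstacle to be only this regularity bookkeeping, together with making sure the Taylor remainder is controlled over the whole interval $I$ rather than just near $s$. If the supremum is infinite, the bound is trivial.
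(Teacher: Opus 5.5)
Your proposal is correct and follows the paper's proof essentially verbatim. You derive the exact identity $(q-1)\,\msf R_{q}(\nu_{s}\mmid\nu_{s'})=F(s+(q-1)\Delta)-qF(s)+(q-1)\,F(s-\Delta)$ from the explicit density ratio, then apply two second-order Taylor bounds at $s$, controlled by $V_{I}$ over $I=[s',\,s+(q-1)\Delta]$, so that the zeroth- and first-order terms cancel. Your remarks on the regularity of $F$ (finiteness, differentiation under the integral, one-sided derivatives at $t=0$) usefully fill in details the paper leaves implicit.
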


\begin{proof}
Let $Z(s):=\int e^{-s\,\norm x^{2}}\,\D\nu(x)$. For $\Delta:=s-s'$,
the density ratio is $\frac{\D\nu_{s}}{\D\nu_{s'}}(x)=\frac{Z(s')}{Z(s)}\,e^{-\Delta\,\norm x^{2}}$,
and
\[
\int\bpar{\frac{\D\nu_{s}}{\D\nu_{s'}}}^{q}\,\D\nu_{s'}=\frac{Z(s')^{q-1}\,Z\bpar{q\,s-(q-1)\,s'}}{Z(s)^{q}}\,.
\]
Thus,
\[
\msf R_{q}(\nu_{s}\mmid\nu_{s'})=\frac{F\bpar{s+(q-1)\,\Delta}-qF(s)+(q-1)\,F(s-\Delta)}{q-1}\,.
\]
Let $I:=[s',s+(q-1)\Delta]$ and $V_{I}:=\sup_{t\in I}\var_{\nu_{t}}(\norm X^{2})=\sup_{t\in I}F''(t)$.
By Taylor expansion,
\begin{align*}
F\bpar{s+(q-1)\,\Delta} & \le F(s)+(q-1)\,\Delta F'(s)+\frac{V_{I}}{2}\,(q-1)^{2}\Delta^{2}\,,\\
F(s-\Delta) & \le F(s)-\Delta F'(s)+\frac{V_{I}}{2}\,\Delta^{2}\,.
\end{align*}
Substituting these estimates into $\msf R_{q}$,
\begin{align*}
\msf R_{q}(\nu_{s}\mmid\nu_{s'}) & \leq\frac{F(s)+(q-1)\,\Delta F'(s)+\frac{V_{I}}{2}\,(q-1)^{2}\Delta^{2}-qF(s)+(q-1)\,\bpar{F(s)-\Delta F'(s)+\frac{V_{I}}{2}\,\Delta^{2}}}{q-1}\\
 & =\frac{qV_{I}}{2}\,\Delta^{2}\,,
\end{align*}
which completes the proof.
\end{proof}

This result does not require logconcavity of $\nu$. For $\sigma^{2}>0$,
define $\mu_{\sigma^{2}}:=\nu_{1/(2\sigma^{2})}$. The preceding lemma
gives the following bounds; apply Lemma~\ref{lem:renyi-exact} with
$s=(2\sigma^{2})^{-1}$ and $s'=(2\sigma^{2}\,(1+\alpha))^{-1}$,
and take $s'=0$ for the second bound.
\begin{cor}
[Closeness]\label{cor:renyi-t} Let $\cV:=\sup_{s\ge0}\var_{\nu_{s}}(\norm X^{2})$.
For every $q>1$, $\sigma^{2}>0$, and $\alpha>0$,
\[
\msf R_{q}(\mu_{\sigma^{2}}\mmid\mu_{\sigma^{2}\,(1+\alpha)})\le\frac{q\cV}{8\sigma^{4}}\,\bpar{\frac{\alpha}{1+\alpha}}^{2}\le\frac{q\cV\alpha^{2}}{8\sigma^{4}}\qquad\text{and}\qquad\msf R_{q}(\mu_{\sigma^{2}}\mmid\nu)\le\frac{q\cV}{8\sigma^{4}}\,.
\]
In particular, the update $\sigma^{2}\mapsto\sigma^{2}\,(1+\alpha)$
with $\alpha\lesssim\sigma^{2}/\sqrt{q\cV}$ maintains $O(1)$-closeness,
and $\sigma^{2}\gtrsim\sqrt{q\cV}$ ensures terminal $O(1)$-warmness.
\end{cor}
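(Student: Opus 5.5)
The plan is to apply Lemma~\ref{lem:renyi-exact} twice, once with consecutive annealing parameters and once with the untilted endpoint, and then bound the supremum of the variance crudely by $\cV$.

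\emph{First bound.} Set $s=(2\sigma^{2})^{-1}$ and $s'=(2\sigma^{2}(1+\alpha))^{-1}$, so that $\mu_{\sigma^{2}}=\nu_{s}$, $\mu_{\sigma^{2}(1+\alpha)}=\nu_{s'}$, and $s\ge s'\ge0$. The supremum in \eqref{eq:renyi-curvature} runs over a subinterval of $[0,\infty)$, so it is at most $\cV$ by definition. For the gap we compute
\[
s-s'=\frac{1}{2\sigma^{2}}\Bpar{1-\frac{1}{1+\alpha}}=\frac{1}{2\sigma^{2}}\cdot\frac{\alpha}{1+\alpha}\,.
\]
Hence
\[
\msf R_{q}(\mu_{\sigma^{2}}\mmid\mu_{\sigma^{2}(1+\alpha)})\le\frac{q}{2}\cdot\frac{1}{4\sigma^{4}}\Bpar{\frac{\alpha}{1+\alpha}}^{2}\cV=\frac{q\cV}{8\sigma^{4}}\Bpar{\frac{\alpha}{1+\alpha}}^{2}\,.
\]
The weaker bound with $\alpha^{2}$ follows because $\alpha/(1+\alpha)\le\alpha$ for $\alpha>0$.

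\emph{Second bound.} Take $s'=0$, so that $\nu_{0}=\nu$ and $s-s'=(2\sigma^{2})^{-1}$. The same lemma gives $\frac{q}{2}\cdot\frac{1}{4\sigma^{4}}\cV=\frac{q\cV}{8\sigma^{4}}$. The interval $[0,qs]$ is again contained in $[0,\infty)$, so bounding its supremum by $\cV$ is valid.

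\emph{Consequences.} If $\alpha\le c\,\sigma^{2}/\sqrt{q\cV}$, the first bound is at most $c^{2}/8=O(1)$. If $\sigma^{2}\ge C\sqrt{q\cV}$, the second bound is at most $1/(8C^{2})$.

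The only point needing care is that the Taylor argument in Lemma~\ref{lem:renyi-exact} requires $F$ to be twice differentiable on the whole interval $[s',\,qs-(q-1)s']$. This holds because $e^{-s\norm x^{2}}$ has finite moments of all orders under $\nu$ for $s\ge0$, so differentiation under the integral is justified there. There is no real obstacle; the statement is a direct computation once the lemma is available.
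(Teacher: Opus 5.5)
Your proposal is correct and is exactly the paper's argument: apply Lemma~\ref{lem:renyi-exact} with $s=(2\sigma^{2})^{-1}$ and $s'=(2\sigma^{2}(1+\alpha))^{-1}$ (resp.\ $s'=0$), bound the supremum over the subinterval by $\cV$, and use $s-s'=\frac{1}{2\sigma^{2}}\cdot\frac{\alpha}{1+\alpha}$ (resp.\ $\frac{1}{2\sigma^{2}}$). Your regularity remark goes slightly beyond the paper. Strictly, it matters only at the endpoint $s'=0$, where one needs $\E_{\nu}\norm{X}^{4}<\infty$; this holds whenever $\cV<\infty$, and otherwise the bound is vacuous. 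For $s>0$ the weights $\norm{x}^{2k}e^{-s\norm{x}^{2}}$ are bounded, so nothing is needed there.
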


\begin{rem}
[Comparison with prior closeness results]\label{rem:comparison-closeness}
\cite[Lemma 4.5]{KV25faster} bounds $\msf R_{q}(\mu_{\sigma^{2}}\mmid\mu_{\sigma^{2}\,(1+\alpha)})$
by $qD^{2}\alpha^{2}/\sigma^{2}$ for compactly supported logconcave
$\nu$ with support diameter $D$. Later, \cite[Lemma 4.2]{KV26zeroLC}
gives the bound $q^{2}R^{2}\alpha^{2}/(2\sigma^{2})$ for the same
quantity when $\E_{\nu}[\norm X^{2}]=R^{2}\leq D^{2}$.

Our estimate~\eqref{eq:renyi-curvature} yields an improved bound.
For every $t>0$, the measure $\nu_{t}$ is $2t$-strongly logconcave.
By Brascamp--Lieb or \eqref{eq:pi} with $\cpi(\nu_{t})\leq1/(2t)$,
\[
\var_{\nu_{t}}(\norm X^{2})\leq\frac{1}{2t}\,\E_{\nu_{t}}[4\,\norm X^{2}]=\frac{2}{t}\,\E_{\nu_{t}}[\norm X^{2}]\leq\frac{2R^{2}}{t}\,,
\]
where the last inequality follows from $\E_{\nu_{t}}[\norm X^{2}]\leq\E_{\nu}[\norm X^{2}]$.
For $s=(2\sigma^{2})^{-1}$ and $s'=[2\sigma^{2}\,(1+\alpha)]^{-1}$
with $t\geq s'$, we have
\[
\msf R_{q}(\mu_{\sigma^{2}}\mmid\mu_{\sigma^{2}(1+\alpha)})\leq\frac{q}{2}\,\bpar{\frac{\alpha}{2\sigma^{2}\,(1+\alpha)}}^{2}\,\frac{2R^{2}}{s'}=\frac{qR^{2}\alpha^{2}}{2\sigma^{2}\,(1+\alpha)}\,.
\]
This improves the $q^{2}$-dependence of the second bound to $q$.
It also recovers the compact-support bound in the present convex-body
setting (due to $R^{2}\leq D^{2}$).
\end{rem}

\subsection{Gaussian cooling through thin-shell stability\label{subsec:unif-gc}}

Let $\K\subset\R^{n}$ be a convex body with $B(0,1)\subset\K$, and
let $\pi$ be uniform on $\K$. Fix a R\'enyi order $q$ and take
$\cV=\Otilde(R^{2}L\wedge R^{3}\sqrt{\Lambda})$ by Theorem~\ref{thm:main-stability}.
We propose the following annealing schedule:
\begin{itemize}
\item Let $\sigma^{2}_{\textup{start}}:=1/n$ and $\sigma^{2}_{\textup{last}}:=\sqrt{q\cV}$
be the initial and terminal variance parameters.
\item At phase $i$, the annealing distribution is $\mu_{\sigma^{2}_{i}}\propto\exp\{-\norm x^{2}/(2\sigma^{2}_{i})\}\cdot\ind_{\K}$.
For $\sigma^{2}_{i}\in[\sigma^{2}_{\textup{start}},\sigma^{2}_{\textup{last}}]$,
define $\sigma^{2}_{i+1}$ recursively by
\begin{equation}
\sigma^{2}_{i+1}:=\sigma^{2}_{i}\,\bpar{1+\frac{\sigma^{2}_{i}}{\sqrt{q\cV}}}\,.\label{eq:uniform-schedule}
\end{equation}
\end{itemize}
By Corollary~\ref{cor:renyi-t}, consecutive distributions are $1/4$-close
at $\msf R_{2q}$.

Next, we recall a sampling algorithm for truncated Gaussian distributions
\cite{KV26zeroLC}. Starting from $x_{0}\sim\mu_{0}$, with step size
$h>0$ and threshold $\tau$, set $s:=\sigma^{2}/(h+\sigma^{2})\in(0,1)$.
One iteration of $\psgauss$ consists of the following two steps:
\begin{itemize}
\item {[}Forward{]} Draw $y_{k+1}\sim\msf N(x_{k},hI_{n})$.
\item {[}Backward{]} Repeatedly draw $x_{k+1}\sim\msf N(sy_{k+1},shI_{n})$
until $x_{k+1}\in\K$. If the number of rejection trials exceeds $\tau$,
declare failure.
\end{itemize}
This sampler follows the annealing path efficiently because the previous
annealing distribution is an $O(1)$-warm start for the next target
distribution.
\begin{prop}
[Truncated-Gaussian sampler]\label{prop:truncated-gaussian-relay}
Let $\mu_{\sigma^{2}}\propto e^{-\norm x^{2}/(2\sigma^{2})}\,\ind_{\K}(x)$
for a convex body $\K\supset B(0,1)$ given by a membership oracle.
Given $\varepsilon>0$, $\eta\in(0,1/2)$, $q\ge q_{0}\ge2$, and
an initial distribution $\mu_{0}$ with $M_{q_{0}}:=\norm{\nicefrac{\D\mu_{0}}{\D\mu_{\sigma^{2}}}}_{L^{q_{0}}(\mu_{\sigma^{2}})}\leq10$,
initialize $\psgauss$ with $\mu_{0}$ and iterate it $N=\Otilde(n^{2}\sigma^{2}\log\frac{q}{\veps}\log\frac{1}{\eta})$
times with $h=(10n^{2}\log Z)^{-1}$ and $\tau=Z^{2}\log^{3}Z$, where
$Z=16NM_{2}/\eta$. Assume $M_{q_{0}}\leq10$ and $q_{0}\geq1+\log\tau$.
\begin{itemize}
\item The latter condition is ensured by $q_{0}\ge2\vee O(\log\frac{N}{\eta})$.
\item With probability at least $1-\eta$, $\psgauss$ completes all $N$
iterations without failure. Conditioned on this event, its output
has law $\nu$ satisfying $\msf R_{q}(\nu\mmid\mu_{\sigma^{2}})\le\varepsilon+2\log\frac{1}{1-\eta}$,
using $\Otilde(n^{2}\sigma^{2}\log\frac{q}{\veps}\log^{5}\frac{1}{\eta})$
membership queries in expectation. The corresponding uncapped chain
(i.e., $\tau=\infty$) has output law $\bar{\nu}$ satisfying $\msf R_{q}(\bar{\nu}\mmid\mu_{\sigma^{2}})\le\varepsilon$.
\end{itemize}
\end{prop}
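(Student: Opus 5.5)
This proposition is essentially a restatement of the truncated-Gaussian proximal sampler guarantee from \cite{KV26zeroLC}, specialized to our normalization. The plan is therefore to reduce it to three standard ingredients: contraction of the ideal (uncapped) proximal sampler in $\msf R_{q}$ via hypercontractivity, a per-step bound on the expected number of rejection trials, and a union/coupling argument for the capped chain.

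\emph{Step 1: the uncapped chain.} We view $\psgauss$ with $\tau=\infty$ as the exact proximal sampler for $\mu_{\sigma^{2}}$. Indeed, the backward step samples exactly from $\mu_{\sigma^{2}}^{X\mid Y=y}\propto\exp(-\norm x^{2}/(2\sigma^{2})-\norm{x-y}^{2}/(2h))\,\ind_{\K}$, because completing the square gives $\msf N(sy,shI_{n})$ restricted to $\K$. Since $\mu_{\sigma^{2}}$ is $\sigma^{-2}$-strongly logconcave, Bakry--\'Emery gives $\clsi(\mu_{\sigma^{2}})\le\sigma^{2}$. The proximal sampler then satisfies the hypercontractive R\'enyi contraction of \cite{KV26zeroLC}. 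First, hypercontractivity lifts an $L^{q_{0}}$-warm start to order $q$ after $O(\sigma^{2}h^{-1}\log q)$ steps. Second, $\msf R_{q}$ decays geometrically at rate $(1+h/\sigma^{2})^{-1}$ per step. With $h\asymp(n^{2}\log Z)^{-1}$, this gives $N=\Otilde(n^{2}\sigma^{2}\log\frac{q}{\veps})$ iterations for $\msf R_{q}(\bar\nu\mmid\mu_{\sigma^{2}})\le\veps$.

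\emph{Step 2: expected rejection cost.} For $x\sim\mu_{\sigma^{2}}$, the expected number of trials at a given step equals $\E_{y}[1/\ell(y)]$, where $\ell(y)$ is the $\K$-mass of $\msf N(sy,shI_{n})$. This is the step we expect to be the main obstacle. We would use the standard local-conductance argument from \cite{KVZ26INO,KV26zeroLC}: since $B(0,1)\subset\K$ and $h\lesssim n^{-2}$, the fraction of points with small local conductance is controlled by a volume-of-boundary-layer estimate for convex bodies. The Gaussian factor only helps here, because $s<1$ contracts toward the origin, which lies in $\K$. The resulting bound is that the probability under stationarity that $1/\ell(y)\ge t$ decays like $\frac{1}{t}\log t$-type tails. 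The expected count truncated at $\tau$ is then $O(\log^{O(1)}\tau)$ per step.

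To pass from stationarity to the actual iterates $\mu_{k}$, we use H\"older's inequality with exponent $q_{0}$ together with $M_{q_{0}}\le10$. Warmness at $L^{q_{0}}$ propagates by data processing, and the condition $q_{0}\ge1+\log\tau$ makes the H\"older loss $\tau^{1/q_{0}}=O(1)$. Summing over $N$ steps gives $\Otilde(N\log^{O(1)}(1/\eta))$ queries. Tracking the logs gives the stated $\log^{5}$.

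\emph{Step 3: failure probability and conditioning.} The probability of exceeding $\tau$ at a step is at most the mass of $\{1/\ell\gtrsim\tau\}$ under $\mu_{k}$. By Cauchy--Schwarz with $M_{2}$, this is at most $M_{2}\sqrt{\Pr_{\mu_{\sigma^{2}}}(\cdot)}\lesssim M_{2}\sqrt{\log\tau/\tau}$. With $\tau=Z^{2}\log^{3}Z$ and $Z=16NM_{2}/\eta$, a union bound over $N$ steps gives total failure probability at most $\eta$.

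The capped chain coincides with the uncapped chain on the success event $E$. Hence the conditioned law is $\nu=\bar\nu(\cdot\cap E)/\Pr(E)\le\bar\nu/(1-\eta)$. Therefore $\msf R_{q}(\nu\mmid\mu_{\sigma^{2}})\le\msf R_{q}(\bar\nu\mmid\mu_{\sigma^{2}})+\frac{q}{q-1}\log\frac{1}{1-\eta}\le\veps+2\log\frac{1}{1-\eta}$, which completes the proof.
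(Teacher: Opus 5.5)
The paper gives no proof of this proposition; it imports it from \cite{KV26zeroLC}. Your Steps~2--3 fit the stated parameters: H\"older with $q_0\ge 1+\log\tau$ gives $\tau^{1/q_0}\le e$; Cauchy--Schwarz with $M_2$, together with $\tau=Z^2\log^3 Z$ and $Z=16NM_2/\eta$, gives per-step failure $\lesssim\eta/N$; and conditioning on success costs $\frac{q}{q-1}\log\frac{1}{1-\eta}$. The genuine gap is in Step~1.

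There you claim that $\msf R_q$ contracts by the factor $(1+h/\sigma^2)^{-1}$ per step. No such result is available. Under $\clsi(\mu_{\sigma^2})\le\sigma^2$, the known bound for the proximal sampler is $\msf R_q(\mu P\mmid\mu_{\sigma^2})\le(1+h/\sigma^2)^{-2/q}\,\msf R_q(\mu\mmid\mu_{\sigma^2})$. A $q$-uniform rate is in fact impossible. Letting $q\to\infty$, it would make $\msf R_\infty$ contract geometrically at a rate independent of $\K$. That fails, e.g., for $\sigma^2=1$, $\K=[-D,D]^n$, and $\mu_0$ equal to $\mu_{\sigma^2}$ restricted to $\{x_1>2\}$. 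Started near $x_1=D$, the chain needs about $h^{-1}\log D$ steps to leave $\{x_1>2\}$, so the density ratio stays near its initial supremum for that long.

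With the correct rate, your order of operations (boost to order $q$, then contract at order $q$) costs $O(q\,\sigma^2h^{-1}\log\frac{1}{\veps})$ iterations, which is linear rather than logarithmic in $q$. Fed into the annealing scheme, this would turn the $q^{1/2}$ of Theorem~\ref{thm:warm-uniform} into $q^{3/2}$.

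The fix is to reverse the order:
\begin{enumerate}
\item Contract at order $2$ first, using $\chi^2(\mu P\mmid\mu_{\sigma^2})\le(1+h/\sigma^2)^{-2}\,\chi^2(\mu\mmid\mu_{\sigma^2})$ (Poincar\'e with $\cpi\le\sigma^2$). Starting from $\chi^2\le M_2^2-1\le 99$, this reaches $\chi^2\le\veps^2/4$ in $O(\sigma^2h^{-1}\log\frac{1}{\veps})$ steps.
\item Then run $O(\sigma^2h^{-1}\log q)$ more steps and apply the hypercontractivity you already invoke to the \emph{centered} ratio $f-1$. This is legitimate because the density-ratio update is a linear Markov operator, so $|Pg|\le P|g|$.
\item This yields $\norm{f_N-1}_{L^q}\le\norm{f_k-1}_{L^2}\le\veps/2$, hence $\msf R_q\le\frac{q}{q-1}\log(1+\veps/2)\le\veps$.
\end{enumerate}
The total is $N=O(\sigma^2h^{-1}\log\frac{q}{\veps})$, as stated. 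Mixing then needs only $M_2$; the $L^{q_0}$-warmness and the condition $q_0\ge 1+\log\tau$ enter only through your query-count estimate in Step~2.
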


Thus, we need $\Otilde(n^{2}\sigma^{2}_{i}\log q)$ membership queries
at stage $i$. Let $J$ be the first stage with $\sigma^{2}_{J}\ge\sigma^{2}_{\textup{last}}$.

\begin{lyxalgorithm}
[Thin-shell Gaussian cooling]\label{alg:uniform-annealing} \textbf{\emph{Input}}:
membership oracle for $\K$ with $B(0,1)\subset\K$, R\'enyi-order
$q$, and $\cV=\Otilde(R^{2}L\wedge R^{3}\Lambda^{1/2})$. 
\begin{enumerate}
\item Initialize $i=0$, and draw $X_{0}\sim\mu_{0}:=\mu_{\sigma^{2}_{\textup{start}}}$
by rejection sampling from $\msf N(0,\sigma^{2}_{\textup{start}}\,I_{n})$.
\item While $\sigma^{2}_{i}<\sigma^{2}_{\textup{last}}$, 
\begin{enumerate}
\item compute $\sigma^{2}_{i+1}$ by \eqref{eq:uniform-schedule}: $\sigma^{2}_{i+1}:=\sigma^{2}_{i}\,\bpar{1+\frac{\sigma^{2}_{i}}{\sqrt{q\cV}}}\,.$
\item Iterate $\psgauss:\msf R_{q}\to\msf R_{q}$ for $N_{i+1}=\Otilde(n^{2}\sigma^{2}_{i+1}\log q\log J)$
iterations with target $\mu_{\sigma^{2}_{i+1}}$, accuracy $\veps_{i+1}=1/20$,
and failure $\eta_{i+1}=1/(100J)$. Use the parameters of the Truncated
Gaussian Sampler (Proposition~\ref{prop:truncated-gaussian-relay})
with $Z_{i+1}:=160N_{i+1}/\eta_{i+1}$, restarting if a cap is hit.
After each successful phase, replace $i$ by $i+1$, and return the
current point if $\sigma^{2}_{i}\ge\sigma^{2}_{\textup{last}}$.
\end{enumerate}
\end{enumerate}
\end{lyxalgorithm}

\begin{proof}
[Proof of Theorem~\ref{thm:warm-uniform}] Let $q_{\min}:=2\vee\Theta(\log(en^{3}\cV))$.
For $q<q_{\min}$, run the algorithm at order $q_{\min}$ and use
monotonicity of R\'enyi divergence; this increases the cost by only
a polylogarithmic factor. Thus, we may assume $q\geq q_{\min}$. Define
$t_{i}:=\sigma^{2}_{i}$. The initial rejection sampler produces $X_{0}\sim\mu_{t_{0}}$
exactly and takes $\Theta(1)$ expected trials, since $t_{0}=n^{-1}$
and $B(0,1)\subset\K$. As in \cite[High-level idea of analysis]{KV26zeroLC},
we first consider the \emph{ideal} truncated-Gaussian sampler where
every threshold is removed (i.e., $\tau=\infty$) and then bound the
bias introduced by the cap. Let $\bar{\nu}_{i}$ be its output law
after phase $i$ (so $\bar{\nu}_{0}=\mu_{t_{0}}$).

The uncapped mixing estimate in Proposition~\ref{prop:truncated-gaussian-relay},
applied with input order $q$, output order $2q$, and accuracy $1/20$,
ensures $\msf R_{2q}(\bar{\nu}_{i}\mmid\mu_{t_{i}})\le\frac{1}{20}$
for $0\leq i\leq J$. For $i<J$, the update is $\alpha_{i}=t_{i}/\sqrt{q\msf V}$,
so Corollary~\ref{cor:renyi-t} gives $\msf R_{2q}(\mu_{t_{i}}\mmid\mu_{t_{i+1}})\le1/4$.
By the weak triangle inequality,
\[
\msf R_{q}(\bar{\nu}_{i}\mmid\mu_{t_{i+1}})\le\frac{q-1/2}{q-1}\,\msf R_{2q}(\bar{\nu}_{i}\mmid\mu_{t_{i}})+\msf R_{2q-1}(\mu_{t_{i}}\mmid\mu_{t_{i+1}})\le\frac{13}{40}\,.
\]
Hence, the warmness satisfies $M_{q}<e$. The schedule gives $J=O(n\sqrt{q\cV})$
and $\sigma^{2}_{i}\leq2\sqrt{q\cV}$, so $\log\tau_{i}\lesssim\log(en^{3}q(1\vee\cV))$.
Since $q\geq q_{\min}$, choosing the hidden constant in $q_{\min}$
sufficiently large ensures $q\geq1+\log\tau_{i}$ at every phase,
as required by Proposition~\ref{prop:truncated-gaussian-relay}.

We now return to the capped algorithm with threshold $\tau$. Couple
a capped warm-start algorithm to the ideal warm-start algorithm, using
the same randomness until a cap is hit, and let $S$ be the event
that all $J$ phases succeed. The per-phase failure estimate in Proposition~\ref{prop:truncated-gaussian-relay}
is taken under the ideal input law. As in \cite[High-level idea of analysis]{KV26zeroLC},
the union bound gives $\P(S)\ge1-\sum^{J}_{i=1}\eta_{i}\ge99/100$.
On $S$, the capped and ideal paths coincide, so restarting independent
full attempts until success returns $\nu=\law(\bar{X}_{J}\mid S)$.
At the terminal index, $\msf R_{2q}(\mu_{t_{J}}\mmid\pi)\le1/4$ by
Corollary~\ref{cor:renyi-t}, and applying the weak triangle inequality
gives $\msf R_{q}(\bar{\nu}_{J}\mmid\pi)\le13/40$. Finally, $\frac{\D\nu}{\D\bar{\nu}_{J}}(x)=\frac{\P(S\mid\bar{X}_{J}=x)}{\P(S)}\le\frac{1}{\P(S)}$,
so
\[
\msf R_{q}(\nu\mmid\pi)\le\msf R_{q}(\bar{\nu}_{J}\mmid\pi)+\frac{q}{q-1}\log\frac{1}{\P(S)}<1\,.
\]

It remains to bound the query complexity. Starting from a given $\sigma^{2}_{i}$,
doubling this value requires $O(\sqrt{q\cV}/\sigma^{2}_{i})$ phases.
Since the query complexity in each phase during doubling is $\Otilde(n^{2}\sigma^{2}_{i}\log q\log^{5}J)$,
the total query cost of a doubling is $\Otilde(q^{1/2}n^{2}\msf V^{1/2})$.
Since there are only logarithmically many doublings, the total query
complexity is 
\[
\Otilde(q^{1/2}n^{2}\msf V^{1/2})\times O\bpar{\log_{2}(n\sqrt{q\msf V})}=\Otilde(q^{1/2}n^{2}\msf V^{1/2})\,.
\]
If $C$ denotes the cost of one full attempt, then the restart wrapper
has expected cost $\E C_{\mathrm{restart}}=\frac{\E C}{\P(S)}\le\frac{100}{99}\,\E C$.
Thus, restarting increases the cost by only a universal factor.
\end{proof}

\section{Extension to logconcave sampling\label{sec:logconcave}}

Let $V:\R^{n}\to\R\cup\{+\infty\}$ be a convex function such that
$\D\pi(x)\propto e^{-V(x)}\,\D x$ is a full-dimensional logconcave
probability measure in $\Rn$. Assume that $V$ is given by an evaluation
oracle, which returns the value of $V$ at the queried point. Denote
$v_{*}:=\inf V$, but neither $v_{*}$ nor a minimizer of $V$ is
assumed known. The ground set is $\msf L_{g}:=\{x\in\R^{n}:V(x)-\inf V\le10n\}$,
and we impose the normalization $B(0,1)\subset\msf L_{g}$ following
\cite{KV25sampling}.

We first recall the preprocessing steps from \cite{KV25sampling}.

\paragraph{Reduction.}

Define the epigraph\footnote{We retain the original $t$-coordinate and do not normalize $\inf V$.}
\begin{equation}
\K:=\{(x,t)\in\R^{n}\times\R:V(x)\le nt\}\,.\label{eq:shifted-lift}
\end{equation}
Let $a:=V(0)/n$, which is known from one oracle query. Since $0\le V(0)-v_{*}\le10n$,
every point of $\K$ has $t\ge a-10$. Moreover, $V(x)\le V(0)+10n$
for $x\in B(0,1)$, so $B_{n+1}((0,a+11),1)\subset\K$.

Instead of sampling directly from $\pi$, we use the exponential-lifting
technique \cite{KV25sampling} and define the lifted distribution
by
\[
\pi^{X,T}(\D x,\D t)\propto e^{-nt}\,\ind_{\K}(x,t)\,\D x\D t\,,
\]
whose $X$-marginal is the original target $\pi$, since integrating
over $t$ gives a density proportional to $e^{-V(x)}$.

\paragraph{Constant-mass truncation.}

To ensure that the lifted distribution has a finite log-Sobolev constant,
let $\ell:=\log(4e)$, $D:=1\vee R\ell$, and $b:=13\ell+5$. We then
truncate the epigraph $\K$ as follows:
\[
\bar{\K}:=\K\cap\bpar{B(0,D)\times[a-10,a+b]}\,.
\]
Since $D\ge1$ and $b>12$, the ball $B_{n+1}((0,a+11),1)$ remains
inside $\bar{\K}$. We now verify that this truncation has constant
mass. If $E\sim\Exp(1)$ is independent of $X\sim\pi$, then the conditional
law of the lift gives
\[
T=\frac{V(X)+E}{n}\,.
\]
By \cite[Theorem 2.3 and Corollary 4.5]{FMW16varentropy}, $\E_{\pi}[V(X)-v_{*}]\le n$
and $\var_{\pi}V\le n$. Hence,
\[
\E T\le a+2\qquad\text{and}\qquad\var T=\frac{\var_{\pi}V+1}{n^{2}}\le2\,.
\]
By Markov's and Chebyshev's inequalities,
\[
\pi^{X,T}(\bar{\K}^{c})\le\P\bpar{\norm X>R\ell}+\P(T>a+b)\le\frac{1}{\ell^{2}}+\frac{2}{(b-2)^{2}}<\frac{1}{2}\,,
\]
so $p:=\pi^{X,T}(\bar{\K})\ge1/2$. Let $\bar{\pi}^{X,T}\propto\pi^{X,T}\,\ind_{\bar{\K}}$
be the truncated distribution, and write $\bar{\pi}$ for its $X$-marginal.
Clearly, $\D\bar{\pi}/\D\pi\le p^{-1}\le2$. Note that the truncated
lift $\bar{\pi}^{X,T}$ is logconcave.

The truncated $X$-marginal need not have the same mean as $\pi$.
The following comparison shows that the relevant parameters change
by only a constant factor.
\begin{lem}
\label{lem:subdensity} Let $\pi$ and $\bar{\pi}$ be probability
measures on $\Rn$ such that $\bar{\pi}$ is logconcave, and $\D\bar{\pi}/\D\pi\le p^{-1}_{0}$
for some $p_{0}>0$. Let $\bar{R},\bar{L},\bar{\Lambda}$ denote the
analogues of $R,L,\Lambda$ for $\bar{\pi}$. Then:
\begin{itemize}
\item $\E_{\bar{\pi}}[XX^{\T}]\preceq p^{-1}_{0}\E_{\pi}[XX^{\T}]$ and
$\cov_{\bar{\pi}}X\preceq p^{-1}_{0}\Sigma$;
\item $\bar{R}^{2}\le p^{-1}_{0}R^{2}$, $\bar{L}\le p^{-1}_{0}L$, and
$\bar{\Lambda}\le p^{-1}_{0}\Lambda$;
\item $\sup_{\sigma^{2}>0}\var_{\bar{\pi}_{\sigma^{2}}}(\norm X^{2})\lesssim p^{-2}_{0}(1+\msf Q_{n})^{2}\,(R^{2}L\wedge R^{3}\sqrt{\Lambda})$.
\end{itemize}
\end{lem}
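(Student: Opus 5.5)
The plan is to establish the three items in order; each follows from the previous one and earlier results.

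\emph{First item.} Since $\D\bar{\pi}/\D\pi\le p_{0}^{-1}$, for every unit vector $v$ we have $\E_{\bar{\pi}}[\inner{v,X}^{2}]\le p_{0}^{-1}\E_{\pi}[\inner{v,X}^{2}]$. This gives $\E_{\bar{\pi}}[XX^{\T}]\preceq p_{0}^{-1}\E_{\pi}[XX^{\T}]$. For the covariance, use the variational characterization
\[
v^{\T}\cov_{\bar{\pi}}(X)\,v=\min_{c\in\R}\E_{\bar{\pi}}[(\inner{v,X}-c)^{2}]\,,
\]
and take $c=\inner{v,m}$. Then
\[
v^{\T}\cov_{\bar{\pi}}(X)\,v\le\E_{\bar{\pi}}[\inner{v,X-m}^{2}]\le p_{0}^{-1}\E_{\pi}[\inner{v,X-m}^{2}]=p_{0}^{-1}v^{\T}\Sigma v\,.
\]

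\emph{Second item.} This is immediate from the first. Taking traces of $\E_{\bar{\pi}}[XX^{\T}]\preceq p_{0}^{-1}\E_{\pi}[XX^{\T}]$ gives $\bar{R}^{2}\le p_{0}^{-1}R^{2}$. Taking operator norms of the same inequality gives $\bar{L}\le p_{0}^{-1}L$, using $\E[XX^{\T}]=\Sigma+mm^{\T}$. Taking operator norms of the covariance inequality gives $\bar{\Lambda}\le p_{0}^{-1}\Lambda$.

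\emph{Third item.} Since $\bar{\pi}$ is logconcave, Theorem~\ref{thm:main-stability} applies to $\bar{\pi}$ itself. It yields
\[
\sup_{t>0}\var_{\bar{\pi}_{t}}(\norm X^{2})\lesssim(1+\msf Q_{n})^{2}\,(\bar{R}^{2}\bar{L}\wedge\bar{R}^{3}\bar{\Lambda}^{1/2})\,.
\]
This uses the simplification at the end of that proof, which relied only on the relations $\Lambda\le L$, $\sqrt{\Lambda}\le R$, $L\le R^{2}$, all of which hold for any measure. Substituting the second item gives $\bar{R}^{2}\bar{L}\le p_{0}^{-2}R^{2}L$ and $\bar{R}^{3}\bar{\Lambda}^{1/2}\le p_{0}^{-2}R^{3}\Lambda^{1/2}$, which is the claimed $p_{0}^{-2}$ scaling. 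The Gaussian tilt $\bar{\pi}_{\sigma^{2}}$ corresponds to $t=\sigma^{2}$, so the supremum over $\sigma^{2}>0$ is exactly the supremum in the theorem.

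The only point needing care is the covariance comparison in the first item. One cannot directly compare covariances centered at different barycenters, and the variational characterization above resolves this. The remaining steps are bookkeeping of exponents of $p_{0}$.
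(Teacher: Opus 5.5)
Your proposal is correct and follows the paper's proof essentially step for step. The paper also obtains the second-moment comparison from the density bound and controls the covariance by centering at the $\pi$-barycenter $m$, i.e., $\var_{\bar{\pi}}\inner{u,X}\le\E_{\bar{\pi}}[\inner{u,X-m}^{2}]$. It then takes traces and operator norms for the second item, and applies Theorem~\ref{thm:main-stability} to $\bar{\pi}$ together with $\bar{R}^{2}\bar{L}\le p_{0}^{-2}R^{2}L$ and $\bar{R}^{3}\sqrt{\bar{\Lambda}}\le p_{0}^{-2}R^{3}\sqrt{\Lambda}$ for the third.
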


\begin{proof}
For every $u\in\R^{n}$, we have $\E_{\bar{\pi}}[\inner{u,X}^{2}]\leq p^{-1}_{0}\E_{\pi}[\inner{u,X}^{2}]$.
For $m=\E_{\pi}X$, 
\[
\var_{\bar{\pi}}(\inner{u,X})\le\E_{\bar{\pi}}[\langle u,X-m\rangle^{2}]\le p^{-1}_{0}\E_{\pi}[\langle u,X-m\rangle^{2}]=p^{-1}_{0}\var_{\pi}(\inner{u,X})\,,
\]
which proves the first item. The second item follows by taking traces
and operator norms. Applying Theorem~\ref{thm:main-stability} to
$\bar{\pi}$ and using $\bar{R}^{2}\bar{L}\le p^{-2}_{0}R^{2}L$ and
$\bar{R}^{3}\sqrt{\bar{\Lambda}}\le p^{-2}_{0}R^{3}\sqrt{\Lambda}$
proves the third item.
\end{proof}

\subsection{R\'enyi closeness of lifted annealing distributions}

As in \cite{KV25faster,KV26zeroLC}, we use the following annealing
distribution with parameters $\rho\ge0$ and $\sigma^{2}>0$: 
\begin{equation}
\mu_{\sigma^{2},\rho}(\D x,\D t)\propto\exp\bpar{-\frac{\norm x^{2}}{2\sigma^{2}}-\rho t}\,\ind_{\bar{\K}}(x,t)\,\D x\D t\,,\label{eq:two-parameter-path}
\end{equation}
and follow a two-phase annealing schedule, which simplifies the schedules
in \cite{KV25faster,KV26zeroLC}. In the $(\sigma^{2},\rho)$-plane,
we first follow $(n^{-1},1)\to(n^{-1},n)$ and then increase $\sigma^{2}$
with $\rho=n$ fixed. As in \S\ref{sec:uniform}, we bound the R\'enyi
divergence between neighboring targets, follow the annealing path
with a logconcave sampler, and sum the per-phase query costs. 

\paragraph{R\'enyi closeness.}

In the first phase $(n^{-1},1)\to(n^{-1},n)$, we use the multiplicative
update $\rho\gets\rho\,(1+O(\frac{1}{[q\,(q+n)]^{1/2}}))$. To justify
this update, we bound the divergence incurred by changing the $t$-tilt:
\begin{lem}
\label{lem:t-closeness} Let $\nu_{s}:=\mu_{n^{-1},s}$ for $s>0$.
Then, for $q>1$ and $0\le\alpha<(q-1)^{-1}$ such that $(1+\alpha)\,s\le n$,
\[
\msf R_{q}(\nu_{s}\mmid\nu_{(1+\alpha)\,s})\lesssim\frac{qn\alpha^{2}}{\bpar{1-(q-1)\,\alpha}^{2}}\,.
\]
\end{lem}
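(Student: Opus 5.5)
We follow the template of Lemma~\ref{lem:renyi-exact}, now tilting in the $t$-coordinate. Let $G(s):=\log\int e^{-\norm x^{2}n/2-st}\,\ind_{\bar{\K}}(x,t)\,\D x\D t$. Then $G''(s)=\var_{\nu_{s}}T$. The same computation as in Lemma~\ref{lem:renyi-exact} gives
\[
\msf R_{q}(\nu_{s}\mmid\nu_{s'})=\frac{G(qs-(q-1)s')-qG(s)+(q-1)G(s')}{q-1}\,,\qquad s'=(1+\alpha)s\,,
\]
so that $qs-(q-1)s'=s\,(1-(q-1)\alpha)>0$; this is where $\alpha<(q-1)^{-1}$ is used. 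A second-order Taylor expansion around $s$ then yields
\[
\msf R_{q}\le\frac{q}{2}\,(\alpha s)^{2}\sup_{r\in I}\var_{\nu_{r}}T\,,\qquad I=[s(1-(q-1)\alpha),(1+\alpha)s]\,.
\]
It therefore suffices to prove the variance bound
\[
\var_{\nu_{r}}T\lesssim\frac{n}{r^{2}}\qquad\text{for }0<r\le n\,.
\]
Given this, we get $\msf R_{q}\lesssim qn\alpha^{2}s^{2}/(s(1-(q-1)\alpha))^{2}$, which is exactly the claimed bound. (Any restriction to $r\ge1$ is unnecessary, since the estimate is scale-free in $r$.)

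For the variance bound we decompose along the $X$-marginal:
\[
\var T=\E[\var(T\mid X)]+\var(\E[T\mid X])\,.
\]
Under $\nu_{r}$, the conditional law of $T$ given $X=x$ is an exponential with rate $r$, truncated to the interval $[\max(V(x)/n,a-10),\,a+b]$. A truncated exponential has variance at most $r^{-2}$, since truncating a logconcave one-dimensional density to an interval does not increase its variance beyond that of the untruncated exponential. This step needs a small check; if it fails, we instead use varentropy of the joint density. Hence the first term is at most $r^{-2}$.

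For the second term, we first try the varentropy route advertised in the overview. The joint density $\propto e^{-n\norm x^{2}/2-rt}\ind_{\bar{\K}}$ is logconcave on $\R^{n+1}$. By \cite{FMW16varentropy}, the varentropy satisfies
\[
\var\bpar{\tfrac{n}{2}\norm X^{2}+rT}\le n+1\,.
\]
This gives
\[
r\sqrt{\var T}\le\sqrt{n+1}+\tfrac{n}{2}\sqrt{\var\norm X^{2}}\,.
\]
To control the last term, note that the $X$-marginal of $\nu_{r}$ is $n$-strongly logconcave, because integrating out $t$ preserves logconcavity by Prékopa, and the Gaussian factor $e^{-n\norm x^{2}/2}$ is retained. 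The Poincaré inequality then gives
\[
\var\norm X^{2}\le\frac{4}{n}\,\E\norm X^{2}\,.
\]
Strong logconcavity also bounds $\E\norm X^{2}$: the mean is controlled because the marginal density is compared to a Gaussian containing $B(0,1)$ in the support, so we expect $\E\norm X^{2}\lesssim 1$ up to constants (using $0\in$ support and mode-to-mean estimates). Altogether this gives $\var\bpar{\frac{n}{2}\norm X^{2}}\lesssim n$, and hence $r^{2}\var T\lesssim n$, as required.

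The main obstacle is bounding $\E\norm X^{2}$ for the $n$-strongly logconcave marginal independently of $r$. The marginal is $\propto e^{-n\norm x^{2}/2}\,w_{r}(x)$ with $w_{r}$ logconcave, so its mean could drift toward where $w_{r}$ is large. If we cannot show $\E\norm X^{2}\lesssim1$ directly, we will instead apply Brascamp--Lieb to the joint measure, which gives $\var(\norm X^{2})\lesssim n^{-1}\E\norm X^{2}$ with $\E\norm X^{2}\le n^{-1}\cdot n+\norm{\E X}^{2}$. We would then bound $\norm{\E X}$ by noting that the mean of a strongly logconcave measure lies within $O(1)$ of its mode, which lies in $B(0,D)$. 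This may cost only polylogarithmic factors, which the statement's $\lesssim$ might not absorb, so care is needed here.
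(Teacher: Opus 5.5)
Your skeleton matches the paper's proof. You use the log-partition identity for $\msf R_q$ and a Taylor expansion on $[s(1-(q-1)\alpha),(1+\alpha)s]$. You then combine varentropy of the joint density with the Poincar\'e inequality for the $n$-strongly logconcave $X$-marginal to get $r^2\var_{\nu_r}T\lesssim n$. The law-of-total-variance split and the truncated-exponential check are unnecessary, since the varentropy step already controls all of $\var T$.

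\textbf{The gap.} Everything rests on the bound $\E_{\nu_r}\norm X^2\lesssim 1$, uniformly over $0<r\le n$, and you only say you ``expect'' it. Your fallback does not give it. Knowing that the mode lies in $B(0,D)$ yields only $\E\norm X^2\lesssim D^2$. Since $D=1\vee R\log(4e)$ scales with $R$, this is polynomial, not polylogarithmic: already $\asymp\sqrt n$ for near-isotropic targets, and unbounded in general. It would inflate the variance bound by a factor $R^2$ and destroy the $\widetilde O(n^{5/2})$ cost of Phase~I.

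\textbf{The missing idea.} Locate the mode $x_r$ of the $X$-marginal $\propto e^{-n\norm x^2/2}h_r(x)$, where $h_r(x):=\int e^{-rt}\ind_{\bar{\K}}(x,t)\,\D t$, by comparing its density with the density at the origin. The ingredient your hint about the origin misses is two-sided control of the $t$-fibers:
\begin{itemize}
\item Since $V(0)=na$, the fiber over $0$ contains $[a,a+1]$, so $h_r(0)\ge e^{-r(a+1)}$.
\item Every fiber lies in $[a-10,a+b]$, so $h_r(x)\le(b+10)\,e^{-r(a-10)}$.
\end{itemize}
Because the marginal density at $x_r$ is at least its density at $0$,
\[
\frac{n}{2}\,\norm{x_r}^2\le\log\frac{h_r(x_r)}{h_r(0)}\le\log(b+10)+11r\lesssim n\,.
\]
This is exactly where the hypothesis $r\le n$, i.e.\ $(1+\alpha)s\le n$, enters; your proposal never uses it. For $r\gg n$ the $t$-tilt can drag the mode toward the low-$V$ region, which may be far from the origin.

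Hence $\norm{x_r}=O(1)$. The mode-moment estimate for $n$-strongly logconcave measures, $\E\norm{X-x_r}^2\le 1$, then gives $\E_{\nu_r}\norm X^2\lesssim 1$. After that, your Poincar\'e and varentropy computation goes through unchanged.
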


\begin{proof}
Let $\Phi(s):=\log\int_{\bar{\K}}\exp(-n\,\norm x^{2}/2-st)\,\D x\D t$.
As in the proof of Lemma~\ref{lem:renyi-exact}, we obtain
\[
\msf R_{q}(\nu_{s}\mmid\nu_{(1+\alpha)s})=\frac{\Phi\bpar{s\,(1-(q-1)\,\alpha)}-q\Phi(s)+(q-1)\,\Phi\bpar{(1+\alpha)\,s}}{q-1}\,.
\]
Set $s_{-}:=s\,(1-(q-1)\,\alpha)>0$, $s_{+}:=s\,(1+\alpha)$, $I:=[s_{-},s_{+}]$,
and $V_{I}:=\sup_{u\in I}\Phi''(u)=\sup_{u\in I}\var_{\nu_{u}}T$.
By Taylor expansion,
\begin{align*}
\Phi(s_{-}) & \leq\Phi(s)-(q-1)\,\alpha s\Phi'(s)+\frac{(q-1)^{2}\alpha^{2}s^{2}}{2}\,V_{I}\,,\\
\Phi(s_{+}) & \leq\Phi(s)+\alpha s\Phi'(s)+\frac{\alpha^{2}s^{2}}{2}\,V_{I}\,.
\end{align*}
Substituting these estimates into the $\msf R_{q}$-identity yields
\[
\msf R_{q}(\nu_{s}\mmid\nu_{(1+\alpha)s})\leq\frac{(q-1)^{2}\alpha^{2}s^{2}V_{I}+(q-1)\,\alpha^{2}s^{2}V_{I}}{2\,(q-1)}=\frac{q\alpha^{2}s^{2}}{2}\,V_{I}\,.
\]

We now bound $V_{I}$. To this end, we first bound the second moment
of $\nu_{u}$ as $\E_{\nu_{u}}[\norm X^{2}]\lesssim1$ uniformly over
$u\in[0,n]$. Note that the $X$-marginal of $\nu_{u}$ is proportional
to $e^{-n\,\norm x^{2}/2}\,h_{u}(x)$, where $h_{u}(x):=\int e^{-ut}\,\ind_{\bar{\K}}(x,t)\,\D t$
is logconcave. By the reduction above, $\{0\}\times[a,a+1]\subset\bar{\K}\subset\R^{n}\times[a-10,a+b]$.
Hence, for $u\in[0,n]$, 
\begin{align*}
h_{u}(0) & \ge\int^{a+1}_{a}e^{-ut}\,\D t\ge e^{-u(a+1)}\,,\\
h_{u}(x) & \le\int^{a+b}_{a-10}e^{-ut}\,\D t\le(b+10)\,e^{-u(a-10)}\,.
\end{align*}

Let $x_{u}$ be a mode of this marginal. Since $\nu^{X}_{u}(x_{u})/\nu^{X}_{u}(0)\geq1$,
\[
\frac{n}{2}\,\norm{x_{u}}^{2}\le\log\frac{h_{u}(x_{u})}{h_{u}(0)}\le\log(b+10)+11u\lesssim n\,.
\]
Also, since the $X$-marginal of $\nu_{u}$ is $n$-strongly logconcave,
the standard mode-moment estimate gives $\E_{\nu_{u}}[\norm{X-x_{u}}^{2}]\le1$
(see \cite[Lemma 4.0.1]{chewi25log}). Combining these bounds gives
$\E_{\nu_{u}}[\norm X^{2}]\lesssim1$. Now, by \eqref{eq:pi}, for
$Q(x)=n\,\norm x^{2}/2$, 
\[
\var_{\nu_{u}}Q\le\frac{1}{n}\,\E_{\nu_{u}}[\norm{\nabla Q}^{2}]=n\,\E_{\nu_{u}}[\norm X^{2}]\lesssim n\,.
\]
By the varentropy inequality~\cite{FMW16varentropy}, we have $\var_{\nu_{u}}(Q(X)+uT)\le n+1$.
Using the $L^{2}$-triangle inequality, 
\[
u\sqrt{\var_{\nu_{u}}T}\le\sqrt{\var_{\nu_{u}}\bpar{Q(X)+uT}}+\sqrt{\var_{\nu_{u}}Q}\lesssim\sqrt{n}\,.
\]
Hence, $\Phi''(u)=\var_{\nu_{u}}T\lesssim n/u^{2}\leq n/s^{2}_{-}$
for every $u\in I$, which gives $V_{I}\lesssim n/[s^{2}\,(1-(q-1)\,\alpha)^{2}]$.
Therefore,
\[
\msf R_{q}(\nu_{s}\mmid\nu_{(1+\alpha)\,s})\lesssim\frac{qn\alpha^{2}}{\bpar{1-(q-1)\,\alpha}^{2}}\,.
\]
This completes the proof.
\end{proof}

In the second phase, we keep $\rho=n$ fixed and increase $\sigma^{2}$.
While $\sigma^{2}<1$, we use the universal Gaussian-annealing update
$\sigma^{2}\gets\sigma^{2}(1+O(\frac{1}{(qn)^{1/2}}))$. Once $\sigma^{2}\ge1$,
we use the thin-shell update $\sigma^{2}\gets\sigma^{2}\,(1+O(\frac{\sigma^{2}}{(q\msf V)^{1/2}}))$
as in \S\ref{sec:uniform}. Let $\mu_{\sigma^{2}}:=\mu_{\sigma^{2},n}$,
and note that its $X$-marginal is $\bar{\pi}_{\sigma^{2}}$. The
conditional distribution $T|X=x$ is independent of $\sigma^{2}$,
so
\[
\msf R_{q}(\mu_{\sigma^{2}}\mmid\mu_{\widetilde{\sigma}^{2}})=\msf R_{q}(\bar{\pi}_{\sigma^{2}}\mmid\bar{\pi}_{\widetilde{\sigma}^{2}})\qquad\text{for all }q>1\ \text{and}\ \sigma^{2},\widetilde{\sigma}^{2}>0\,.
\]
Hence, the closeness bound in Corollary~\ref{cor:renyi-t} applies
in $\sigma^{2}\geq1$ and at the terminal point, but the regime of
$\sigma^{2}\leq1$ needs an additional argument.
\begin{lem}
\label{lem:sigma-closeness-phase1} Let $\mu_{s}:=\mu_{s,n}$ for
$s\in(0,1)$. Then, for $q>1$ and $\alpha>0$,
\[
\msf R_{q}(\mu_{s}\mmid\mu_{(1+\alpha)\,s})\lesssim qn\alpha^{2}\,.
\]
\end{lem}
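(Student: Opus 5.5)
We reduce to a one-parameter exponential family, exactly as in Lemma~\ref{lem:renyi-exact}, now parametrized by the inverse variance. Write $\mu_{s}$ with $s\in(0,1)$ as $\propto\exp(-\beta\,\norm x^{2}/2-nt)\,\ind_{\bar{\K}}$ with $\beta=1/s$. Going from $s$ to $(1+\alpha)s$ changes $\beta$ to $\beta'=\beta/(1+\alpha)$, so $\beta-\beta'=\beta\alpha/(1+\alpha)\le\beta\alpha$.

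Let $\Psi(\beta):=\log\int_{\bar{\K}}\exp(-\beta\,\norm x^{2}/2-nt)\,\D x\D t$. The same computation as in Lemma~\ref{lem:renyi-exact} gives
\[
\msf R_{q}(\mu_{s}\mmid\mu_{(1+\alpha)s})\le\frac{q}{2}\,(\beta-\beta')^{2}\sup_{b\in I}\Psi''(b)\,,\qquad\Psi''(b)=\tfrac14\var_{b}(\norm X^{2})\,,
\]
where $I=[\beta',\,q\beta-(q-1)\beta']$ and $\var_{b}$ is taken under the tilt at inverse variance $b$. Every $b\in I$ satisfies $b\ge\beta'$. On $I\cap(0,n]$ we only need $b\ge\beta'\ge1/(s(1+\alpha))$, and in any case $b\gtrsim\beta/(1+\alpha)$.

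It therefore suffices to show $\var_{b}(\norm X^{2})\lesssim n/b^{2}$ for all $b\ge\beta'$. We then get
\[
\msf R_{q}\lesssim q\,\beta^{2}\alpha^{2}\cdot\frac{n(1+\alpha)^{2}}{\beta^{2}}\,.
\]
For $\alpha\lesssim1$ this is $qn\alpha^{2}$. For large $\alpha$ we can either restrict to $\alpha\le1$, which is the regime used by the schedule, or absorb the extra factor.

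\textbf{The variance bound.} The $X$-marginal of the tilt at parameter $b$ is $\propto e^{-b\norm x^{2}/2}\,h_{n}(x)$ with $h_{n}$ logconcave, so it is $b$-strongly logconcave. By \eqref{eq:pi} with $\cpi\le1/b$,
\[
\var_{b}(\norm X^{2})\le\frac{4}{b}\,\E_{b}\norm X^{2}\,.
\]
It remains to show $\E_{b}\norm X^{2}\lesssim n/b$.

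For this we reuse the mode argument from Lemma~\ref{lem:t-closeness} with $u=n$. There $h_{n}(0)\ge e^{-n(a+1)}$ and $h_{n}\le(b_{0}+10)\,e^{-n(a-10)}$, where $b_{0}=13\ell+5$. So a mode $x_{b}$ satisfies $\tfrac b2\norm{x_{b}}^{2}\le\log(b_{0}+10)+11n\lesssim n$, i.e. $\norm{x_{b}}^{2}\lesssim n/b$. Strong logconcavity gives $\E\norm{X-x_{b}}^{2}\le n/b$. Hence $\E_{b}\norm X^{2}\lesssim n/b$, and so $\var_{b}(\norm X^{2})\lesssim n/b^{2}$. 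Nothing here uses $b\le n$; it only uses $b>0$.

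\textbf{Main obstacle.} The delicate point is the range of $I$ when $q$ is large. The upper endpoint $q\beta-(q-1)\beta'$ is harmless, since the variance bound only improves as $b$ grows. The lower endpoint is fixed at $\beta'$, so, unlike Lemma~\ref{lem:t-closeness}, no condition like $\alpha<1/(q-1)$ is needed. I would double-check the mode-moment constant and that $\norm{x_{b}}^{2}$ enters as $n/b$ rather than as an absolute constant. The cruder bound $\norm{x_{b}}^{2}\lesssim1$, valid only for $b\asymp n$, is insufficient when $s$ is close to $1$.
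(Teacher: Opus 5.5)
Your approach is essentially the paper's. The $X$-marginal is $\sigma^{-2}$-strongly logconcave, and the mode bound from the proof of Lemma~\ref{lem:t-closeness} together with the mode--moment estimate gives $\E\norm X^{2}\lesssim n\sigma^{2}$. The Poincar\'e inequality then gives $\var(\norm X^{2})\lesssim n\sigma^{4}$, which is fed into the curvature bound of Lemma~\ref{lem:renyi-exact}.

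The one flaw is the spurious factor $(1+\alpha)^{2}$. It appears only because you replaced $\beta-\beta'$ by the upper bound $\beta\alpha$. In fact $\beta-\beta'=\alpha\beta'$ exactly, so
\[
(\beta-\beta')^{2}\sup_{b\ge\beta'}\frac{n}{b^{2}}\;\le\;\alpha^{2}\beta'^{2}\cdot\frac{n}{\beta'^{2}}\;=\;n\alpha^{2}.
\]
This yields $\msf R_{q}\lesssim qn\alpha^{2}$ for every $\alpha>0$, matching the paper's $((s-s')/s')^{2}=\alpha^{2}$. As written, your fallback does not work. ``Absorbing'' $(1+\alpha)^{2}$ into the constant is not legitimate for unbounded $\alpha$, and restricting to $\alpha\le1$ proves a weaker statement than the lemma.
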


\begin{proof}
Let $x_{\sigma}$ be a mode of $\bar{\pi}_{\sigma^{2}}$. The bounds
on $h_{n}$ in the preceding proof give $\norm{x_{\sigma}}^{2}/(2\sigma^{2})\leq\log(h_{n}(x_{\sigma})/h_{n}(0))\lesssim n$.
Since $\bar{\pi}_{\sigma^{2}}$ is $\sigma^{-2}$-strongly logconcave,
the same mode-moment estimate gives $\E_{\bar{\pi}_{\sigma^{2}}}[\norm{X-x_{\sigma}}^{2}]\leq n\sigma^{2}$.
Thus, $\E_{\bar{\pi}_{\sigma^{2}}}[\norm X^{2}]\lesssim n\sigma^{2}$,
so by \eqref{eq:pi}
\[
\var_{\bar{\pi}_{\sigma^{2}}}(\norm X^{2})\leq4\sigma^{2}\,\E_{\bar{\pi}_{\sigma^{2}}}[\norm X^{2}]\lesssim n\sigma^{4}\,.
\]
By Lemma~\ref{lem:renyi-exact} with $s=(2\sigma^{2})^{-1}$ and
$s'=s/(1+\alpha)$,
\[
\msf R_{q}(\mu_{\sigma^{2}}\mmid\mu_{(1+\alpha)\,\sigma^{2}})\lesssim qn\,\bpar{\frac{s-s'}{s'}}^{2}=qn\alpha^{2}\,,
\]
which justifies the update when $\sigma^{2}\leq1$.
\end{proof}

\subsection{Gaussian cooling in the lifted space}

Fix any target R\'enyi order $q\ge q_{0}$. We use $\psann$ to follow
the annealing scheme described above \cite{KV25faster,KV26zeroLC}.
Starting from $v_{0}\sim\mu_{0}$, with step size $h>0$ and threshold
$\tau$, write $v_{k}=(x_{k},t_{k})$, $w_{k+1}=(y_{k+1},s_{k+1})$,
$r:=\sigma^{2}/(h+\sigma^{2})$, and $h_{r}:=rh$. One iteration of
$\psann$ consists of the following two steps.
\begin{itemize}
\item {[}Forward{]} Draw $w_{k+1}\sim\msf N(v_{k},hI_{n+1})$.
\item {[}Backward{]} Repeatedly draw $v_{k+1}\sim\msf N(ry_{k+1},h_{r}I_{n})\otimes\msf N(s_{k+1}-\rho h,h)$
until $v_{k+1}\in\bar{\K}$. If the rejection loop exceeds $\tau$
trials, declare failure.
\end{itemize}
We use the following per-target guarantee from \cite[Proposition 5.2]{KV26zeroLC}.
For the cited guarantee, translate $t$ by $-(a+11)$. The shifted
body contains $B_{n+1}(0,1)$ and lies in $\R^{n}\times[-21,b-11]$,
while the normalized targets and both Gaussian steps transform by
the same translation. Thus, rejection counts and failure events are
unchanged, and the guarantee has no dependence on $a$.
\begin{prop}
[Lifted annealing sampler]\label{prop:lifted-relay} Let $\mu=\mu_{\sigma^{2},\rho}$
be as in~\eqref{eq:two-parameter-path} with $\sigma^{2}\geq1/n$
and $1\leq\rho\leq n$. Given access to an evaluation oracle for $V$,
let $\varepsilon>0$, $\eta\in(0,1/2)$, $q\ge q_{0}\ge2$, and let
$\mu_{0}$ be an initial distribution with $M_{q_{0}}:=\norm{\frac{\D\mu_{0}}{\D\mu}}_{L^{q_{0}}(\mu)}$.
Initialize $\psann$ from $\mu_{0}$ and iterate it $N=\Otilde(n^{2}(\sigma^{2}\vee1)\log\frac{q}{\varepsilon}\log\frac{1}{\eta})$
times with $h=(24^{2}n^{2}\log S)^{-1}$ and $\tau=2S^{2}\log^{2}S$,
where $S=16NM_{2}/\eta$. Assume $M_{q_{0}}\le10$ and $q_{0}\ge1+\log\tau$.
\begin{itemize}
\item The latter condition is ensured by $q_{0}\ge2\vee O(\log\frac{N}{\eta})$. 
\item With probability at least $1-\eta$, $\psann$ completes all $N$
iterations without failure. Conditioned on this event, its output
law $\nu$ satisfies $\msf R_{q}(\nu\mmid\mu)\le\varepsilon+2\log\frac{1}{1-\eta}$,
using $\Otilde(n^{2}\,(\sigma^{2}\vee1)\log\frac{q}{\varepsilon}\log^{2}\frac{1}{\eta})$
evaluation queries in expectation. The corresponding uncapped chain
(i.e., $\tau=\infty$) has output law $\bar{\nu}$ satisfying $\msf R_{q}(\bar{\nu}\mmid\mu)\le\varepsilon$. 
\end{itemize}
\end{prop}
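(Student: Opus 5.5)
The plan is to reduce the statement to the per-target guarantee \cite[Proposition 5.2]{KV26zeroLC} by an affine change of coordinates. After the change, $\psann$ and every quantity appearing in the statement (warmness, R\'enyi divergence, rejection counts, failure events) are either invariant or transported exactly. The only genuinely new checks are that the translated body satisfies the geometric normalization assumed there, and that membership in $\bar{\K}$ costs $O(1)$ evaluation queries.

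\textbf{Step 1 (translation).} Let $\phi(x,t):=(x,t-a-11)$ and $\bar{\K}':=\phi(\bar{\K})$. From the reduction and truncation paragraphs, $B_{n+1}((0,a+11),1)\subset\bar{\K}\subset B(0,D)\times[a-10,a+b]$. Hence $B_{n+1}(0,1)\subset\bar{\K}'\subset\R^{n}\times[-21,b-11]$, a slab of universal width since $b=13\ell+5$ is a universal constant. The pushforward $\phi_{\#}\mu_{\sigma^{2},\rho}$ has density proportional to
\[
e^{-\rho(a+11)}\exp\bigl(-\tfrac{\norm x^{2}}{2\sigma^{2}}-\rho t'\bigr)\ind_{\bar{\K}'}(x,t')\,,
\]
and the constant factor cancels after normalization. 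So $\phi_{\#}\mu_{\sigma^{2},\rho}$ is exactly the normalized target on $\bar{\K}'$ with the same $(\sigma^{2},\rho)$.

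\textbf{Step 2 (conjugacy of the chain).} The forward step $w\sim\msf N(v,hI_{n+1})$ commutes with translation. In the backward step, the $x$-part $\msf N(ry,h_{r}I_{n})$ does not involve $t$. The $t$-part $\msf N(s-\rho h,h)$ has a mean that is affine in $s$ with unit slope, so it also commutes with translating $s$ and $t$ together. The rejection test $v\in\bar{\K}$ becomes $\phi(v)\in\bar{\K}'$. Coupling the two chains through $\phi$, their trajectories coincide trial by trial. Therefore the numbers of rejection trials, the failure events, and the output laws correspond: $\nu'=\phi_{\#}\nu$.

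Since $\phi$ is a bijection, $M_{q_{0}}$ and $M_{2}$ are preserved, and $\msf R_{q}(\nu\mmid\mu)=\msf R_{q}(\phi_{\#}\nu\mmid\phi_{\#}\mu)$. Consequently the parameters $S,h,\tau,N$ are the same in both coordinate systems, and the conclusions, including the uncapped-chain bound, transfer verbatim.

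\textbf{Step 3 (invoke the cited proposition).} I would apply \cite[Proposition 5.2]{KV26zeroLC} to $\bar{\K}'$ with $\sigma^{2}\ge1/n$ and $1\le\rho\le n$. For the query cost, each membership test for $\bar{\K}$ is a check of $\norm x\le D$ and $t\in[a-10,a+b]$, followed by one evaluation of $V$ to test $V(x)\le nt$. Membership queries therefore translate to evaluation queries with a factor $1$. The side condition $q_{0}\ge1+\log\tau$ is ensured by $q_{0}\gtrsim\log(N/\eta)$, because $\log\tau=O(\log S)=O(\log(NM_{2}/\eta))$ and $M_{2}\le M_{q_{0}}\le10$.

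\textbf{Main obstacle.} The step I expect to need care is confirming that the hypotheses of the cited result match exactly. That result presumably assumes a body containing a unit ball whose $t$-range is bounded by universal constants; this is what controls the rejection rate of the backward step near the $t$-boundaries and the log-Sobolev constant of $\mu$. I would check that its constants depend only on this slab width and on $\sigma^{2}\vee1$, and not on $D$ or $a$, so that the stated $N$, $h$, $\tau$ and query bounds hold uniformly.
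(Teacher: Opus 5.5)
Your proposal is correct and follows essentially the same route as the paper. The paper translates $t$ by $-(a+11)$ so that the body contains $B_{n+1}(0,1)$ and lies in $\R^{n}\times[-21,b-11]$, notes that the normalized targets and both Gaussian steps move by the same translation (so rejection counts and failure events are unchanged), and then invokes \cite[Proposition 5.2]{KV26zeroLC}, exactly as you do. Your extra checks, namely one evaluation query per membership test and $q_{0}\gtrsim\log(N/\eta)\Rightarrow q_{0}\ge1+\log\tau$ via $M_{2}\le M_{q_{0}}\le10$, fill in details the paper leaves implicit.
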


Let $J$ be the total number of phases. Note that with $\varepsilon_{i}=1/20$
and $\eta_{i}=(100J)^{-1}$, each phase requires $\Otilde(n^{2}\,(\sigma^{2}\vee1))$
evaluation queries, and its desired output is $1/20$-close to its
target.
\begin{lyxalgorithm}
[Thin-shell Gaussian cooling]\label{alg:lifted-annealing} \textbf{\emph{Input:}}
evaluation oracle for $V$, a target order $q$, $\msf V\lesssim_{\log n}R^{2}L\wedge R^{3}\sqrt{\Lambda}$
, $\sigma^{2}_{\textup{last}}=\sqrt{q\msf V}$, and the base order
$q_{0}=2\vee\widetilde{\Theta}(\log(qn^{3}\cV))$, and a sufficiently
small constant $c>0$.

Every sampler call below uses the indicated input and output R\'enyi
orders, accuracy $\varepsilon_{i+1}=1/20$, failure probability $\eta_{i+1}=(100J)^{-1}$,
and the parameters of Proposition~\ref{prop:lifted-relay}, using
$S_{i+1}:=160N_{i+1}/\eta_{i+1}$. After each successful phase, denote
its output by $Z_{i+1}$ and replace $i$ by $i+1$. Restart the entire
construction if a cap is hit. The steps are as follows: 
\begin{enumerate}
\item Independently draw $G\sim\msf N(0,n^{-1}I_{n})$ conditioned on $\norm G\le D$
and $U\sim\Exp(1)$ conditioned on $0\le U\le b+10$, set $T:=a-10+U$,
and repeat until $V(G)\le nT$. Set $Z_{0}:=(G,T)\sim\mu_{0}:=\mu_{n^{-1},1}$,
$(\sigma^{2}_{\textup{start}},\rho_{0}):=(n^{-1},1)$, $\sigma^{2}_{0}:=\sigma^{2}_{\textup{start}}$,
and $i=0$.
\item \emph{{[}Phase I{]}} While $\rho_{i}<n$ and $\sigma^{2}_{i}=n^{-1}$,
set $\sigma^{2}_{i+1}\gets n^{-1}$ and 
\begin{equation}
\rho_{i+1}\gets\min\bbrace{n,\bpar{1+\frac{c}{\sqrt{q_{0}(q_{0}+n)}}}\,\rho_{i}}\,,\label{eq:vertical-schedule}
\end{equation}
and sample from  $\mu_{i+1}:=\mu_{n^{-1},\rho_{i+1}}$ using $\psann:\msf R_{q_{0}}\to\msf R_{2q_{0}}$.
\item \emph{{[}Phase II{]}} First, while $\sigma^{2}_{i}<1$ and $\rho_{i}=n$,
set $\rho_{i+1}\gets n$ and 
\[
\sigma^{2}_{i+1}\gets\bpar{1+\frac{c}{\sqrt{q_{0}n}}}\,\sigma^{2}_{i}\,,
\]
and sample from $\mu_{i+1}:=\mu_{\sigma^{2}_{i+1},n}$ using $\psann:\msf R_{q_{0}}\to\msf R_{2q_{0}}$.
After this loop, set $\sigma^{2}_{i+1}:=\sigma^{2}_{i}$, $\rho_{i+1}:=\rho_{i}$,
and $\mu_{i+1}:=\mu_{i}$, and make one call at this target with $\psann:\msf R_{q_{0}}\to\msf R_{2q}$.
Then, while $\sigma^{2}_{i}<\sigma^{2}_{\textup{last}}$, set $\rho_{i+1}\gets n$
and 
\[
\sigma^{2}_{i+1}\gets\bpar{1+\frac{\sigma^{2}_{i}}{\sqrt{q\msf V}}}\,\sigma^{2}_{i}\,,
\]
and sample from $\mu_{i+1}:=\mu_{\sigma^{2}_{i+1},n}$ using $\psann:\msf R_{q}\to\msf R_{2q}$.
After the second loop, return the $X$-coordinate of $Z_{i}$.
\end{enumerate}
\end{lyxalgorithm}

For Phase I updates, Lemma~\ref{lem:t-closeness} ensures $\msf R_{2q_{0}}(\mu_{n^{-1},\rho_{i}}\mmid\mu_{n^{-1},\rho_{i+1}})\le\frac{1}{18}$.
In the first Phase II loop, Lemma~\ref{lem:sigma-closeness-phase1}
ensures $1/4$-closeness in $\msf R_{2q_{0}}$-divergence. In the
second loop, as in \S\ref{sec:uniform}, Corollary~\ref{cor:renyi-t}
ensures $1/4$-closeness in $\msf R_{2q}$.
\begin{proof}
[Proof of Theorem~\ref{thm:warm-main}] We may assume $q\geq q_{0}$;
otherwise, we can proceed with $q=q_{0}$, and use monotonicity of
$\msf R_{q}$. Let $\mu_{0},\ldots,\mu_{J}$ denote the annealing
distributions in Algorithm~\ref{alg:lifted-annealing}, and set $\bar{\nu}_{0}:=\mu_{0}$.
For $i\ge1$, let $Z_{i}\sim\bar{\nu}_{i}$ be the output of the ideal
$\psann$ at phase $i$.

Let $p=q_{0}$ before the order-boost and $p=q$ after it. By the
weak triangle inequality, 
\[
\msf R_{p}(\bar{\nu}_{i-1}\mmid\mu_{i})\le\frac{p-1/2}{p-1}\,\msf R_{2p}(\bar{\nu}_{i-1}\mmid\mu_{i-1})+\msf R_{2p-1}(\mu_{i-1}\mmid\mu_{i})\le\frac{3}{40}+\frac{1}{4}=\frac{13}{40}\,.
\]
Hence, Proposition~\ref{prop:lifted-relay} with $(p,r)=(q_{0},2q_{0})$
before the boost and $(p,r)=(q,2q)$ afterward preserves $M_{p}$-warmness.

As in the proof of Theorem~\ref{thm:warm-uniform}, we couple the
capped and ideal paths until a cap is hit; the success event $S$
satisfies $\P(S)\ge99/100$. Write $\nu^{X,T}=\law(Z_{J}\mid S)$.
The terminal bound and the weak triangle inequality yield $\msf R_{q}(\bar{\nu}_{J}\mmid\bar{\pi}^{X,T})\le13/40$.
Since $\D\nu^{X,T}/\D\bar{\nu}_{J}\le1/\P(S)$,
\[
\msf R_{q}(\nu^{X,T}\mmid\bar{\pi}^{X,T})\le\frac{13}{40}+\frac{q}{q-1}\log\frac{100}{99}<1\,.
\]
Since $\bar{\pi}^{X,T}(\D x,\D t)=p^{-1}\,\ind_{\bar{\K}}(x,t)\,\pi^{X,T}(\D x,\D t)$
for $p=\pi^{X,T}(\bar{\K})\ge1/2$, data processing under $(x,t)\mapsto x$
gives 
\[
\msf R_{q}(\nu^{X}\mmid\pi)\underset{\textup{DPI}}{\le}\msf R_{q}(\nu^{X,T}\mmid\pi^{X,T})=\msf R_{q}(\nu^{X,T}\mmid\bar{\pi}^{X,T})+\log\frac{1}{p}\le1+\log2<2\,.
\]

It remains to bound the query complexity. Rejection sampling gives
an exact initial sample and uses $O(1)$ expected evaluation queries.
In Phase I, the update in~\eqref{eq:vertical-schedule} requires
$O(\sqrt{q_{0}\,(q_{0}+n)}\log n)$ phases. Each phase requires $\Otilde(n^{2})$
queries, so 
\[
\text{Phase I cost}\leq\Otilde(n^{2})\times O(\sqrt{q_{0}\,(q_{0}+n)}\log n)=\Otilde(n^{5/2})\,.
\]
For Phase II, the universal updates below $\sigma^{2}=1$ use $O(\sqrt{q_{0}n}\log n)$
phases, and each phase needs $\Otilde(n^{2})$ queries. After $\sigma^{2}\ge1$,
since the doubling of an initial $\sigma^{2}$ requires $O(\sqrt{q\msf V}/\sigma^{2})$
phases, and each phase needs $\Otilde(n^{2}\sigma^{2})$ queries,
the doubling requires $\Otilde(n^{2}\sqrt{q\msf V})$ queries. Summing
the query costs of the two loops,
\[
\text{Phase II cost}\le\Otilde(n^{2.5}+n^{2}\sqrt{q\msf V})=\Otilde\bpar{n^{5/2}+n^{2}\sqrt{q}\,(R\sqrt{L}\wedge R^{3/2}\Lambda^{1/4})}\,.
\]
Finally, since $\P(S)\ge99/100$, the restart wrapper multiplies the
expected cost by at most $100/99$, which completes the proof.
\end{proof}

\paragraph{Acknowledgments.}

This work was supported in part by NSF Awards CCF-2504995, CCF-2236669,
CCF-2504994, and a Simons Investigator award.

\bibliographystyle{alpha}
\bibliography{main}

\end{document}

%% file: math-macros.tex
\input{_macros.tex}

\global\long\def\on#1{\operatorname{#1}}%

\global\long\def\bw{\mathsf{Ball\ walk}}%
\global\long\def\sw{\mathsf{Speedy\ walk}}%
\global\long\def\gw{\mathsf{Gaussian\ walk}}%
\global\long\def\ps{\mathsf{Proximal\ sampler}}%
\global\long\def\dw{\mathsf{Dikin\ walk}}%

\global\long\def\chr{\mathsf{Coordinate\ Hit\text{-}and\text{-}Run}}%
\global\long\def\har{\mathsf{Hit\text{-}and\text{-}Run}}%
\global\long\def\gc{\mathsf{Gaussian\ cooling}}%
\global\long\def\ino{\mathsf{\mathsf{In\text{-}and\text{-}Out}}}%
\global\long\def\tgc{\mathsf{Tilted\ Gaussian\ cooling}}%
\global\long\def\PS{\mathsf{PS}}%
\global\long\def\psunif{\mathsf{PS}_{\textup{unif}}}%
\global\long\def\psexp{\mathsf{PS}_{\textup{exp}}}%
\global\long\def\psann{\mathsf{PS}_{\textup{ann}}}%
\global\long\def\psgauss{\mathsf{PS}_{\textup{Gauss}}}%
\global\long\def\eval{\mathsf{Eval}}%
\global\long\def\mem{\mathsf{Mem}}%

\global\long\def\O{O}%
\global\long\def\Otilde{\widetilde{O}}%
\global\long\def\Omtilde{\widetilde{\Omega}}%

\global\long\def\E{\mathbb{E}}%
\global\long\def\Z{\mathbb{Z}}%
\global\long\def\P{\mathbb{P}}%
\global\long\def\N{\mathbb{N}}%

\global\long\def\R{\mathbb{R}}%
\global\long\def\Rd{\mathbb{R}^{d}}%
\global\long\def\Rdd{\mathbb{R}^{d\times d}}%
\global\long\def\Rn{\mathbb{R}^{n}}%
\global\long\def\Rnn{\mathbb{R}^{n\times n}}%

\global\long\def\psd{\mathbb{S}^{d}_{+}}%
\global\long\def\pd{\mathbb{S}^{d}_{++}}%

\global\long\def\defeq{\stackrel{\mathrm{{\scriptscriptstyle def}}}{=}}%

\global\long\def\veps{\varepsilon}%
\global\long\def\lda{\lambda}%
\global\long\def\vphi{\varphi}%
\global\long\def\K{\mathcal{K}}%

\global\long\def\half{\frac{1}{2}}%
\global\long\def\nhalf{\nicefrac{1}{2}}%
\global\long\def\texthalf{{\textstyle \frac{1}{2}}}%
\global\long\def\ltwo{L^{2}}%

\global\long\def\ind{\mathds{1}}%
\global\long\def\op{\mathsf{op}}%
\global\long\def\ch{\mathsf{Ch}}%
\global\long\def\kls{\mathsf{KLS}}%
\global\long\def\ts{\mathsf{Ts}}%
\global\long\def\hs{\textup{HS}}%
\global\long\def\ls{\textup{LS}}%

\global\long\def\cpi{C_{\mathsf{PI}}}%
\global\long\def\cipi{C_{\mathsf{IPI}}}%
\global\long\def\clsi{C_{\mathsf{LSI}}}%
\global\long\def\cch{C_{\mathsf{Ch}}}%
\global\long\def\clch{C_{\mathsf{logCh}}}%
\global\long\def\cexp{C_{\mathsf{exp}}}%
\global\long\def\cgauss{C_{\mathsf{Gauss}}}%

\global\long\def\chooses#1#2{_{#1}C_{#2}}%

\global\long\def\frob{\on F}%

\global\long\def\vol{\on{vol}}%

\global\long\def\sym{\on{sym}}%

\global\long\def\law{\on{law}}%

\global\long\def\tr{\on{tr}}%

\global\long\def\diag{\on{diag}}%

\global\long\def\diam{\on{diam}}%

\global\long\def\poly{\on{poly}}%

\global\long\def\polylog{\on{polylog}}%

\global\long\def\Diag{\on{Diag}}%

\global\long\def\inter{\on{int}}%

\global\long\def\esssup{\on{ess\,sup}}%

\global\long\def\proj{\on{Proj}}%

\global\long\def\e{\mathrm{e}}%

\global\long\def\id{\mathrm{id}}%

\global\long\def\supp{\on{supp}}%

\global\long\def\spanning{\on{span}}%

\global\long\def\rows{\on{row}}%

\global\long\def\cols{\on{col}}%

\global\long\def\rank{\on{rank}}%

\global\long\def\T{\mathsf{T}}%

\global\long\def\bs#1{\boldsymbol{#1}}%

\global\long\def\eu#1{\EuScript{#1}}%

\global\long\def\mb#1{\mathbf{#1}}%

\global\long\def\mbb#1{\mathbb{#1}}%

\global\long\def\mc#1{\mathcal{#1}}%

\global\long\def\mf#1{\mathfrak{#1}}%

\global\long\def\ms#1{\mathscr{#1}}%

\global\long\def\mss#1{\mathsf{#1}}%

\global\long\def\msf#1{\mathsf{#1}}%

\global\long\def\cQ{\msf Q}%
\global\long\def\cV{\msf V}%
\global\long\def\cC{\msf C}%

\global\long\def\textint{{\textstyle \int}}%
\global\long\def\Dd{\mathrm{D}}%
\global\long\def\D{\mathrm{d}}%
\global\long\def\grad{\nabla}%
 
\global\long\def\hess{\nabla^{2}}%
 
\global\long\def\lapl{\triangle}%
 
\global\long\def\deriv#1#2{\frac{\D#1}{\D#2}}%
 
\global\long\def\pderiv#1#2{\frac{\partial#1}{\partial#2}}%
 
\global\long\def\de{\partial}%
\global\long\def\lagrange{\mathcal{L}}%
\global\long\def\Div{\on{div}}%

\global\long\def\Gsn{\mathcal{N}}%
 
\global\long\def\BeP{\textnormal{BeP}}%
 
\global\long\def\Ber{\textnormal{Ber}}%
 
\global\long\def\Bern{\textnormal{Bern}}%
 
\global\long\def\Bet{\textnormal{Beta}}%
 
\global\long\def\Beta{\textnormal{Beta}}%
 
\global\long\def\Bin{\textnormal{Bin}}%
 
\global\long\def\BP{\textnormal{BP}}%
 
\global\long\def\Dir{\textnormal{Dir}}%
 
\global\long\def\DP{\textnormal{DP}}%
 
\global\long\def\Exp{\textnormal{Exp}}%
 
\global\long\def\Gam{\textnormal{Gamma}}%
 
\global\long\def\GEM{\textnormal{GEM}}%
 
\global\long\def\HypGeo{\textnormal{HypGeo}}%
 
\global\long\def\Mult{\textnormal{Mult}}%
 
\global\long\def\NegMult{\textnormal{NegMult}}%
 
\global\long\def\Poi{\textnormal{Poi}}%
 
\global\long\def\Pois{\textnormal{Pois}}%
 
\global\long\def\Unif{\textnormal{Unif}}%

\global\long\def\bpar#1{\bigl(#1\bigr)}%
\global\long\def\Bpar#1{\Bigl(#1\Bigr)}%

\global\long\def\abs#1{|#1|}%
\global\long\def\babs#1{\bigl|#1\bigr|}%
\global\long\def\Babs#1{\Bigl|#1\Bigr|}%

\global\long\def\snorm#1{\|#1\|}%
\global\long\def\bnorm#1{\bigl\Vert#1\bigr\Vert}%
\global\long\def\Bnorm#1{\Bigl\Vert#1\Bigr\Vert}%

\global\long\def\sbrack#1{[#1]}%
\global\long\def\bbrack#1{\bigl[#1\bigr]}%
\global\long\def\Bbrack#1{\Bigl[#1\Bigr]}%

\global\long\def\sbrace#1{\{#1\}}%
\global\long\def\bbrace#1{\bigl\{#1\bigr\}}%
\global\long\def\Bbrace#1{\Bigl\{#1\Bigr\}}%

\global\long\def\Abs#1{\left\lvert #1\right\rvert }%
\global\long\def\Par#1{\left(#1\right)}%
\global\long\def\Brack#1{\left[#1\right]}%
\global\long\def\Brace#1{\left\{  #1\right\}  }%

\global\long\def\inner#1{\langle#1\rangle}%
 
\global\long\def\binner#1#2{\left\langle {#1},{#2}\right\rangle }%

\global\long\def\norm#1{\lVert#1\rVert}%
\global\long\def\onenorm#1{\norm{#1}_{1}}%
\global\long\def\twonorm#1{\norm{#1}_{2}}%
\global\long\def\infnorm#1{\norm{#1}_{\infty}}%
\global\long\def\fronorm#1{\norm{#1}_{\text{F}}}%
\global\long\def\nucnorm#1{\norm{#1}_{*}}%
\global\long\def\staticnorm#1{\|#1\|}%
\global\long\def\statictwonorm#1{\staticnorm{#1}_{2}}%

\global\long\def\mmid{\mathbin{\|}}%

\global\long\def\otilde#1{\widetilde{O}(#1)}%
\global\long\def\wtilde{\widetilde{W}}%
\global\long\def\wt#1{\widetilde{#1}}%

\global\long\def\KL{\msf{KL}}%
\global\long\def\dtv{d_{\textrm{\textup{TV}}}}%
\global\long\def\FI{\msf{FI}}%
\global\long\def\tv{\msf{TV}}%
\global\long\def\TV{\msf{TV}}%

\global\long\def\cov{\on{cov}}%
\global\long\def\var{\on{Var}}%
\global\long\def\ent{\on{Ent}}%

\global\long\def\cred#1{\textcolor{red}{#1}}%
\global\long\def\cblue#1{\textcolor{blue}{#1}}%
\global\long\def\cgreen#1{\textcolor{green}{#1}}%
\global\long\def\ccyan#1{\textcolor{cyan}{#1}}%
\global\long\def\yk#1{\textcolor{red}{\textsf{[YK: #1]}}}%
\global\long\def\yb#1{\textcolor{blue}{\textsf{[yb: #1]}}}%

\global\long\def\iff{\Leftrightarrow}%
 
\global\long\def\textfrac#1#2{{\textstyle \frac{#1}{#2}}}%

\global\long\def\Expo{\textnormal{Expo}}%
\global\long\def\Tr{\on{Tr}}%
\global\long\def\onu{\bar{\nu}}%
\global\long\def\intk{\inter\K}%
\global\long\def\ncal{\mathcal{N}}%
\global\long\def\svec{\operatorname{svec}}%
\global\long\def\tvec{\operatorname{vec}}%
\global\long\def\del{\partial}%
\global\long\def\ovec{\operatorname{ovec}}%

\global\long\def\Sph{\mathbb{S}}%
\global\long\def\gap{\operatorname{gap}}%
\global\long\def\Id{\mathrm{Id}}%
\global\long\def\HR{\mathrm{HR}}%
\global\long\def\CHAR{\mathrm{CHAR}}%
\global\long\def\logp{\log_{+}}%

%% file: _macros.tex
%--------------------------------------------------------------------------------------------------------------------------------
% Environment shortcuts
%--------------------------------------------------------------------------------------------------------------------------------
\def\balign#1\ealign{\begin{align}#1\end{align}}
\def\baligns#1\ealigns{\begin{align*}#1\end{align*}}
\def\balignat#1\ealign{\begin{alignat}#1\end{alignat}}
\def\balignats#1\ealigns{\begin{alignat*}#1\end{alignat*}}
\def\bitemize#1\eitemize{\begin{itemize}#1\end{itemize}}
\def\benumerate#1\eenumerate{\begin{enumerate}#1\end{enumerate}}

% Align environments that use textstyle instead of displaystyle
\newenvironment{talign*}
 {\let\displaystyle\textstyle\csname align*\endcsname}
 {\endalign}
\newenvironment{talign}
 {\let\displaystyle\textstyle\csname align\endcsname}
 {\endalign}

\def\balignst#1\ealignst{\begin{talign*}#1\end{talign*}}
\def\balignt#1\ealignt{\begin{talign}#1\end{talign}}
%---------------------------------------------------

%--------------------------------------------------------------------------------------------------------------------------------
% Redefine left and right to remove initial and trailing space
%--------------------------------------------------------------------------------------------------------------------------------
\let\originalleft\left
\let\originalright\right
\renewcommand{\left}{\mathopen{}\mathclose\bgroup\originalleft}
\renewcommand{\right}{\aftergroup\egroup\originalright}

%--------------------------------------------------------------------------------------------------------------------------------
% Words with special symbols
%--------------------------------------------------------------------------------------------------------------------------------
\def\Gronwall{Gr\"onwall\xspace}
\def\Holder{H\"older\xspace}
\def\Ito{It\^o\xspace}
\def\Nystrom{Nystr\"om\xspace}
\def\Schatten{Sch\"atten\xspace}
\def\Matern{Mat\'ern\xspace}

%--------------------------------------------------------------------------------------------------------------------------------
% Smaller citations
%--------------------------------------------------------------------------------------------------------------------------------
\def\tinycitep*#1{{\tiny\citep*{#1}}}
\def\tinycitealt*#1{{\tiny\citealt*{#1}}}
\def\tinycite*#1{{\tiny\cite*{#1}}}
\def\smallcitep*#1{{\scriptsize\citep*{#1}}}
\def\smallcitealt*#1{{\scriptsize\citealt*{#1}}}
\def\smallcite*#1{{\scriptsize\cite*{#1}}}

%--------------------------------------------------------------------------------------------------------------------------------
% Colors
%--------------------------------------------------------------------------------------------------------------------------------
\def\blue#1{\textcolor{blue}{{#1}}}
\def\green#1{\textcolor{green}{{#1}}}
\def\orange#1{\textcolor{orange}{{#1}}}
\def\purple#1{\textcolor{purple}{{#1}}}
\def\red#1{\textcolor{red}{{#1}}}
\def\teal#1{\textcolor{teal}{{#1}}}

%--------------------------------------------------------------------------------------------------------------------------------
% Font styles
%--------------------------------------------------------------------------------------------------------------------------------
\def\mbi#1{\boldsymbol{#1}} % Bold and italic (math bold italic)
\def\mbf#1{\mathbf{#1}}
\def\mrm#1{\mathrm{#1}}
\def\tbf#1{\textbf{#1}}
\def\tsc#1{\textsc{#1}}

%--------------------------------------------------------------------------------------------------------------------------------
% Bold and italic variables
%--------------------------------------------------------------------------------------------------------------------------------
\def\mbiA{\mbi{A}}
\def\mbiB{\mbi{B}}
\def\mbiC{\mbi{C}}
\def\mbiDelta{\mbi{\Delta}}
\def\mbif{\mbi{f}}
\def\mbiF{\mbi{F}}
\def\mbih{\mbi{g}}
\def\mbiG{\mbi{G}}
\def\mbih{\mbi{h}}
\def\mbiH{\mbi{H}}
\def\mbiI{\mbi{I}}
\def\mbim{\mbi{m}}
\def\mbiP{\mbi{P}}
\def\mbiQ{\mbi{Q}}
\def\mbiR{\mbi{R}}
\def\mbiv{\mbi{v}}
\def\mbiV{\mbi{V}}
\def\mbiW{\mbi{W}}
\def\mbiX{\mbi{X}}
\def\mbiY{\mbi{Y}}
\def\mbiZ{\mbi{Z}}

%--------------------------------------------------------------------------------------------------------------------------------
% Textstyle vs. displaystyle
%--------------------------------------------------------------------------------------------------------------------------------
\def\textsum{{\textstyle\sum}} % Sum in textstyle form
\def\textprod{{\textstyle\prod}} % Prod in textstyle form
\def\textbigcap{{\textstyle\bigcap}} % Bigcap in textstyle form
\def\textbigcup{{\textstyle\bigcup}} % Bigcup in textstyle form

%--------------------------------------------------------------------------------------------------------------------------------
% Mathematical sets
%--------------------------------------------------------------------------------------------------------------------------------
\def\reals{\mathbb{R}} % Real number symbol
\def\integers{\mathbb{Z}} % Integer symbol
\def\rationals{\mathbb{Q}} % Rational numbers
\def\naturals{\mathbb{N}} % Natural numbers
\def\complex{\mathbb{C}} % Complex numbers

\def\what#1{\widehat{#1}}

\def\twovec#1#2{\left[\begin{array}{c}{#1} \\ {#2}\end{array}\right]}
\def\threevec#1#2#3{\left[\begin{array}{c}{#1} \\ {#2} \\ {#3} \end{array}\right]}
\def\nvec#1#2#3{\left[\begin{array}{c}{#1} \\ {#2} \\ \vdots \\ {#3}\end{array}\right]} % An n-vector with three arguments

%--------------------------------------------------------------------------------------------------------------------------------
% Eigenvalues
%--------------------------------------------------------------------------------------------------------------------------------
\def\maxeig#1{\lambda_{\mathrm{max}}\left({#1}\right)}
\def\mineig#1{\lambda_{\mathrm{min}}\left({#1}\right)}

%--------------------------------------------------------------------------------------------------------------------------------
% Operators
%--------------------------------------------------------------------------------------------------------------------------------
\def\Re{\operatorname{Re}} % Real part
\def\indic#1{\mbb{I}\left[{#1}\right]} % Indicator function
\def\logarg#1{\log\left({#1}\right)} % log with argument
\def\polylog{\operatorname{polylog}}
\def\maxarg#1{\max\left({#1}\right)} % max with argument
\def\minarg#1{\min\left({#1}\right)} % min with argument
\def\Earg#1{\E\left[{#1}\right]}
\def\Esub#1{\E_{#1}}
\def\Esubarg#1#2{\E_{#1}\left[{#2}\right]}
\def\bigO#1{\mathcal{O}\left(#1\right)} % big-oh notation
\def\littleO#1{o(#1)} % big-oh notation
\def\P{\mbb{P}} % Probability symbol
\def\Parg#1{\P\left({#1}\right)}
\def\Psubarg#1#2{\P_{#1}\left[{#2}\right]}
\def\Trarg#1{\Tr\left[{#1}\right]} % Trace with argument
\def\trarg#1{\tr\left[{#1}\right]} % trace with argument
\def\Var{\mrm{Var}} % Variance symbol
\def\Vararg#1{\Var\left[{#1}\right]}
\def\Varsubarg#1#2{\Var_{#1}\left[{#2}\right]}
\def\Cov{\mrm{Cov}} % Covariance symbol
\def\Covarg#1{\Cov\left[{#1}\right]}
\def\Covsubarg#1#2{\Cov_{#1}\left[{#2}\right]}
\def\Corr{\mrm{Corr}} % Covariance symbol
\def\Corrarg#1{\Corr\left[{#1}\right]}
\def\Corrsubarg#1#2{\Corr_{#1}\left[{#2}\right]}
\newcommand{\info}[3][{}]{\mathbb{I}_{#1}\left({#2};{#3}\right)} % Information symbol
\newcommand{\staticexp}[1]{\operatorname{exp}(#1)} % An exponential with parens that do not resize with input
\newcommand{\loglihood}[0]{\mathcal{L}} % log likelihood

% Copied from mathrsfs.sty

%--------------------------------------------------------------------------------------------------------------------------------
% Optimization macros
%--------------------------------------------------------------------------------------------------------------------------------
%\providecommand{\argmax}{\mathop\mathrm{arg max}} % Defining math symbols
%\providecommand{\argmin}{\mathop\mathrm{arg min}}
\providecommand{\arccos}{\mathop\mathrm{arccos}}
\providecommand{\dom}{\mathop\mathrm{dom}}
\providecommand{\diag}{\mathop\mathrm{diag}}
\providecommand{\tr}{\mathop\mathrm{tr}}
\providecommand{\card}{\mathop\mathrm{card}}
\providecommand{\sign}{\mathop\mathrm{sign}}
\providecommand{\conv}{\mathop\mathrm{conv}} % Convex hull
\def\rank#1{\mathrm{rank}({#1})}
\def\supp#1{\mathrm{supp}({#1})}

\providecommand{\minimize}{\mathop\mathrm{minimize}}
\providecommand{\maximize}{\mathop\mathrm{maximize}}
\providecommand{\subjectto}{\mathop\mathrm{subject\;to}}

\def\openright#1#2{\left[{#1}, {#2}\right)}

%--------------------------------------------------------------------------------------------------------------------------------
% Proof environments
%--------------------------------------------------------------------------------------------------------------------------------
\ifdefined\nonewproofenvironments\else
% The Theorems are numbered consecutively
% Lemmas are numbered by section, and observations, claims, facts, and 
% assumptions take their numbering. Propositions and definitions have their
% own numbering by section.
\ifdefined\ispres\else
% These conflict with Beamer definitions in pres mode
% \newtheorem{theorem}{Theorem}
% \newtheorem{lemma}[theorem]{Lemma}
% \newtheorem{corollary}[theorem]{Corollary}
% \newtheorem{definition}[theorem]{Definition}
% \newtheorem{fact}[theorem]{Fact}
% \renewenvironment{proof}{\noindent\textbf{Proof.}\hspace*{.3em}}{\qed \vspace{.1in}}
% \newenvironment{proof-sketch}{\noindent\textbf{Proof Sketch}
%   \hspace*{1em}}{\qed\bigskip\\}
% \newenvironment{proof-idea}{\noindent\textbf{Proof Idea}
%   \hspace*{1em}}{\qed\bigskip\\}
% \newenvironment{proof-of-lemma}[1][{}]{\noindent\textbf{Proof of Lemma {#1}}
%   \hspace*{1em}}{\qed\\}
%   \newenvironment{proof-of-proposition}[1][{}]{\noindent\textbf{Proof of Proposition {#1}}
%   \hspace*{1em}}{\qed\\}
% \newenvironment{proof-of-theorem}[1][{}]{\noindent\textbf{Proof of Theorem {#1}}
%   \hspace*{1em}}{\qed\\}
% \newenvironment{proof-attempt}{\noindent\textbf{Proof Attempt}
%   \hspace*{1em}}{\qed\bigskip\\}
% \newenvironment{proofof}[1]{\noindent\textbf{Proof of {#1}}
%   \hspace*{1em}}{\qed\bigskip\\}
 
% \newtheorem*{remark*}{Remark}
% \newenvironment{remark}{\noindent\textbf{Remark.}
%   \hspace*{0em}}{\smallskip}%\bigskip}
% \newenvironment{remarks}{\noindent\textbf{Remarks}
%   \hspace*{1em}}{\smallskip}
% \fi
% \newtheorem{observation}[theorem]{Observation}
% \newtheorem{proposition}[theorem]{Proposition}
% \newtheorem{claim}[theorem]{Claim}
% \newtheorem{assumption}{Assumption}
% \theoremstyle{definition}
% \newtheorem{example}[theorem]{Example}
%\renewcommand{\theassumption}{\Alph{assumption}} % Set counter for assumptions
                                                 % to be alphabetical
\fi
\fi
% Makes equation numbers have (1.1) style
% \numberwithin{equation}{section}
% \numberwithin{equation}{subsection}
\makeatletter
\@addtoreset{equation}{section}
\makeatother
\def\theequation{\thesection.\arabic{equation}}

\newcommand{\cmark}{\ding{51}}

\newcommand{\xmark}{\ding{55}}

%--------------------------------------------------------------------------------------------------------------------------------
% Equation environments
%--------------------------------------------------------------------------------------------------------------------------------
\newcommand{\eq}[1]{\begin{align}#1\end{align}}
\newcommand{\eqn}[1]{\begin{align*}#1\end{align*}}
\renewcommand{\Pr}[1]{\mathbb{P}\left( #1 \right)}
\newcommand{\Ex}[1]{\mathbb{E}\left[#1\right]}
%\newcommand{\var}[1]{\text{Var}\left(#1\right)}
%\newcommand{\ind}[1]{{\mathbbm{1}}_{\{ #1 \}} }
%\newcommand{\abs}[1]{\left|#1\right|}

%--------------------------------------------------------------------------------------------------------------------------------
% Comment environments
%--------------------------------------------------------------------------------------------------------------------------------
\newcommand{\matt}[1]{{\textcolor{Maroon}{[Matt: #1]}}}
\newcommand{\kook}[1]{{\textcolor{blue}{[Kook: #1]}}}
\definecolor{OliveGreen}{rgb}{0,0.6,0}
\newcommand{\sv}[1]{{\textcolor{OliveGreen}{[Santosh: #1]}}}

%% file: main.bib
@inproceedings{LV18logSoblev,
	address = {New York, NY, USA},
	author = {Lee, Yin Tat and Vempala, Santosh S.},
	booktitle = {{S}ymposium on {T}heory of {C}omputing},
	pages = {1122--1129},
	publisher = {ACM},
	series = {STOC 2018},
	title = {Stochastic localization $+$ {S}tieltjes barrier $=$ tight bound for log-{S}obolev},
	year = {2018}}

@article{KV26HAR,
	author = {Kook, Yunbum and Vempala, Santosh S.},
	journal = {arXiv preprint arXiv:2608.16878},
	title = {Spectral gaps of {H}it-and-{R}un and {C}oordinate {H}it-and-{R}un},
	year = {2026}}

@article{CK26thinshell,
	author = {Chen, Yuansi and Klartag, Bo'az},
	journal = {arXiv preprint arXiv:2607.23307},
	title = {Digesting the proof of the sharp thin-shell inequality},
	year = {2026}}

@book{Kato95perturbation,
	address = {Berlin, Heidelberg},
	author = {Kato, Tosio},
	edition = {2},
	publisher = {Springer},
	series = {Classics in Mathematics},
	title = {Perturbation Theory for Linear Operators},
	year = {1995}}

@article{JLLV26reducing,
author = {Jia, He and Laddha, Aditi and Lee, Yin Tat and Vempala, Santosh},
title = {Reducing Isotropy and Volume to KLS: Faster Rounding and Volume Algorithms},
year = {2026},
issue_date = {April 2026},
publisher = {Association for Computing Machinery},
address = {New York, NY, USA},
volume = {73},
number = {2},
issn = {0004-5411},
url = {https://doi.org/10.1145/3795687},
doi = {10.1145/3795687},
journal = {J. ACM},
month = apr,
articleno = {8},
numpages = {21}
}

@article{KVZ26INO,
	author = {Kook, Yunbum and Vempala, Santosh S. and Zhang, Matthew S.},
	doi = {https://doi.org/10.1002/rsa.70061},
	eprint = {https://onlinelibrary.wiley.com/doi/pdf/10.1002/rsa.70061},
	journal = {Random Structures \& Algorithms},
	number = {3},
	pages = {e70061},
	title = {In-and-{O}ut: algorithmic diffusion for sampling convex bodies},
	url = {https://onlinelibrary.wiley.com/doi/abs/10.1002/rsa.70061},
	volume = {68},
	year = {2026}}

@inproceedings{KV25faster,
	author = {Kook, Yunbum and Vempala, Santosh S.},
	booktitle = {Symposium on Foundations of Computer Science},
	doi = {10.1109/FOCS63196.2025.00052},
	pages = {997-1006},
	publisher = {IEEE},
	title = {Faster logconcave sampling from a cold start in high dimension},
	year = {2025}}

@article{Bizeul26logsobolev,
	author = {Bizeul, Pierre},
	doi = {10.1016/j.jfa.2026.111368},
	issn = {0022-1236},
	journal = {Journal of Functional Analysis},
	month = may,
	number = {9},
	pages = {111368},
	publisher = {Elsevier BV},
	title = {On the log-{S}obolev constant of log-concave vectors},
	url = {http://dx.doi.org/10.1016/j.jfa.2026.111368},
	volume = {290},
	year = {2026}}

@article{KL25thin,
	author = {Klartag, Bo'az and Lehec, Joseph},
	journal = {arXiv preprint arXiv:2507.15495},
	title = {Thin-shell bounds via parallel coupling},
	year = {2025}}

@article{KV26zeroLC,
	archiveprefix = {arXiv},
	author = {Kook, Yunbum and Vempala, Santosh S.},
	doi = {10.48550/arXiv.2507.18021},
	eprint = {2507.18021},
	journal = {arXiv preprint arXiv:2507.18021},
	title = {Zeroth-order Logconcave Sampling},
	url = {https://arxiv.org/abs/2507.18021},
	year = {2025}}

@inproceedings{KV25sampling,
	author = {Kook, Yunbum and Vempala, Santosh S.},
	booktitle = {{S}ymposium on {T}heory of {C}omputing},
	doi = {10.1145/3717823.3718202},
	isbn = {979-8-4007-1510-5},
	mrclass = {68Q87},
	mrnumber = {4928485},
	pages = {924--932},
	publisher = {ACM},
	title = {Sampling and integration of logconcave functions by algorithmic diffusion},
	url = {https://doi.org/10.1145/3717823.3718202},
	year = {2025}}

@book{chewi25log,
	author = {Chewi, Sinho},
	publisher = {Book draft available at \url{https://chewisinho.github.io}},
	title = {Log-concave sampling},
	year = {2025}}

@inproceedings{Mironov17renyi,
	author = {Mironov, Ilya},
	booktitle = {Computer Security Foundations Symposium},
	doi = {10.1109/CSF.2017.11},
	issn = {2374-8303},
	pages = {263-275},
	publisher = {IEEE},
	title = {R\'{e}nyi differential privacy},
	url = {https://doi.ieeecomputersociety.org/10.1109/CSF.2017.11},
	year = {2017}}

@incollection{BK03central,
	author = {Bobkov, Sergey G. and Koldobsky, Alexander},
	booktitle = {Geometric Aspects of Functional Analysis},
	doi = {10.1007/978-3-540-36428-3\_5},
	mrclass = {60F05 (46B09 52A38)},
	mrnumber = {2083387},
	mrreviewer = {Sompong Dhompongsa},
	pages = {44--52},
	publisher = {Springer, Berlin},
	series = {Lecture Notes in Math.},
	title = {On the central limit property of convex bodies},
	url = {https://doi.org/10.1007/978-3-540-36428-3_5},
	volume = {1807},
	year = {2003}}

@article{ABP03central,
	author = {Anttila, Milla and Ball, Keith and Perissinaki, Irini},
	doi = {10.1090/S0002-9947-03-03085-X},
	fjournal = {Transactions of the American Mathematical Society},
	issn = {0002-9947},
	journal = {Transactions of the American Mathematical Society},
	mrclass = {52A22 (60F05)},
	mrnumber = {1997580},
	mrreviewer = {Erich Haeusler},
	number = {12},
	pages = {4723--4735},
	title = {The central limit problem for convex bodies},
	url = {https://doi.org/10.1090/S0002-9947-03-03085-X},
	volume = {355},
	year = {2003}}

@inproceedings{LST21structured,
	author = {Lee, Yin Tat and Shen, Ruoqi and Tian, Kevin},
	booktitle = {Conference on Learning Theory},
	pages = {2993--3050},
	publisher = {PMLR},
	title = {Structured logconcave sampling with a restricted {G}aussian oracle},
	url = {https://proceedings.mlr.press/v134/lee21a.html},
	volume = {134},
	year = {2021}}

@inproceedings{KZ25Renyi,
	author = {Kook, Yunbum and Zhang, Matthew S.},
	booktitle = {Symposium on Discrete Algorithms},
	doi = {10.1137/1.9781611978322.181},
	isbn = {9781611978322},
	pages = {5278--5306},
	publisher = {SIAM},
	title = {R\'{e}nyi-infinity constrained sampling with $d^3$ membership queries},
	url = {http://dx.doi.org/10.1137/1.9781611978322.181},
	year = {2025}}

@inproceedings{KVZ24INO,
	author = {Yunbum Kook and Santosh S. Vempala and Matthew S. Zhang},
	bibsource = {dblp computer science bibliography, https://dblp.org},
	booktitle = {Advances in Neural Information Processing Systems},
	pages = {108354--108388},
	publisher = {Curran Associates, Inc.},
	title = {In-and-{O}ut: algorithmic diffusion for sampling convex bodies},
	url = {http://papers.nips.cc/paper\_files/paper/2024/hash/c4006ff54a7bbda74c09bad6f7586f5b-Abstract-Conference.html},
	volume = {37},
	year = {2024}}

@inproceedings{LV06fast,
	author = {Lov\'{a}sz, L\'{a}szl\'{o} and Vempala, Santosh S.},
	booktitle = {Symposium on Foundations of Computer Science},
	doi = {10.1109/FOCS.2006.28},
	pages = {57-68},
	publisher = {IEEE},
	title = {Fast algorithms for logconcave functions: sampling, rounding, integration and optimization},
	year = {2006}}

@article{LV06hit,
	author = {Lov\'{a}sz, L\'{a}szl\'{o} and Vempala, Santosh S.},
	doi = {10.1137/s009753970544727x},
	issn = {1095-7111},
	journal = {SIAM Journal on Computing},
	number = {4},
	pages = {985--1005},
	publisher = {Society for Industrial and Applied Mathematics (SIAM)},
	title = {Hit-and-Run from a corner},
	url = {http://dx.doi.org/10.1137/S009753970544727X},
	volume = {35},
	year = {2006}}

@article{vH14renyi,
	author = {van Erven, Tim and Harremo{\"e}s, Peter},
	doi = {10.1109/TIT.2014.2320500},
	fjournal = {Institute of Electrical and Electronics Engineers. Transactions on Information Theory},
	issn = {0018-9448},
	journal = {IEEE Transactions on Information Theory},
	mrclass = {94A17},
	mrnumber = {3225930},
	mrreviewer = {S. M. Sunoj},
	number = {7},
	pages = {3797--3820},
	title = {R\'{e}nyi divergence and {K}ullback-{L}eibler divergence},
	url = {https://doi.org/10.1109/TIT.2014.2320500},
	volume = {60},
	year = {2014}}

@article{LV06simulated,
	author = {Lov\'{a}sz, L\'{a}szl\'{o} and Vempala, Santosh S.},
	doi = {10.1016/j.jcss.2005.08.004},
	fjournal = {Journal of Computer and System Sciences},
	issn = {0022-0000},
	journal = {Journal of Computer and System Sciences},
	mrclass = {68U05 (52A20 60C05 68W20)},
	mrnumber = {2205290},
	mrreviewer = {Carla Peri},
	number = {2},
	pages = {392--417},
	title = {Simulated annealing in convex bodies and an {$O^*(n^4)$} volume algorithm},
	url = {https://doi.org/10.1016/j.jcss.2005.08.004},
	volume = {72},
	year = {2006}}

@article{LS93random,
	author = {Lov\'{a}sz, L\'{a}szl\'{o} and Simonovits, Mikl\'{o}s},
	doi = {10.1002/rsa.3240040402},
	fjournal = {Random Structures \& Algorithms},
	issn = {1042-9832},
	journal = {Random Structures \& Algorithms},
	mrclass = {90C27 (52B55 90C15)},
	mrnumber = {1238906},
	mrreviewer = {Gerard Sierksma},
	number = {4},
	pages = {359--412},
	title = {Random walks in a convex body and an improved volume algorithm},
	url = {https://doi.org/10.1002/rsa.3240040402},
	volume = {4},
	year = {1993}}

@inproceedings{CV15bypass,
	author = {Cousins, Ben and Vempala, Santosh S.},
	booktitle = {{S}ymposium on {T}heory of {C}omputing},
	mrclass = {68W20 (52A38 60B10)},
	mrnumber = {3388233},
	pages = {539--548},
	publisher = {ACM},
	title = {Bypassing {KLS}: {G}aussian cooling and an {$O^*(n^3)$} volume algorithm},
	url = {https://mathscinet.ams.org/mathscinet-getitem?mr=3388233},
	year = {2015}}

@article{CV18Gaussian,
	author = {Cousins, Ben and Vempala, Santosh S.},
	doi = {10.1137/15M1054250},
	fjournal = {SIAM Journal on Computing},
	issn = {0097-5397},
	journal = {SIAM Journal on Computing},
	mrclass = {68W20 (52A38 60D05 65C40)},
	mrnumber = {3818340},
	number = {3},
	pages = {1237--1273},
	publisher = {Society for Industrial and Applied Mathematics (SIAM)},
	title = {Gaussian cooling and {$O^*(n^3)$} algorithms for volume and {G}aussian volume},
	url = {https://doi.org/10.1137/15M1054250},
	volume = {47},
	year = {2018}}

@article{LV24eldan,
	author = {Lee, Yin Tat and Vempala, Santosh S.},
	doi = {10.4007/annals.2024.199.3.2},
	fjournal = {Annals of Mathematics. Second Series},
	issn = {0003-486X},
	journal = {Annals of Mathematics},
	mrclass = {52A23 (60E15)},
	mrnumber = {4740210},
	number = {3},
	pages = {1043--1092},
	title = {Eldan's stochastic localization and the {KLS} conjecture: isoperimetry, concentration and mixing},
	url = {https://doi.org/10.4007/annals.2024.199.3.2},
	volume = {199},
	year = {2024}}

@article{KLS97random,
	author = {Kannan, Ravi and Lov\'{a}sz, L\'{a}szl\'{o} and Simonovits, Mikl\'{o}s},
	doi = {10.1002/(SICI)1098-2418(199708)11:1<1::AID-RSA1>3.0.CO;2-X},
	fjournal = {Random Structures \& Algorithms},
	issn = {1042-9832},
	journal = {Random Structures \& Algorithms},
	mrclass = {68Q25 (52A20 52A38 60J15 60J20)},
	mrnumber = {1608200},
	mrreviewer = {Mark R. Jerrum},
	number = {1},
	pages = {1--50},
	title = {Random walks and an {$O^*(n^5)$} volume algorithm for convex bodies},
	url = {https://doi.org/10.1002/(SICI)1098-2418(199708)11:1<1::AID-RSA1>3.0.CO;2-X},
	volume = {11},
	year = {1997}}

@article{DFK91random,
	author = {Dyer, Martin and Frieze, Alan and Kannan, Ravi},
	doi = {10.1145/102782.102783},
	fjournal = {Journal of the ACM},
	issn = {0004-5411},
	journal = {Journal of the ACM},
	mrclass = {68U05 (52B55 68Q25)},
	mrnumber = {1095916},
	number = {1},
	pages = {1--17},
	title = {A random polynomial-time algorithm for approximating the volume of convex bodies},
	url = {https://doi.org/10.1145/102782.102783},
	volume = {38},
	year = {1991}}

@article{Klartag23log,
	author = {Klartag, Bo'az},
	doi = {10.15781/jsjy-0b06},
	eprint = {2303.14938},
	fjournal = {Ars Inveniendi Analytica},
	journal = {Ars Inveniendi Analytica},
	mrclass = {52A40 (58J65)},
	mrnumber = {4603941},
	mrreviewer = {Ge Xiong},
	note = {Paper No. 4, 17 pp.},
	title = {Logarithmic bounds for isoperimetry and slices of convex sets},
	url = {https://doi.org/10.15781/jsjy-0b06},
	year = {2023}}

@article{KLS95isop,
	author = {Kannan, Ravi and Lov\'{a}sz, L\'{a}szl\'{o} and Simonovits, Mikl\'{o}s},
	doi = {10.1007/BF02574061},
	fjournal = {Discrete \& Computational Geometry. An International Journal of Mathematics and Computer Science},
	issn = {0179-5376},
	journal = {Discrete \& Computational Geometry},
	mrclass = {52A40 (52A38 68Q20)},
	mrnumber = {1318794},
	mrreviewer = {Carla Peri},
	number = {3-4},
	pages = {541--559},
	title = {Isoperimetric problems for convex bodies and a localization lemma},
	url = {https://doi.org/10.1007/BF02574061},
	volume = {13},
	year = {1995}}

@book{BGL14analysis,
	author = {Bakry, Dominique and Gentil, Ivan and Ledoux, Michel},
	doi = {10.1007/978-3-319-00227-9},
	isbn = {978-3-319-00226-2; 978-3-319-00227-9},
	mrclass = {60J25 (58J65 60J35 60J60)},
	mrnumber = {3155209},
	mrreviewer = {Ming Liao},
	pages = {xx+552},
	publisher = {Springer, Cham},
	title = {Analysis and geometry of {M}arkov diffusion operators},
	url = {https://doi.org/10.1007/978-3-319-00227-9},
	volume = {348},
	year = {2014}}

@incollection{FMW16varentropy,
	author = {Fradelizi, Matthieu and Madiman, Mokshay and Wang, Liyao},
	booktitle = {High Dimensional Probability VII},
	doi = {10.1007/978-3-319-40519-3_3},
	eprint = {1508.04093},
	pages = {45--60},
	publisher = {Birkh{\"a}user, Cham},
	series = {Progress in Probability},
	title = {Optimal Concentration of Information Content for Log-Concave Densities},
	url = {https://arxiv.org/abs/1508.04093},
	volume = {71},
	year = {2016}}

@article{CW01distributional,
	author = {Carbery, Anthony and Wright, James},
	doi = {10.4310/MRL.2001.v8.n3.a1},
	journal = {Mathematical Research Letters},
	number = {3},
	pages = {233--248},
	title = {Distributional and {$L^q$} Norm Inequalities for Polynomials over Convex Bodies in {$\mathbb{R}^n$}},
	volume = {8},
	year = {2001}}

@article{KV26unified,
	archiveprefix = {arXiv},
	author = {Kook, Yunbum and Vempala, Santosh S.},
	doi = {10.48550/arXiv.2606.12694},
	eprint = {2606.12694},
	journal = {arXiv preprint arXiv:2606.12694},
	title = {A Unified Complexity Bound for Logconcave Sampling},
	url = {https://arxiv.org/abs/2606.12694},
	year = {2026}}
